\colorlet{MyRed}{FireBrick!50!Crimson}
\colorlet{MyBlue}{DodgerBlue!75!black}
\colorlet{MyGreen}{DarkGreen!85!black}
\colorlet{MyViolet}{DarkMagenta}
\colorlet{MyLightBlue}{DodgerBlue!20}
\colorlet{MyLightGreen}{MyGreen!20}
\colorlet{PrimalColor}{MyBlue}
\colorlet{PrimalFill}{MyLightBlue}
\colorlet{DualColor}{MyRed}
\colorlet{RevColor}{MyRed}
\colorlet{LinkColor}{MediumBlue}
\newcommand{\debug}[1]{#1} 
\theoremstyle{plain}
\newtheorem{theorem}{Theorem}
\newtheorem{proposition}{Proposition}
\newtheorem{lemma}{Lemma}
\newtheorem{corollary}{Corollary}
\newtheorem{claim}{Claim}
\newtheorem{conjecture}{Conjecture}
\theoremstyle{definition}
\newtheorem{definition}{Definition}
\newtheorem{remark}{Remark}
\newtheorem{condition}{Condition}
\newcommand{\newmacro}[2]{\newcommand{#1}{\debug{#2}}}
\newcommand{\newop}[2]{\DeclareMathOperator{#1}{\debug{#2}}}
\DeclarePairedDelimiter{\braces}{\{}{\}}
\DeclarePairedDelimiter{\bracks}{[}{]}
\DeclarePairedDelimiter{\parens}{(}{)}
\DeclarePairedDelimiter{\abs}{\lvert}{\rvert}
\DeclarePairedDelimiter{\inner}{\langle}{\rangle}
\DeclarePairedDelimiter{\norm}{\lVert}{\rVert}
\DeclarePairedDelimiterXPP{\dnorm}[1]{}{\lVert}{\rVert}{_{\ast}}{#1}
\DeclarePairedDelimiterXPP{\fnorm}[1]{}{\lVert}{\rVert}{_F}{#1}
\DeclarePairedDelimiterXPP{\onenorm}[1]{}{\lVert}{\rVert}{_{1}}{#1}
\DeclarePairedDelimiterXPP{\twonorm}[1]{}{\lVert}{\rVert}{_{2}}{#1}
\DeclarePairedDelimiterXPP{\supnorm}[1]{}{\lVert}{\rVert}{_{\infty}}{#1}
\DeclarePairedDelimiterXPP{\opnorm}[1]{}{\lVert}{\rVert}{_{\operatorname{op}}}{#1}
\DeclareMathOperator*{\argmax}{arg\,max}
\DeclareMathOperator*{\argmin}{arg\,min}
\DeclareMathOperator{\rank}{rank}
\DeclareMathOperator{\range}{range}
\DeclareMathOperator{\bigoh}{\mathcal{O}}		
\newop{\val}{val}
\newmacro{\hmat}{\mathcal{H}}
\DeclareMathOperator{\dist}{dist}		
\DeclarePairedDelimiterX{\setdef}[2]{\{}{\}}{#1:#2}	
\newmacro{\eye}{\mathbb{I}}	
\newmacro{\payoff}{f}
\newmacro{\gradient}{F}
\DeclareMathOperator{\grad}{\nabla}
\newcommand{\defeq}{\coloneqq}
\renewcommand{\a}{\debug{\alpha}}
\renewcommand{\b}{\debug{\beta}}
\newop{\Opt}{Opt}		
\newmacro{\aopt}{\astate^{\ast}}
\newmacro{\bopt}{\bstate^{\ast}}
\newmacro{\jointstar}{\cond{\joint}}
\newmacro{\aalt}{\astate{'}}
\newmacro{\balt}{\bstate{'}}
\newop{\reg}{h}
\newop{\maximizer}{\tau}
\newmacro{\smooth}{\mu}
\newmacro{\spmsparam}{s}
\newmacro{\gradpayoff}{F}
\newmacro{\optim}{\nu}
\newmacro{\constraintset}{\mathcal C}
\newmacro{\constraintf}{\constraintset_{f}}
\newmacro{\constrainto}{\constraintset_{o}}
\newmacro{\regmax}{D}
\newmacro{\dd}{\:d}
\newcommand{\wilde}{\widetilde}
\newmacro{\aspace}{\mathcal A}
\newmacro{\bspace}{\mathcal B}
\newop{\superop}{\Xi}
\newop{\superopalt}{\Gamma}
\newmacro{\astate}{\alpha}
\newmacro{\bstate}{\beta}
\newmacro{\adim}{n}
\newmacro{\bdim}{m}
\newmacro{\joint}{\Psi}
\newmacro{\jointalt}{\joint'}
\newmacro{\jointbar}{\overline{\joint}}
\newmacro{\jointtilde}{\widetilde{\joint}}
\newmacro{\jointsp}{\mathcal Z}
\newcommand{\tr}[1]{\operatorname{Tr}\bracks*{#1}}
\newcommand{\adj}[1]{#1^{\dagger}}
\newcommand{\sol}[1]{#1^{\ast}}
\newcommand{\cond}[1]{#1^{\ast}}
\newop{\proj}{\Pi}
\newcommand{\projz}{\Pi_{\cond{\jointsp}}}
\newmacro{\diam}{D}
\newmacro{\gap}{\mathcal{G}}
\newop{\primal}{U}
\newop{\dual}{L}
\newcommand{\qsmoothing}{\texttt{q-Smoothing}\xspace}
\newcommand{\itersmooth}{\texttt{IterSmooth}\xspace}
\newcommand{\ogda}{\texttt{OGDA}\xspace}
\newcommand{\ommwu}{\texttt{OMMWU}\xspace}
\newcommand{\oftrl}{\texttt{OFTRL}\xspace}
\newcommand{\mmwu}{\texttt{MMWU}\xspace}
\newcommand{\gameval}[0]{\val(\superop)}
\DeclarePairedDelimiter{\ket}{\lvert}{\rangle}		
\DeclarePairedDelimiter{\bra}{\langle}{\rvert}		
\DeclarePairedDelimiter{\setof}{\{}{\}}     
\DeclarePairedDelimiterXPP{\exclude}[1]{\mathopen{}\setminus}{\{}{\}}{}{#1}
\newcommand{\from}{\colon}
\newmacro{\acone}{h} 
\newmacro{\abar}{\bar{\a}}
\newmacro{\bbar}{\bar{\b}}
\newcommand{\payoffwhole}{\tr{\astate \superop(\bstate)}}
\newcommand{\ainspace}[0]{\astate \in \aspace}
\newcommand{\binspace}[0]{\bstate \in \bspace}
\newcommand{\jointwhole}[0]{\astate, \bstate}
\newcommand{\jointopt}[0]{\sol{\joint}}
\newmacro{\constant}{C'}
\newcommand{\aspaceopt}[0]{\cond{\aspace}}
\newcommand{\bspaceopt}[0]{\cond{\bspace}}
\newmacro{\feasi}{\mu}
\newmacro{\feasialt}{\mu'}
\newmacro{\ncone}{\mathcal{N}_{\aopt}}
\newmacro{\mcone}{\mathcal{M}_{\aopt}}
\newmacro{\pcone}{\mathcal{P}}
\newmacro{\feasicone}{\mathcal{K}_{f}}
\newop{\cone}{cone}
\newmacro{\conespace}{\mathcal{S}}
\newmacro{\measure}{p}
\newcommand{\ddp}{\dd \measure}
\newop{\coefmap}{\varphi}
\newop{\extpt}{ext}
\newmacro{\const}{c}        
\newmacro{\Const}{\rho}     
\newmacro{\coefalt}{\mu}        
\newmacro{\param}{\theta}       
\newmacro{\params}{\Theta}      
\newmacro{\pexp}{p}     
\newmacro{\qexp}{q}     
\newmacro{\rexp}{r}     
\newmacro{\radius}{r}
\newmacro{\mat}{A}		
\newmacro{\matalt}{B}		
\newmacro{\uvec}{u}
\DeclareMathOperator{\diag}{diag}	
\newcommand{\mg}{\succ}		
\newcommand{\mgeq}{\succcurlyeq}		
\newcommand{\ml}{\prec}		
\newcommand{\mleq}{\preccurlyeq}		
\newop{\opt}{opt}
\newmacro{\NE}{\mathrm{NE}}
\newmacro{\BR}{\mathrm{BR}}
\newacro{LHS}{left-hand side}
\newacro{RHS}{right-hand side}
\newacro{iid}[i.i.d.]{independent and identically distributed}
\newacro{NE}{Nash equilibrium}
\newacro{DGF}{distance-generating function}
\newacro{KKT}{Karush\textendash Kuhn\textendash Tucker}
\newacro{SFO}{stochastic first-order oracle}
\newacro{PMD}{Preconditioned Mirror Descent}
\newacro{POVM}{positive operator-valued measurement}
\newacro{SP-MS}{Saddle-Point Metric Subregularity}
\begin{document}

\title{Breaking $1/\varepsilon$ Barrier in Quantum Zero-Sum Games: Generalizing Metric Subregularity for Spectraplexes}

\author{Yiheng Su}

\email{su228@wisc.edu}
 \author{Emmanouil-Vasileios Vlatakis-Gkaragkounis}
 \email{vlatakis@wisc.edu}
 \author{Pucheng Xiong}
 \email{pxiong79@wisc.edu}
 \affiliation{University of Wisconsin-Madison}
\maketitle

\begin{abstract}
In recent years, quantum zero-sum games have evolved from toy models into a canonical framework for non-local games, quantum interactive proofs, and quantum machine learning. In this setting, two players iteratively mix quantum states to optimize a bilinear payoff induced by a joint measurement. While classical bilinear games over polyhedral domains admit gradient methods with linear last-iterate convergence rates of $\bigoh(\log(1/\varepsilon))$, analogous guarantees in the quantum setting have remained elusive. Recent work by \cite{Vasconcelos2025quadraticspeedupin} established an $\bigoh(1/\varepsilon)$ average-iterate rate, but it has been conjectured that the geometry of quantum feasible sets (spectraplexes)—curved and with uncountably many extreme points—precludes the linear rates achievable on classical simplices.

We refute this conjecture. We prove that quantum zero-sum games admit algorithms with linear last-iterate convergence to Nash equilibrium, matching the asymptotic rate of the classical polyhedral case. In particular, matrix variants of Nesterov’s iterative smoothing and Optimistic Gradient Descent--Ascent (\ogda) achieve $\bigoh(\log(1/\varepsilon))$ convergence. The main technical ingredient is a new error-bound theory for semidefinite geometry: we establish metric subregularity of the underlying monotone operator over spectrahedra despite the absence of polyhedral structure.

Of independent interest, we develop a global geometric characterization of Nash equilibria in semidefinite games via slack operators, classifying strategic directions as essential, neutral, or non-essential. Under strict complementarity or nondegeneracy, this trichotomy collapses to a sharp classical-style dichotomy.

Finally, we revisit Optimistic Matrix Multiplicative Weights Update 
(\ommwu), the state of the art average-iterate method. We show that it also admits an $\tilde{\mathcal{O}}(1/\varepsilon)$ last-iterate guarantee by extending the Quantal Response Equilibria (QRE) framework to spectraplex games. On the other hand, we prove that any  exponential speedup for \ommwu --$\bigoh(\log(1/\varepsilon))$ last-iterate rate-- must depend on a natural condition number of the game, and hence on the ambient dimensions, revealing a sharp tradeoff between acceleration and conditioning. Empirically, \ommwu is outperformed by \ogda in the regimes we study, and we leave open the problem of characterizing the spectraplex games in which \ommwu may still exhibit similar provable speed-ups.
\end{abstract}

\tableofcontents
\newpage

\section{Introduction}
Min--max optimization lies at the intersection of optimization, game theory, and machine learning. In its most general form, given an objective function $f : \mathbb{R}^n \times \mathbb{R}^m \to \mathbb{R}$, one seeks to solve
\[
(\chi^*, \psi^*) \;=\; \arg\min_{\chi \in \mathcal{X}} \;\arg\max_{\psi \in \Psi} f(\chi,\psi).
\]
The origins of this problem trace back to young John von Neumann, who in December 1926 presented at the Göttingen mathematics seminar his seminal work on \emph{Gesellschaftsspiele}~\citep{v1928theorie}, introducing what are now called \emph{min–max points}.
When the strategy sets $(\mathcal{X},\Psi)$ are simplices $(\Delta^n, \Delta^m)$, these points correspond to stable outcomes where no player can profit from unilateral deviation---formalized later as \emph{Nash equilibria}.

Nearly a century later, attention has shifted to quantum zero-sum games, where strategies are density matrices and payoffs arise from joint measurements. Originating from the study of non-locality and the EPR paradox~\citep{cshs69, bell64}, this framework has evolved nowadays from a foundational tool for non-local games~\citep{greenberger2007going, mermin90, hardy93,  Aravind:01, reichardt_classical_2013,mipre} and interactive proofs~\citep{Gutoski05,mochon2007quantum} into a critical template for quantum robust machine learning, such as training QGANs~\citep{KilloranQGANs}.

While computing Nash equilibria of \emph{general} quantum games is PPAD-complete~\citep{Bostanci2022quantumgametheory}, the situation is more favorable for the \emph{zero-sum} case. Seminal work by \cite{jain2009parallel}, refined by \cite{Vasconcelos2025quadraticspeedupin}, established that min--max points can be approximated to $\varepsilon$-accuracy in $\bigoh(1/\varepsilon)$ iterations. Nevertheless, it has been conjectured~\citep{wei2021linear,Vasconcelos2025quadraticspeedupin} that no gradient-based method can improve upon $\omega(\mathrm{poly}(1/\varepsilon))$ complexity---suggesting a fundamental gap with the classical case, where algorithms with logarithmic dependence $\bigoh(\log(1/\varepsilon))$ are known. Motivated by this, our work revisits the conjecture:
\begin{center}
    \begin{minipage}[c]{0.85\linewidth}
        \centering
        \textbf{\textit{Can gradient-based methods achieve linear convergence,\\[1pt]
i.e., $\bigoh(\log(1/\varepsilon))$ iteration complexity, in quantum zero-sum games?}}
    \end{minipage}%
    \hspace{0.5em} 
    \begin{minipage}[c]{0.05\linewidth}
        \phantomsection\label{main-question}$(\star)$
    \end{minipage}
\end{center}

\subsection{Prior Work and The Geometry of Convergence}
Our work builds on the results of \cite{jain2009parallel} regarding the \emph{Matrix Multiplicative Weights Update} (\texttt{MMWU}). Vasconcelos et al.~\cite{Vasconcelos2025quadraticspeedupin} extended this framework, showing that a class of quantum channel algorithms correspond to a hierarchy of first-order methods  adapted to quantum games, with \emph{Optimistic  \texttt{MMWU}} (\texttt{\ommwu}) achieving the state-of-the-art $\bigoh(1/\varepsilon)$ rate.

To understand the barrier to sharper rates, we must look to the geometry of the problem. A zero-sum game $\min_{x \in \mathcal{X}} \max_{y \in \mathcal{Y}} \langle x, Uy\rangle$ is equivalent to a variational inequality (VI) for a monotone operator $F(z) := (Uy, -U^\top x)$ over the domain $\mathcal{Z} = \mathcal{X} \times \mathcal{Y}$:
\begin{align*}
    \langle F(z^*),\, z - z^* \rangle \;\geq\; 0 \quad \forall\, z \in \mathcal{Z}.
\end{align*}
In the classical case, $\mathcal{Z}$ is a polytope.

By linear programming duality, this theoretically allows for $\bigoh(\log(1/\varepsilon))$ rates via interior-point methods. However, their implementation difficulties and the prohibitive per-iteration cost of such schemes in high dimensions motivates the use of first-order algorithms. While generic gradient methods (e.g., Mirror Prox) typically requires strong convexity for acceleration better than the sublinear $\bigoh(1/\varepsilon)$ rates—recent advances reveal a surprising phenomenon: polyhedral constraints act as implicit regularizers. This geometry suffices to drive optimistic algorithms (e.g., \ogda) to linear last-iterate convergence~\citep{wei2021linear}.

\paragraph{Key Ingredient: Error bounds.}
The structural property enabling linear convergence is the existence of an \emph{error bound}. To formalize this, we distinguish two notions of $\varepsilon$-Nash approximation: \emph{weak} (incentive-based), measured by the duality gap $\gap(z)$---the maximum unilateral payoff improvement---and \emph{strong} (metric), measured by $\dist_{\mathcal{N}\!ash}(z)$---the distance to the Nash equilibrium set.
A zero-sum game satisfies an \emph{error bound} with modulus $\kappa > 0$ if weak approximation implies strong approximation:
{\setlength{\abovedisplayskip}{6pt}
\setlength{\belowdisplayskip}{6pt}
\begin{equation}
    \gap(z) \;\ge\; \kappa \, \mathrm{dist}_{\mathcal{N}\!ash}(z)
    \quad \text{for all } z \in \mathcal{Z}.\tag{Error Bound}
    \label{eq:error-bound}
\end{equation}}This condition is the game-theoretic analogue of the classical \emph{Hoffman bound} for linear systems, which bounds the distance to the feasible set by the magnitude of constraint violations.
It has been known since Tseng~\citep{tseng1995} that affine VIs over \emph{polyhedral} sets satisfy such bounds. The central question 
\hyperref[main-question]{$(\star)$} thus reduces to a geometric one:
\vspace{-0.25em}
\begin{center}
    \begin{minipage}[c]{0.85\linewidth}
        \centering
        \textbf{\small\textit{Do error bounds hold when the polyhedral simplex is replaced by the spectraplex?}}
    \end{minipage}%
    \hspace{0.5em} 
    \begin{minipage}[c]{0.05\linewidth}
        \phantomsection\label{second-question}$(\#)$
    \end{minipage}
    \vspace{-1em}
\end{center}
{\subsection{Technical Challenges: Spectraplex vs. Simplex.}
\label{sec:challenges}}

\paragraph{The Geometry of Error Bounds.}
The central argument for linear convergence in classical games relies on the geometry of the simplex. As shown in \cite{Gilpin2012FirstOrderAlgorithm}, when feasible sets are polytopes, the epigraph of the duality gap function $z \mapsto \gap(z)$ is polyhedral. This implies a \emph{finite vertex separation}: In high level, if the support of a Nash equilibrium lies in the relative interior of a face $F$ of the simplex, then every other vertex $v\notin F$ is \emph{uniformly} suboptimal: there exists a strictly positive ``vertical'' margin $t_{\min}>0$ such that $\gap(v)\ge t_{\min}$ for all such vertices. This \emph{finite vertex separation} yields a Hoffman-type error bound, with modulus $\kappa$ scaling proportionally to $t_{\min}$.

This intuition collapses entirely over the spectraplex $\mathcal{S} = \{\rho \succeq 0 : \tr{\rho}=1\}$.
First, the set of extreme points (rank-1 projectors) is \emph{uncountable}, allowing sequences of extreme points  $\{z_k\}$ of spectraplex --analogue to vertices-- with $\gap(z_k) \downarrow 0$ without ever reaching the equilibrium set. This destroys the finite vertex separation property (as $t_{\min} \to 0$).
Second, the boundary is \emph{curved} (semi-algebraic). Hence, unlike polytopes, the spectraplex admits infinitely many supporting hyperplanes, and
standard Hoffman-type constants for linear systems do not apply.
From convex geometry perspective, local normal cones can vary continuously along rank-deficient strata--analogue to faces and vertices--, which allows
approach directions with arbitrarily small \(\gap\) at non-negligible Euclidean distance.
Finally, a naive workaround---truncating the spectraplex by removing a small ``shell'' around the surface of polytopole, e.g., enforcing $\gap(z)\ge \omega$ for some $\omega>0$---does not help: the resulting modulus $\kappa(\omega)$ typically vanishes as $\omega\to 0$, precluding a uniform error bound.

\emph{Takeaway: $\kappa$ must be \emph{instance-dependent but $\varepsilon$-independent}: since the iteration complexity scales as $\kappa\log(1/\varepsilon)$, any modulus $\kappa=\kappa(\varepsilon)$ would collapse  back to $\mathrm{poly}(1/\varepsilon)$.}

{\vspace{-1em}
\paragraph{The Geometry of Nash Equilibria.}
A second obstruction is that the equilibrium set itself is governed by cone geometry rather than by finitely many coordinates. In a simplex, an inactive pure strategy is simply a zero coordinate; over the spectraplex, inactivity is encoded by kernels of positive semidefinite slack operators, so small perturbations can rotate tight eigenspaces or create \emph{neutral} directions that are payoff-tight but never receive mass at equilibrium. Thus support, tightness, and optimality may fail to coincide unless additional conditions such as strict complementarity or nondegeneracy hold. We make this precise in Appendix~\ref{appendix:strict complementarity}, where slack operators yield an essential/neutral/non-essential trichotomy and identify when it collapses to the classical dichotomy. Crucially, our Euclidean guarantees do not require this collapse: the SP--MS error bound holds globally over the spectraplex, so algorithms such as \ogda\ and iterative smoothing remain robust even in degenerate instances with neutral subspaces.
}

\newpage
{\subsection{Our Contributions}
\label{sec:contributions}}
\noindent Having identified the barriers, we present our main results. We resolve question \hyperref[main-question]{$(\star)$} in the positive, establishing the first linear convergence guarantees for quantum zero-sum games.\\

\paragraph{1. Linear Convergence of Euclidean Methods.}
We first show that matrix variants of standard Euclidean algorithms achieve linear rates, matching the classical polyhedral theory.

\begin{theorem}[Informal]\label{thm:main-euclidean}
    There exists a universal constant $C > 0$ such that the matrix adaptations of \emph{Nesterov’s Iterative Smoothing} and \emph{Optimistic Gradient Descent-Ascent (\ogda)} compute an $\varepsilon$-Nash equilibrium of a quantum zero-sum game in
    \[
        T = \bigoh_d\left( \log(1/\varepsilon) \right)
    \]
    iterations, where the hidden constant depends on the condition number of the game.
\vspace{-0.25em}
\end{theorem}

\vspace{-0.55em}
\paragraph{2. The Core Technique: Metric Subregularity on Spectrahedra.}
\vspace{-0.25em}
The engine behind Theorem~\ref{thm:main-euclidean} is a new geometric result. We prove that the monotone operator $F$ associated with a quantum game satisfies \emph{Metric Subregularity} (or an Error Bound condition) over the spectraplex, despite the curvature and infinite extreme points described in Section~\ref{sec:challenges}.

\vspace{-0.45em}
\begin{lemma}[Quantum Error Bound]
\vspace{-0.25em}
    Let $\mathcal{Z}^\star$ be the set of Nash equilibria. There exists a constant $\mu > 0$ depending on the game geometry such that for all $z \in \mathcal{Z}$:
    \[
        \gap(z) \;\geq\; \mu \cdot \mathrm{dist}(z, \mathcal{Z}^\star).
\vspace{-0.25em}
    \]
\end{lemma}
\noindent This establishes that "weak" approximations (small gap) effectively imply "strong" approximations (proximity to Nash), enabling the linear rate.
{
\paragraph{3. Last iterate convergence with constant step-size.}
While Euclidean \ogda attains linear convergence, its admissible step-size is controlled by the Lipschitz constant in the Frobenius norm, which may scale as $\bigoh(\sqrt{d})$. Over the spectraplex, such dependence can be prohibitive. This motivates revisiting the \emph{Optimistic Matrix Multiplicative Weights Update (\ommwu)}, whose entropic geometry is more naturally adapted to the semidefinite domain.

\begin{theorem}[Informal]\label{thm:main-ommwu}
For quantum zero-sum games with a unique matrix Nash equilibrium, \ommwu with a \emph{dimension-independent} step size $\eta=\Theta(1)$ converges to an $\varepsilon$-Nash equilibrium in $\widetilde \bigoh(1/\varepsilon)$ iterations under von Neumann entropy regularization of the payoff.
\end{theorem}

This result extends to the quantum setting the Quantal Response Equilibrium (QRE) perspective previously developed for classical games. Prior work of \cite{Vasconcelos2025quadraticspeedupin} established an $\bigoh(1/\varepsilon)$ convergence guarantee only for the \emph{average iterate}; by contrast, our result yields the same order (with a logarithmic factor) for the \emph{last iterate}. A central difficulty is geometric: even formulating an appropriate notion of uniqueness, and understanding its implications, is substantially subtler over spectraplexes than over simplices, due to semidefinite degeneracies and the richer curved structure of the feasible set. 
For the equilibrium-structure background behind these issues, including slack operators, tight subspaces, essential/neutral/non-essential directions, and strict complementarity, see Appendix~\ref{appendix:strict complementarity}; see also \cite{ickstadt2025nash}.


\paragraph{4. The price of linear rates for \ommwu.}
The preceding theorem naturally raises the question of whether the entropic geometry of \ommwu can deliver linear convergence without extra regularizations or paying an explicit price for the spectral structure of the game. Our focus here is the \emph{high-precision regime}, where the target accuracy satisfies $\varepsilon \ll d^{-\mathcal{O}(1)}$. In the classical simplex setting, Cai et al.~\cite{cai2025fastlastiterateconvergencelearning} showed that instance-independent linear convergence is impossible. We show that a comparable obstruction persists in the quantum setting: for sufficiently ill-conditioned spectraplex games, the linear rate must deteriorate polynomially with the dimension.

\begin{theorem}[Informal Lower Bound for \ommwu]
For every dimension $d$, there exists a family of quantum zero-sum games with condition number $\delta \asymp d^{-\alpha}$, for some constant $\alpha>0$, such that \ommwu requires
\[
\Omega\!\left(d^{\alpha}\log(1/\varepsilon)\right)
\]
iterations to compute an $\varepsilon$-Nash equilibrium.
\end{theorem}

Thus, \ommwu might achieve linear convergence in principle, but not for free: the constant governing the rate may scale polynomially with the dimension. In this sense, entropic regularization does not remove conditioning barriers; it merely shifts where they appear.
}

\section{Model and Preliminaries}
\label{sec:preliminaries}
We first fix notation and define the quantum zero-sum games studied in the paper. Let $\hmat^d=\{A\in\mathbb C^{2^d\times 2^d}:A=A^\dagger\}$ denote the Hermitian matrices, equipped with the Hilbert--Schmidt inner product $\inner{A,B}=\tr{A^\dagger B}$ and Frobenius norm $\fnorm{A}=\sqrt{\inner{A,A}}$. The positive semidefinite cone is $\hmat^d_+=\{A\in\hmat^d:A\succeq0\}$. Additional notation and the smoothness proof for the smoothed duality gap are collected in Appendix~\ref{appendix:OmittedPreliminaries}.

\subsection{Quantum States, Measurements, and Payoffs}
\label{subsec:model-payoffs}
The players' strategies are density matrices living in their respective spectraplexes. Alice chooses a state $\astate \in \aspace$ and Bob chooses $\bstate \in \bspace$, where:
$\aspace \coloneqq \setdef*{\astate \in \hmat^{\adim}_{+}}{\tr{\astate} = 1}$ and $\bspace \coloneqq \setdef*{\bstate \in \hmat^{\bdim}_{+}}{\tr{\bstate} = 1}$.
The joint action space is the Cartesian product $\jointsp \coloneqq \aspace \times \bspace$, with joint state $\joint = (\astate, \bstate)$. The spectraplexes are convex, compact sets, and the diameter of $\jointsp$ is bounded by 2 (For details, see \citet[Appendix C.1 and Proposition 5]{Vasconcelos2025quadraticspeedupin}).

The referee measures the tensor product of the joint state $\joint^{\otimes} = \astate \otimes \bstate$ using a Positive Operator-Valued Measure (POVM) $\{P_{\omega}\}_{\omega \in \Omega}$, satisfying $\sum_{\omega} P_{\omega} = \eye$. Each outcome $\omega$ is associated with a utility $\mathcal{U}(\omega) \in [-1, 1]$. We define the \emph{payoff observable} $U$ as $U \coloneqq \sum_{\omega \in \Omega} \mathcal{U}(\omega) P_\omega$.
By the Born rule, the probability of outcome $\omega$ is $p_{\omega}(\joint) = \tr{P_\omega (\astate \otimes \bstate)}$. Consequently, Alice's expected payoff $\payoff: \jointsp \to \mathbb{R}$ is:
\begin{align} \label{eqn:payoff-bilinear}
    \payoff(\joint) 
    = \sum_{\omega \in \Omega} \mathcal{U}(\omega)\, p_\omega(\joint) 
    = \tr{U (\astate \otimes \bstate)}.
\end{align}
This confirms that the game is bilinear in the players' density matrices.

Following \citet{Vasconcelos2025quadraticspeedupin}, we express the payoff gradients via linear maps between matrix spaces (``superoperators''). Leveraging Choi-Jamiołkowski isomorphism, a linear map $\superop:\hmat^{\bdim}\to \hmat^{\adim}$ admits a unique adjoint $\adj{\superop}:\hmat^{\adim}\to \hmat^{\bdim}$ defined by the duality relation
\begin{equation}\label{eqn:adjoint-def}
\inner{\astate,\superop(\bstate)} \;=\; \inner{\adj{\superop}(\astate),\bstate},
\qquad \forall\,\astate\in\hmat^{\adim},\ \bstate\in\hmat^{\bdim}.
\end{equation}
In our game, these maps arise from partial traces. Specifically, define the interaction superoperator $\superop:\hmat^{\bdim}\to \hmat^{\adim}$ by
\begin{equation}\label{eqn:superoperator}
\superop(\bstate)\;\coloneqq\;\operatorname{Tr}_{\bspace}\!\big[\,U^\dagger(\eye_{\aspace}\otimes\bstate)\,\big],
\qquad
\adj{\superop}(\astate)\;\coloneqq\;\operatorname{Tr}_{\aspace}\!\big[\,U^\dagger(\astate\otimes\eye_{\bspace})\,\big],
\end{equation}
where $\operatorname{Tr}_{\aspace}$ and $\operatorname{Tr}_{\bspace}$ denote the partial traces over $\aspace$ and $\bspace$, respectively.
By cyclicity of the trace, the expected payoff admits the bilinear representation
\[
\payoff(\astate,\bstate)\;=\;\inner{\astate,\superop(\bstate)}
\;=\;\inner{\adj{\superop}(\astate),\bstate}.
\]
Consequently, identifying the individual payoff gradients as $\gradient_\a(\bstate) \defeq \grad_{\astate^\top}\,\payoff(\astate,\bstate) = \superop(\bstate)$ and $\gradient_\b(\astate) \defeq \grad_{\bstate^\top}\,(-\payoff(\astate,\bstate))  = -\adj{\superop}(\astate)$, we define the \emph{joint gradient vector field} $F:\jointsp\to \hmat^{\adim}\times\hmat^{\bdim}$ simply as:
\begin{equation}\label{eqn:payoff-gradient}
    F(\astate,\bstate)
    \;\coloneqq\; \big( \gradient_\a(\bstate),\, \gradient_\b(\astate) \big)
    \;=\; \big( \superop(\bstate),\, -\adj{\superop}(\astate) \big).
\end{equation}
\vspace{-2em}
\subsection{Idealized von Neumann Measurements }
In the idealized measurement model introduced by von Neumann, a physical measurement is represented by an ``observable,'' namely a Hermitian operator acting on the Hilbert space of the quantum system. Let \(A\) be such an observable. By the spectral theorem, \(A\) admits a decomposition
\[
    A = \sum_n \lambda_n P_n,
\]
where each \(\lambda_n\) is a possible measurement outcome and \(P_n\) is the orthogonal projector onto the corresponding eigenspace. We can write \(P_n = \ket{\lambda_n}\bra{\lambda_n}\), where \(\ket{\lambda_n}\) is a normalized eigenvector satisfying \(A\ket{\lambda_n} = \lambda_n \ket{\lambda_n}\). If the system is in a pure state \(\ket{\psi}\), then measuring \(A\) produces the outcome \(\lambda_n\) with probability $ \Pr[\lambda_n]=
\left|\langle{\lambda_n}|{\psi}\rangle\right|^2.$
Equivalently, if \(\rho = \ket{\psi}\bra{\psi}\) denotes the associated density matrix, then this probability can be written as $ \Pr[\lambda_n] = \tr{\rho P_n}.$ Thus, the expected measurement value is
\[
    \langle A \rangle_\psi
    =
    \sum_n \lambda_n \tr{\rho P_n}
    =
    \tr{\rho A}.
\]
This formula is the bridge between the observable formalism and the density-matrix formalism used throughout the paper: utilities in quantum games are modeled as expected measurement values of payoff observables evaluated on quantum states.\medskip
\begin{figure}[http]
\begin{center}
\includegraphics[width=\textwidth, keepaspectratio]{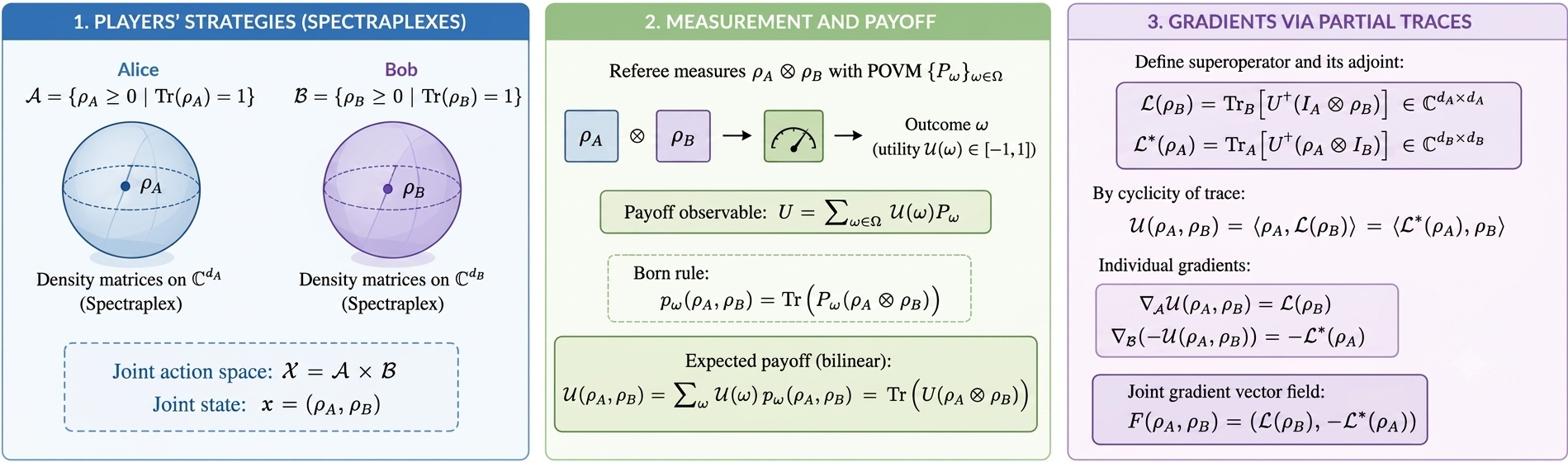}
\end{center}
\end{figure}

\subsection{Equilibrium and Error Measures}
\label{subsec:equilibrium-measures}
We analyze the game through the lens of min--max optimization. Since the payoff $\payoff(\joint)$ is bilinear and the spectraplexes $\aspace, \bspace$ are compact convex sets, von Neumann's Min--Max Theorem guarantees the existence of a value $\val$ such that:
\begin{align} \label{eqn:minmax-thm}
    \val \;\coloneqq\; \min_{\a \in \aspace} \max_{\b \in \bspace} \payoff(\a, \b) \;=\; \max_{\b \in \bspace} \min_{\a \in \aspace} \payoff(\a, \b).
\end{align}
We denote the set of optimal strategies (Nash equilibria) by $\Opt \coloneqq \Opt_\a \times \Opt_\b$, where:
\begin{align}\label{eqn:star-argminmax}
    \Opt_\a \coloneqq \argmin_{\a \in \aspace} \max_{\b \in \bspace} \payoff(\a, \b), \quad 
    \Opt_\b \coloneqq \argmax_{\b \in \bspace} \min_{\a \in \aspace} \payoff(\a, \b).
\end{align}
A joint state $\jointstar = (\aopt, \bopt) \in \Opt$ constitutes a \emph{Nash Equilibrium}, satisfying the saddle-point inequality $\payoff(\aopt, \b) \le \payoff(\aopt, \bopt) \le \payoff(\a, \bopt)$ for all $(\a, \b) \in \jointsp$.

We assess the convergence of our algorithms using two distinct metrics, corresponding to the standard notions of weak and strong approximation in optimization.

\noindent\textbf{$\blacktriangleright$ Weak Approximation (Duality Gap).}
To quantify convergence in terms of objective value (exploitability), we employ the \emph{duality gap function} $\gap: \jointsp \to \mathbb{R}_{\ge 0}$. For any joint state $\joint = (\a, \b)$, the gap measures how much each player can gain by deviating to their best response:
\begin{align} \label{eqn:duality-gap}
    \gap(\joint) 
    \;\defeq\; \max_{\balt \in \bspace} \payoff(\a, \balt) - \min_{\aalt \in \aspace} \payoff(\aalt, \b) 
    \;=\; \max_{(\aalt, \balt) \in \jointsp} \braces*{ \inner{\a}{\superop(\balt)} - \inner{\aalt}{\superop(\b)} }. \tag{Duality Gap}
\end{align}
By construction, $\gap(\joint) \ge 0$, with equality holding if and only if $\joint$ is a Nash equilibrium. A state $\joint$ is termed an \emph{$\varepsilon$-approximate Nash equilibrium} if $\gap(\joint) \le \varepsilon$. This metric controls the ``economic'' error of the game but does not guarantee that the strategies themselves are close to the optimal set.

\noindent\textbf{$\blacktriangleright$ Strong Approximation (Distance to Optimum).}
To quantify the convergence of the iterates themselves, we measure the distance to the set of equilibria $\Opt$. This corresponds to strong approximation and is defined via the Frobenius norm:
\begin{align} \label{eqn:dist-measure}
    \dist(\joint) \coloneqq \min_{\jointstar \in \Opt} \fnorm{\joint - \jointstar}.
\end{align}
This metric is strictly stronger than the duality gap; while $\dist(\joint) \to 0$ implies $\gap(\joint) \to 0$ (by continuity), the converse, however, is not true for general bilinear games over arbitrary domains without additional geometric regularity (e.g., strong convexity/concavity). Our work shows that, over the spectraplex, the converse \emph{does} hold: small duality gap implies proximity to the Nash set via a suitable error bound.

\subsection{Convex-Analytic Conventions}
\label{subsec:convex-analytic-conventions}
We use the standard smoothness and strong-convexity conventions for differentiable functions on normed matrix spaces.
\begin{definition}[$\mu$-strong convexity]
A differentiable function $f:\mathcal X\to\mathbb R$ is $\mu$-strongly convex with respect to a norm $\norm{\cdot}$ if, for all $X,Y\in\mathcal X$,
\[
f(X)\ge f(Y)+\inner{\nabla f(Y),X-Y}+\frac{\mu}{2}\norm{X-Y}^2.
\]
\end{definition}

\begin{definition}[$\beta$-smoothness]
A differentiable function $f:\mathcal X\to\mathbb R$ is $\beta$-smooth with respect to a norm $\norm{\cdot}$ if, for all $X,Y\in\mathcal X$,
\[
f(X)\le f(Y)+\inner{\nabla f(Y),X-Y}+\frac{\beta}{2}\norm{X-Y}^2.
\]
\end{definition}

\subsection{Smoothing and Basic Convexity}
\label{subsec:smoothing-prelim}

For the smoothing algorithm, we use Nesterov's smoothing of the duality gap.
Given a strongly convex regularizer $\reg(\jointalt)=\frac12\fnorm{\jointalt-\jointtilde}^2$, define
\begin{equation}
\label{eqn:smoothed-gap}
\gap_\smooth(\joint)
=\max_{\jointalt\in\jointsp}\mathcal L_\smooth(\joint,\jointalt),
\qquad
\mathcal L_\smooth(\joint,\jointalt)
=\inner{\a,\superop(\balt)}-\inner{\aalt,\superop(\b)}-\smooth\reg(\jointalt).
\end{equation}
The maximizer $\maximizer(\joint)$ is unique, $\nabla\gap_\smooth(\joint)=\gradient(\maximizer(\joint))$, and the gradient is Lipschitz with constant $L_\smooth=\opnorm{\gradient}^2/\smooth$ for the quadratic regularizer. The proof appears in Appendix~\ref{appendix:prop:smoothness}.


\section{Algorithmic Framework} \label{sec:algorithms}
Our goal is to efficiently compute an $\varepsilon$-approximate Nash equilibrium $\joint^*=(\aopt,\bopt)  \in \Opt$, i.e., a joint state satisfying $\gap(\joint^*) \le \varepsilon$. In this section, we present three matrix-adapted first-order methods. The first approach relies on \emph{smoothing} the non-smooth duality gap, while the latter two employ \emph{optimism} to stabilize the saddle-point dynamics directly.

\subsection{Approximation via Nesterov's Iterative Smoothing}
Directly minimizing the duality gap $\gap(\joint)$ is difficult due to its
non-smooth nature. To overcome this, we adopt a continuation strategy based on
Nesterov's smoothing technique~\citep{Nesterov2005}. The core idea is to replace
$\gap$ by a smooth surrogate $\gap_{\smooth}$, solve the smoothed problem using
Nesterov's accelerated method~\citep{Nesterov1983}, and progressively decrease
the smoothing parameter.

At smoothing level $\smooth$, let $L_{\smooth}$ denote the Lipschitz constant of
$\nabla \gap_{\smooth}$, and let
\[
    \proj_{\jointsp}(X)
    \defeq
    \argmin_{\joint\in\jointsp}
    \frac12\fnorm{\joint-X}^2
\]
be the Frobenius projection onto the spectraplex. The inner accelerated loop
uses the update
\begin{align} \label{eqn:smoothing-update}
    \jointbar^{(t)}
    &=
    \frac{2}{t+2}\Phi^{(t)}
    +
    \frac{t}{t+2}\joint^{(t)},
    \qquad
    \joint^{(t+1)}
    =
    \proj_{\jointsp}\!\left(
        \jointbar^{(t)}
        -
        \frac{1}{L_{\smooth}}
        \nabla \gap_{\smooth}(\jointbar^{(t)})
    \right).
\end{align}

To eliminate the smoothing bias, we use a geometric schedule. Starting from the
maximally mixed joint state, the method solves a sequence of smoothed problems
with decreasing parameters $\smooth_{k+1}=\smooth_k/\gamma$ for some
$\gamma>1$, and terminates once the original duality gap satisfies
$\gap(\joint)\le \varepsilon$. The full procedure is given in
Algorithm~\ref{alg:iterative smoothing}. For polyhedral feasible sets, this
type of decaying-regularization scheme was analyzed by
\citet{Gilpin2012FirstOrderAlgorithm}; in
Section~\ref{subsec:q-iterated-convergence}, we establish the corresponding
guarantee for the spectraplex domain.

\begin{algorithm}[H]
\caption{\textsc{IterSmooth} (with inlined \textsc{Q-Smoothing} subroutine)}
\label{alg:iterative smoothing}
\begin{algorithmic}[1]
\Require Accuracy $\varepsilon>0$, reduction $\gamma>1$
\State \textbf{Projection:} $\proj_{\jointsp}(X):=\arg\min_{\joint\in\jointsp}\frac12\fnorm{\joint-X}^2$
\State $\joint_0 \gets \Big(\frac{1}{2^n}\eye_{\aspace},\ \frac{1}{2^m}\eye_{\bspace}\Big)$
\State $\varepsilon_0 \gets \gap(\joint_0)$

\Statex
\State \textit{Inner \textsc{Q-Smoothing} at level $\smooth_i$}
\Function{\textsc{Q-Smoothing}}{$\joint_i,\ \smooth_i,\ \varepsilon_i$}
    \State \textbf{Lipschitz constant:} $L_i \gets \opnorm{\gradient}^2/\smooth_i$
    \State $\joint_i^{(0)} \gets \joint_i$, \quad $\Phi_i^{(0)} \gets \joint_i$
    \For{$k = 0,1,2,\dots$}
        \State $\jointbar_i^{(k)} \gets \frac{2}{k+2}\,\Phi_i^{(k)} + \frac{k}{k+2}\,\joint_i^{(k)}$
        \State $G_i^{(k)} \gets \nabla \gap_{\mu_i}(\jointbar_i^{(k)})$
        \State $\joint_i^{(k+1)} \gets
        \arg\min_{\joint\in\jointsp}
        \braces*{
            \inner{G_i^{(k)},\joint-\jointbar_i^{(k)}} +
            \frac{L_i}{2}\fnorm{\joint-\jointbar_i^{(k)}}^2
        }
        \;=\;
        \proj_{\jointsp}\!\Big(\jointbar_i^{(k)} - \frac{G_i^{(k)}}{L_i}\Big)$
        \If{$\gap(\joint_i^{(k+1)}) \le \varepsilon_i$}
            \State \Return $\joint_i^{(k+1)}$ \Comment{Success at level $\smooth_i$}
        \EndIf
        \State $\Phi_i^{(k+1)} \gets
        \arg\min_{\joint\in\jointsp}
        \braces*{
            \sum_{t=0}^{k}\frac{t+1}{2}
            \inner{G_t,\joint-\jointbar_i^{(t)}}
        }
        + \frac{L_i}{2}\fnorm{\joint-\joint_i}^2$
    \EndFor
\EndFunction

\Statex
\State \textit{Outer loop}
\For{$i = 0,1,2,\dots$}
    \State \textbf{Smoothing parameter:} $\smooth_i \gets \varepsilon_i/(2\regmax)$
    \State $\joint_{i+1} \gets \textsc{Q-Smoothing}(\joint_i,\ \smooth_i,\ \varepsilon_i)$
    \If{$\gap(\joint_{i+1}) \le \varepsilon$}
        \State \Return $\joint_{i+1}$
    \EndIf
    \State $\varepsilon_{i+1} \gets \varepsilon_i/\gamma$ \Comment{Tighten accuracy and continue}
\EndFor
\end{algorithmic}
\end{algorithm}

\subsection{Stabilization via Optimistic Methods}
An alternative to smoothing is to directly stabilize the saddle-point dynamics.
Standard Gradient Descent--Ascent may cycle or diverge even in
bilinear zero-sum games. Optimistic methods address this instability by using a
predictive correction based on past gradient information, yielding
extragradient-like stability while requiring only one fresh evaluation of the
saddle-point operator per iteration
\citep{rakhlin2013online,wei2021linear,applegate2023faster,fercoq2023quadratic}.

We present two variants of this philosophy. \ogda\ uses Euclidean geometry and
updates by Frobenius projection, whereas \ommwu\ uses the matrix-entropy
geometry and updates by matrix exponentiation. The two methods both maintain an auxiliary point
$\hat{\joint}$ and use the current saddle-point operator value to form the next
iterate. However, they differ in the underlying geometry and algorithmic
formalism: \ogda\ is a projected optimistic mirror-descent
method, while \ommwu\ is an optimistic FTRL method with the negative von Neumann
entropy regularizer.

\vspace{1em}
\noindent
\begin{minipage}[t]{0.48\textwidth}
    \centering
    \textbf{1. Optimistic GDA (Euclidean)} \\
    \vspace{0.5em}
    
    \ogda performs a projected update using the Euclidean projection $\proj_{\jointsp}$. The dynamics evolve as:
    \begin{align*}
        \joint_{t+1} &= \proj_{\jointsp}\!\parens*{\hat{\joint}_t - \eta \gradient(\joint_t)}, \\
        \hat{\joint}_{t+1} &= \proj_{\jointsp}\!\parens*{\hat{\joint}_t - \eta \gradient(\joint_{t+1})}.
    \end{align*}
    It requires one gradient oracle call per step (reusing $\gradient(\joint_t)$ from the previous iteration).
    \vspace{0.5em}
    \hrule
    \vspace{0.2em}
    {\footnotesize See Algorithm~\ref{alg:ogda} for details.}
\end{minipage}
\hfill
\vline
\hfill
\begin{minipage}[t]{0.48\textwidth}
    \centering
    \textbf{2. Optimistic  \texttt{MMWU} (Entropic)} \\
    \vspace{0.5em}
    
    \ommwu replaces projection with the Matrix Softmax $\Lambda(\cdot)$, respecting the information geometry. The updates are:
    \begin{align*}
        \joint_{t+1} &= \Lambda\!\parens*{\log \hat{\joint}_t - \eta \gradient(\joint_t)}, \\
        \hat{\joint}_{t+1} &= \Lambda\!\parens*{\log \hat{\joint}_t - \eta \gradient(\joint_{t+1})}.
    \end{align*}
    This multiplicative update naturally preserves positive semi-definiteness and unit trace.
    \vspace{0.5em}
    \hrule
    \vspace{0.2em}
    {\footnotesize See Algorithm~\ref{alg:ommwu} for details.}
\end{minipage}
\vspace{1.5em}

Both routines are initialized at the maximally mixed joint state and terminate
once $\gap(\joint_t)\le \eps$. 
{The step size $\eta$ is fixed according to the
conditions in Sections~\ref{subsec:ogda-convergence}
and~\ref{subsec:ommwu-convergence}. In particular, under the SP--MS condition,
the matrix-adapted \ogda\ enjoys linear last-iterate convergence and computes an
$\eps$-approximate equilibrium in $\bigoh_d(\log(1/\eps))$ iterations.}
The \ommwu\ update is the matrix-entropy analogue of optimistic
Follow-the-Regularized-Leader~\citep{rakhlin2013online,syrgkanis2015fastconvergenceregularizedlearning}
and follows the matrix multiplicative-weights line of work
\citep{tsuda2005matrix,NY83,BECK2003167,Vasconcelos2025quadraticspeedupin}.

{In summary, Iterative Smoothing reduces the problem to a sequence of smooth optimizations, while the Optimistic family stabilizes the oscillation of the original game directly. We analyze the convergence of all three methods in Section~\ref{sec:convergence} and Appendix \ref{appendix:OmittedConvergence}.}
\begin{algorithm}[h]
\caption{\textsc{Optimistic Gradient Descent--Ascent (\ogda)}}
\label{alg:ogda}
\begin{algorithmic}[1]
\Require Accuracy $\varepsilon > 0$, step size $\eta > 0$
\State \textbf{Projection:} $\proj_{\jointsp}(X) := \argmin_{\joint \in \jointsp} \frac{1}{2}\fnorm{\joint - X}^2$
\State \textbf{Gradient:} $\gradient(\joint) = \big(\superop(\b), -\adj{\superop}(\a)\big)$
\State $\joint_0 \gets \Big(\frac{1}{2^n}\eye_{\aspace},\, \frac{1}{2^m}\eye_{\bspace}\Big)$
\State $\hat{\joint}_0 \gets \joint_0$
\For{$t = 0,1,2,\ldots$}
    \State $\joint_{t+1} \gets \proj_{\jointsp}\!\Big(\hat{\joint}_t - \eta\,\gradient(\joint_t)\Big)$
    \If{$\gap(\joint_{t+1}) \le \varepsilon$}
        \State \Return $\joint_{t+1}$
    \EndIf
    \State $\hat{\joint}_{t+1} \gets \proj_{\jointsp}\!\Big(\hat{\joint}_t - \eta\,\gradient(\joint_{t+1})\Big)$
\EndFor
\end{algorithmic}
\end{algorithm}

\begin{algorithm}[h]
\caption{\textsc{Optimistic Matrix Multiplicative Weights Update (\ommwu)}}
\label{alg:ommwu}
\begin{algorithmic}[1]
\Require Accuracy $\varepsilon > 0$, step size $\eta > 0$
\State \textbf{Matrix softmax:} $\Lambda(\joint) := \big(\exp(\a)/\tr{\exp(\a)},\, \exp(\b)/\tr{\exp(\b)}\big)$
\State \textbf{Matrix log:} $\log(\joint) := \big(\log(\a),\, \log(\b)\big)$
\State \textbf{Gradient:} $\gradient(\joint) = \big(\superop(\b), -\adj{\superop}(\a)\big)$
\State $\joint_0 \gets \Big(\frac{1}{2^n}\eye_{\aspace},\, \frac{1}{2^m}\eye_{\bspace}\Big)$
\State $\hat{\joint}_0 \gets \joint_0$
\For{$t = 0,1,2,\ldots$}
    \State $\joint_{t+1} \gets \Lambda\!\Big(\log \hat{\joint}_t - \eta\,\gradient(\joint_t)\Big)$
    \If{$\gap(\joint_{t+1}) \le \varepsilon$}
        \State \Return $\joint_{t+1}$
    \EndIf
    \State $\hat{\joint}_{t+1} \gets \Lambda\!\Big(\log \hat{\joint}_t - \eta\,\gradient(\joint_{t+1})\Big)$
\EndFor
\end{algorithmic}
\end{algorithm}

{
\subsubsection*{Remarks on Algorithmic Choice and Analysis-to include smoothing as comments}

Before navigating the technical proofs, we pause to contextualize our analytical perspective and the rationale behind our algorithmic selection.

\paragraph{Novelty of Analysis.}
From a purely algorithmic standpoint, the matrix-adapted methods we study—variants of extragradient and smoothing—are standard fixtures in optimization. The primary contribution of this work is not their design, but their rigorous analysis within the class of \emph{semidefinite games} \cite{ickstadt2024semidefinite}. Specifically, we establish the critical link between \emph{error bounds} and \emph{linear last-iterate convergence} in the quantum setting. Prior to this work, transferring such guarantees from polyhedral to spectraplex domains was an open challenge, as standard proof techniques relying on coordinate-wise commutativity break down in the non-commutative geometry of quantum states.

\begin{figure}[h]
    \centering
    \includegraphics[width=\linewidth]{./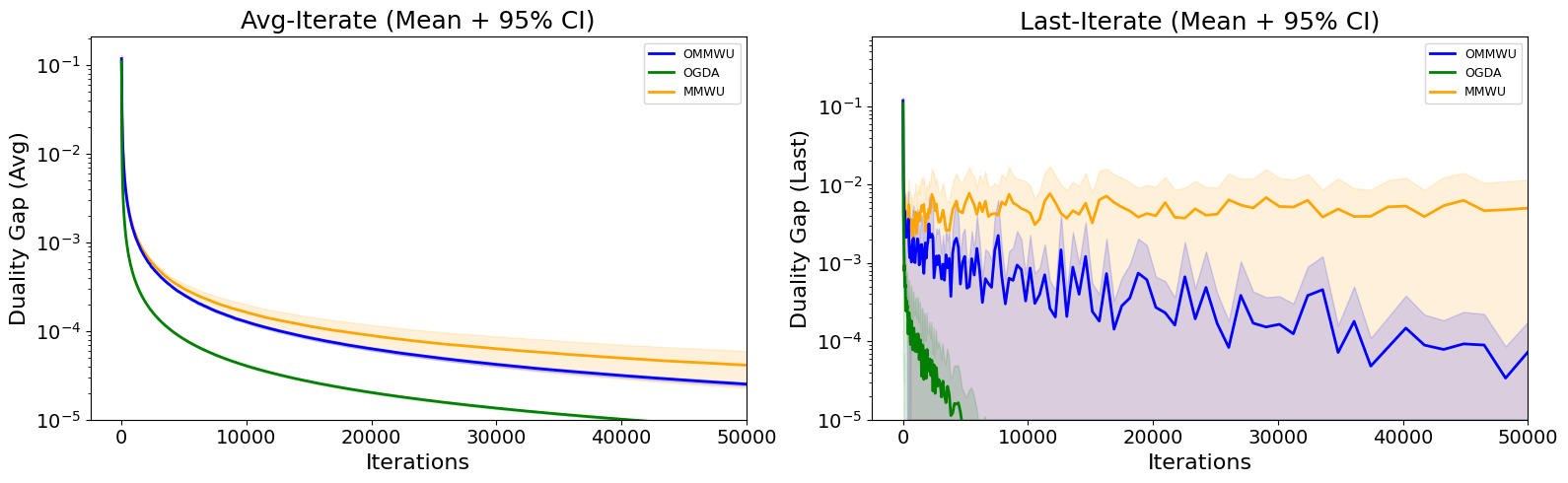}
     \vspace{-2em}
    \label{fig:performance}
\end{figure}

\paragraph{Practical Efficiency vs. Theoretical Geometry.}
Our study balances two complementary motivations:
\begin{itemize}
    \item \textbf{The Euclidean Case (\ogda):} In practice, direct projection methods like \ogda often yield the fastest convergence. The figure above provides an initial illustration of this phenomenon in 2-qubit random quantum zero-sum games: over 50 independent games and \(T = 50{,}000\) iterations with \(\eta = 1\), \ogda exhibits the fastest decrease in both average-iterate and last-iterate duality gaps. This empirical superiority relies on aggressive step-sizes, a phenomenon reminiscent of the ``Edge of Stability'' in minimization, where theory is often far more pessimistic than reality. More details are provided in Section~\ref{sec:experiments_convergence}, Figure~\ref{fig:2qubit-qzsg-convergence-ci95}.
    
    \item \textbf{The Entropic Case (\ommwu):} Despite \ogda's speed, we rigorously analyze \ommwu for two compelling reasons. First, it achieves the state-of-the-art \emph{average-iterate} rate of $\bigoh(\log d_{\text{Spectr.}} / \varepsilon)$ \citep{Vasconcelos2025quadraticspeedupin}. Second, its geometry permits step-sizes independent of the matrix dimension, avoiding the scaling limitations of Frobenius-norm methods. The major technical hurdle here—which we overcome—is leveraging error bounds for Bregman-style updates in the absence of commutativity, a difficulty conceptually aligned with the challenges in \cite{wei2021linear}.
\end{itemize}}

{\section{Geometric Foundation: Saddle-Point Metric Subregularity} \label{subsec:spms}}

\noindent
\begin{minipage}[t]{0.77\textwidth}
    \vspace{0pt} 
    \paragraph{Geometric Foundation: SP-MS.} 
    The cornerstone of our analysis is establishing that quantum zero-sum games satisfy the \emph{Saddle-Point Metric Subregularity (SP-MS)} condition. 
    While SP-MS is standard for polyhedral sets (like the simplex), extending it to the \emph{spectraplex} requires navigating the non-commutative geometry of the semidefinite cone, where the boundary is smooth rather than faceted.
\end{minipage}%
\hfill
\begin{minipage}[t]{0.32\textwidth}
    \vspace{0pt} 
    \centering
  \includegraphics[width=0.5\linewidth]{./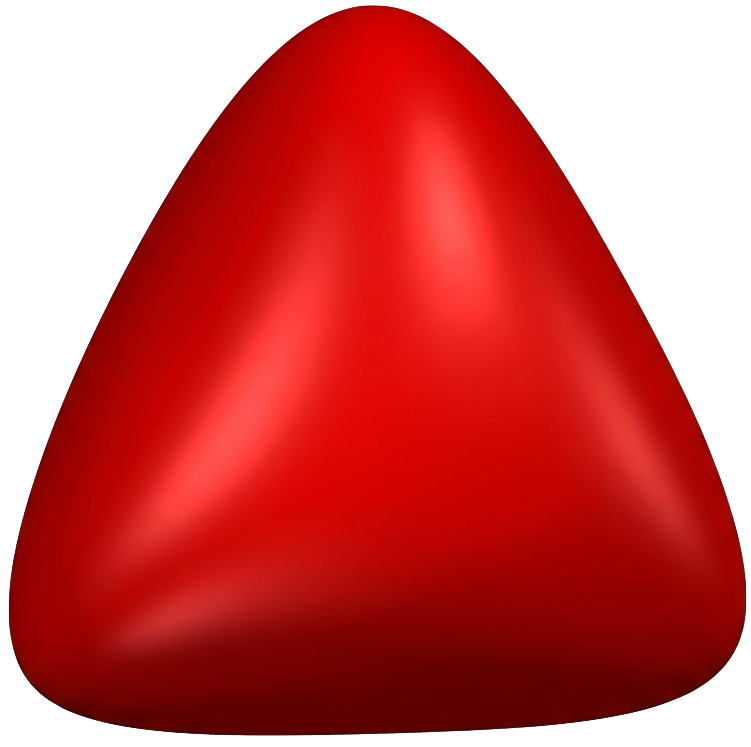}
    \label{fig:spectraplex_mini}
\end{minipage}

\begin{restatable}{theorem}{thmspmsforspectraplex}
\label{thm:sp-ms-for-spectraplex}
    A quantum zero-sum game with bilinear payoff satisfies the SP-MS condition with parameter $\spmsparam = 0$. That is, there exists $C > 0$ such that for all $\joint \in \jointsp$:
   \begin{align} \label{eq:spms_def}
        \sup_{\jointalt \in \jointsp} \frac{\tr{\gradpayoff(\joint)\, (\joint - \jointalt)}}{\fnorm{\joint - \jointalt}} \;\geq\; C \fnorm*{\joint - \sol{\joint}}^{\spmsparam + 1}.
    \end{align}
    where $\sol{\joint}$ is the projection of $\joint$ onto the set of equilibria $\cond{\jointsp}$.
\end{restatable}
\subsubsection*{Proof Strategy and Technical Departures - Full proof defered in Appendix \ref{appendix:thm:sp-ms-for-spectraplex-appendix}.}
Our proof adapts the template of \cite{wei2021linear}, decomposing the error bound construction into five structural claims. However, the transition from the vector simplex to the matrix spectraplex requires handling the \emph{spectral} nature of the constraints.
\paragraph{Claim~\ref{claim:compact}: Compactness of Equilibria.}
First, we establish that the set of equilibria $\cond{\jointsp}$ is convex and compact. This ensures that the projection $\sol{\joint}$ is well-defined and unique.
In the spectraplex, we must invoke the compactness of the unit trace class operators under the Frobenius norm topology.
\paragraph{Claim~\ref{claim: a-violated-constraint}: Separation via ``Active" Subspaces.}
We show that for any non-equilibrium state $\joint$, there exists a set of violated constraints that separates $\joint$ from $\cond{\jointsp}$.
Formally, we prove by contradiction that
\[
\max_{\optim \in \constrainto} \tr{\optim(\joint - \joint^*)} > 0,
\]
where $\constrainto$ is the set of non-tight optimality constraints.  
Intuitively, if this inequality failed, $\joint$ would satisfy all the same tight constraints as $\joint^*$, contradicting the fact that $\joint^*$ is the closest equilibrium.
{$\star$\textit{Technical Departure: In the classical case, ``active constraints" correspond to indices $i$ where the probability mass $x_i$ is zero. In the quantum setting, this notion generalizes to the \emph{kernel} of the density matrix. We prove that separation is dictated by the eigenvectors corresponding to the zero eigenvalues of the optimal state $\joint^*$, replacing combinatorial indices with spectral subspaces.}}
\paragraph{Claim~\ref{claim:max-greater-max-of-subset}: Duality Gap as a Proxy of constraint violations.}
We establish that the duality gap dominates the violation of these spectral constraints:
\[
    \gap(\joint) \;\geq\; \max_{\optim \in \constrainto} \tr{\optim(\joint - \sol{\joint})}.
\]
This step connects the duality gap (algorithmic quantity) to geometric constraint violations (structural quantity).  
Observe that to complete the proof we need an ``inverse Cauchy-Schwarz" inequality.
\newpage
\vspace{10em}\noindent
\textbf{Claim~\ref{claim:in-mcone}: Decomposition via Normal Cone.}
This step constitutes the primary technical divergence from the polyhedral analysis. Unlike the simplex, where the deviation admits a finite decomposition, the curved geometry of the spectraplex forces us to represent $\joint - \sol{\joint}$ as an \emph{integral} over the manifold of rank-1 projectors:
\[
    \joint - \sol{\joint} = \int_\mathcal{S} v \; dp(v),
\]
where $p$ is a measure over the constraint set $\mathcal{S}$. This formulation rigorously captures how the error vector resides within the normal cone, constructed as a continuous superposition of directions defined by the violated spectral constraints.

\medskip\noindent
\textbf{Claim~\ref{claim:conical-bound-coefficient}: Uniform Boundedness of Weights.}
Finally, we relate the squared distance to the constraint violations by bounding the "total mass" of the integral decomposition. We derive the inequality:
\[
    \|\joint - \sol{\joint}\|_F^2
    \;\leq\;
    \underbrace{\left(\sup_{v \in \mathcal{S}} \tr{v(\joint - \sol{\joint})}\right)}_{\le \text{Duality Gap (Claim 3)}}
    \cdot
    \underbrace{\int_\mathcal{S} dp(v)}_{\text{Total Weight}}.
\]
The proof reduces to showing that the total weight term is linearly bounded by $\|\joint - \sol{\joint}\|_F$.\\
$\star$ \textit{Technical Departure: In polyhedral analysis (simplex), this bound follows from Hoffman’s error bounds, relying on the inversion of finite basis matrices. In the spectraplex, no such finite basis exists \textendash\emph{this is the why we use $\sup$ than $\max$}. Instead, we prove that the mapping from normalized error directions to their minimal measure representation is upper semi-continuous. Relying on the compactness of the unit sphere in the trace-norm topology, we establish that the total weight is uniformly bounded by a constant $C$, completing the error bound derivation.}

An immediate corollary of Theorem~\hyperref[thm:sp-ms-for-spectraplex]{5} (with $\spmsparam=0$) is the existence of a global \emph{error bound}.
\begin{corollary}[Global Error Bound] \label{prop:main:condition-number}
    Let $\delta(\superop) > 0$ be the condition measure of the underlying superoperator $\superop$ of the game. Then, for all $\joint \in \jointsp$,
    \begin{align}
        \dist(\joint, \cond{\jointsp}) \;\leq\; {\gap(\joint)}/{\delta(\superop)}.
    \end{align}
\end{corollary}

\section{Convergence Results}
\label{sec:convergence}

\subsection{Convergence of Iterative Smoothing} \label{subsec:q-iterated-convergence}

We first analyze the \itersmooth\ algorithm (Algorithm \ref{alg:iterative smoothing}). This method operates by solving a sequence of smoothed games. We begin by bounding the complexity of the inner loop, \qsmoothing.

\begin{restatable}[Inner Loop Complexity]{proposition}{qsmoothingprop} \label{prop:qsmoothing}
    The \qsmoothing\ subroutine, operating with smoothing parameter $\smooth = \varepsilon/(2\regmax)$, returns a state $\joint$ with $\gap(\joint) \le \varepsilon$ in at most
    \begin{align*}
        k =\frac{2\sqrt{2}\cdot \opnorm{\gradient}\cdot \sqrt{\regmax}\cdot \dist(\joint_0, \cond{\jointsp})}{\varepsilon}
    \end{align*}
    iterations.
\end{restatable}

The proof (Appendix~\ref{appendix:prop:qsmoothing}) relies on adapting Nesterov's acceleration analysis to the smoothed duality gap $\gap_\smooth$, which uniformly approximates the true gap via $\gap_\smooth(\joint) \le \gap(\joint) \le \gap_\smooth(\joint) + \smooth \regmax$. By chaining these inner loops with a geometrically decaying smoothing schedule $\smooth_{i+1} = \smooth_i / \gamma$, we obtain the total complexity. The error bound (Corollary~\ref{prop:main:condition-number}) is crucial here: it ensures that as the gap decreases, the iterate remains close to the optimal set, preventing the required number of inner iterations from exploding.

\begin{restatable}[Convergence of \itersmooth]{theorem}{qiteratedconvergenceratefin}
\label{thm:q-iterated-convergence-rate-fin}
    The \itersmooth\ algorithm computes an $\varepsilon$-equilibrium in at most 
    \[
        T=\bigoh\left( \kappa(\superop) \, \ln\left( \frac{\opnorm{\gradient}}{\varepsilon} \right) \right)
    \]
    first-order iterations, where $\kappa(\superop) = \opnorm{\gradient}/\delta(\superop)$ is the condition number of the game.
\end{restatable}
%
\subsection{Convergence of Optimistic Gradient Descent-Ascent} \label{subsec:ogda-convergence}
The power of the SP--MS framework lies in its modularity: it separates the geometric difficulty from the algorithmic dynamics. Having verified that quantum zero-sum games satisfy the requisite error-bound condition \eqref{eq:spms_def} despite their non-polyhedral nature, we can directly invoke standard optimistic convergence guarantees without reinventing the algorithmic Lyapunov analysis~\cite{hsieh2019convergence,wei2021linear,malickrate}. 
\begin{theorem}[\ogda Linear Convergence] \label{thm:ogda-convergence-rate}
    Consider the \ogda algorithm with step size $\eta = \Theta(1/L)$. Since the quantum payoff geometry satisfies the SP-MS condition with parameter $\spmsparam=0$ (as proven in Section~\ref{subsec:spms}), the iterates $\joint_t$ converge linearly to the set of equilibria $\cond{\jointsp}$
    \begin{equation*}
        \dist^2(\joint_t, \cond{\jointsp}) \;\leq\; C_{\text{init}} \cdot \dist^2(\joint_0, \cond{\jointsp}) \cdot (1 + \mu)^{-t},
    \end{equation*}
    where $\mu > 0$ depends on the error-bound constant $C$ and the step size. Consequently, \ogda computes an $\varepsilon$-approximate equilibrium in $\bigoh(\log(1/\varepsilon))$ iterations.
\end{theorem}
A proof is provided in Appendix~\ref{appendix_subsec:ogda-convergence}.

\section{Revisiting the Optimistic Matrix Multiplicative Weights Update}

\subsection{Theoretical Guarantee for Convergence of \ommwu}
We next revisit Optimistic Matrix Multiplicative Weights Update (\ommwu), whose geometry is naturally adapted to the spectraplex. In contrast to Euclidean projection methods, \ommwu evolves by matrix exponentiation and therefore preserves positive semidefiniteness and unit trace by construction. Its natural Lyapunov function is the quantum relative entropy, or equivalently, the Bregman divergence induced by the negative von Neumann entropy. This makes \ommwu the matrix analog of optimistic entropic mirror descent.

Our analysis proceeds through an entropy-regularized game using the framework of \cite{Sokota2023QRE}. For a regularization parameter \(\delta>0\), let \(\joint^{(\delta)}\) denote the corresponding quantum quantal-response equilibrium (QRE), i.e., the saddle point of the game obtained by adding the negative von Neumann entropy regularizer to the players' objectives. The \ommwu satisfies the following convergence guarantee for QRE. 

\begin{restatable}[Convergence for \ommwu]
{theorem}{ommwuconvergence}
\label{thm:ommwu-convergence}
Consider a quantum zero-sum game with unique Nash equilibrium $\jointopt$. For smoothness parameter $L$ and step size $\eta \le \frac{1}{L}$, the \ommwu (with an additional entropic regularization term) guarantees the following last-iterate convergence:
\begin{align*}
    S(\joint^{(\delta)}\|\hat\joint_{t+1}) \le (1+\delta \eta)^{-t} S(\joint^{(\delta)}\|\hat\joint_{1}),\quad \text{ where $\|\joint^{(\delta)}-\jointopt\|_F \le \mathcal{O}_d(\delta)$.}
\end{align*}
\end{restatable}

The theorem gives a linear last-iterate rate toward the regularized equilibrium for every fixed \(\delta>0\). Since the QRE is an \(\bigoh(\delta)\)-approximate Nash equilibrium in duality gap, choosing \(\delta\) proportional to the target accuracy $\varepsilon$ yields an \(\widetilde \bigoh(1/\varepsilon)\) last-iterate guarantee for approximating the original Nash equilibrium (in Frobenius distance) by \ref{eq:error-bound}. Thus, the QRE analysis upgrades the known average-iterate behavior of \ommwu to a last-iterate statement while also clarifying the role of the regularization parameter: linear convergence holds to the regularized target, but the passage from QRE to Nash introduces the familiar accuracy-dependent tradeoff. The relevant details and proof  are defered to Appendix \ref{subsec:ommwu-convergence}. 

\subsection{Lower Bound for Convergence of \ommwu}
\label{subsec:ommwu-slowdown}

The preceding convergence analysis shows that \ommwu enjoys a clean last-iterate convergence toward the (entropy-regularized) equilibrium. It is therefore natural to ask whether the same entropic geometry can yield an instance-independent linear rate directly for the unregularized quantum game. We show that the linear convergence rate is possible only with a multiplicative condition number of the game.

The obstruction already appears inside the diagonal subspace of the spectraplex. Given a
classical payoff matrix \(A\in\mathbb R^{2\times 2}\), consider the diagonal quantum payoff
observable
\[
    U=\operatorname{diag}(A_{11},A_{12},A_{21},A_{22}).
\]
If \ommwu is initialized at diagonal density matrices, then every iterate remains diagonal. Moreover, the induced diagonal trajectory coincides exactly with classical \ommwu on \(A\), and the quantum duality gap agrees with the classical duality gap at every time. Thus, a classical instance that admit slow last-iterate convergence embed isometrically into quantum zero-sum games.

We use such reduction for the following hard instance proposed by Cai te al. \cite{cai2025fastlastiterateconvergencelearning}
\[
    A_\delta =
    \begin{pmatrix}
        \frac12+\delta & \frac12 \\
        0 & 1
    \end{pmatrix},
    \qquad
    U_\delta
    =
    \operatorname{diag}\!\left(\tfrac12+\delta,\tfrac12,0,1\right),
\]
where \(\delta>0\) controls the conditioning of the instance. The following theorem is the lower-bound statement.

\begin{restatable}[{\citet[Theorem~1]{cai2025fastlastiterateconvergencelearning}}]{theorem}{caiLastIterateTheorem}
\label{thm:name}
Let $\reg \from [0,1]\to\R$ be the negative entropy regularizer
\[
\reg(x)=x\log x+(1-x)\log(1-x).
\]
For any $\eta>0$, define the one-dimensional mirror map $F_{\eta,\reg}:\R\to[0,1]$ by
\[
F_{\eta,\reg}(e)
\;:=\;
\argmin_{x\in[0,1]}
\Bigl\{x\,e+\frac{1}{\eta}\reg(x)\Bigr\},
\qquad\text{so that }F_{\eta,\reg}(e)=F_{1,\reg}(\eta e).
\]
Then $F_{1,\reg}$ is $L$-Lipschitz with $L=\frac12$.
Let the constants
\[
c_1 = \frac12 - F_{1,\reg}\Bigl(\frac{1}{20L}\Bigr),
\qquad
c_2 = F_{1,\reg} \Bigl(-\frac{c_1^2}{480L}\Bigr)-\frac12 > 0,
\qquad
\delta' = \frac{c_1^2}{480L},
\]
Then there exists a constant $\hat{\delta} \in (0, \delta']$, depending only on $c_1$ and $\delta'$ such that for every $\delta \in (0, \hat{\delta})$, the \ommwu dynamics on $A_\delta$ with any step size $\eta \leq \frac{1}{4L}$ satisfies the following: there exists an iteration $t \geq \frac{c_1}{3 \eta L \delta}$ with $\gap_C(x_t, y_t) \geq c_2$.
\end{restatable}

Combining our reduction from classical to quantum, the theorem implies that \ommwu cannot admit a instance-independent linear convergence rate over all quantum games. Indeed, for the family \(U_\delta\), the dynamics may approach a small-gap region but must return to a constant-gap state after a time of order \(1/(\eta\delta)\). Hence, even if \ommwu converges linearly on well-conditioned instances, the constant in the convergence rate must deteriorate with the conditioning of the payoff operator. The detailed reduction and constructions are deferred to Appendix \ref{appendix:slow-last-iter-ommwu}.

\section{Experimental Evaluation}
\label{appendix:convergence experiment}
\subsection{Experiments of Convergence Performance}
\label{sec:experiments_convergence}
We evaluate \mmwu, \ommwu, and \ogda on randomly generated 2-, 4-, and 6-qubit quantum zero-sum games, extending \citet[Experiment~1]{Vasconcelos2025quadraticspeedupin}. For each instance, we construct the payoff observable from randomly generated full-rank POVMs. We generate $50$ independent games and run each method for $50{,}000$ iterations per game, using the step sizes chosen for the 2-, 4-, and 6-qubit quantum zero-sum games. For each game size, Alice and Bob each play 1, 2, and 3 qubits, respectively. We measure performance by the duality gap along the trajectory and report both the last iterate and the average iterate. The plots report the mean across instances with $95\%$ confidence intervals and a log-scaled $y$-axis.

We first compare the three first-order methods on randomly generated 2-qubit quantum zero-sum games using a common step size \(\eta=1\). Figure~\ref{fig:2qubit-qzsg-convergence-ci95} reports both the average-iterate and last-iterate duality gaps. In the average-iterate plot (left), all three methods exhibit monotone decay on the log scale, with \ogda decreasing fastest and both \ommwu and \mmwu also converging steadily.

The distinction between the algorithms is more pronounced in the last-iterate plot (right). \ogda reduces the duality gap rapidly, while \ommwu also shows a clear decreasing trend, albeit with larger fluctuations across trials. In contrast, the last-iterate gap of standard \mmwu remains bounded away from zero and fluctuates around a non-vanishing level throughout the run. Thus, the experiment illustrates the theoretical separation discussed above: standard \mmwu achieves convergence primarily through averaging, whereas the optimistic variants reduce the oscillatory behavior and provide empirical evidence of last-iterate convergence in quantum zero-sum games.
\begin{figure}[H]
    \centering
    \includegraphics[width=0.9\linewidth]{Images/qzsg_2qubit_ci95.png}
   \caption{\textbf{Convergence on 2-qubit random quantum zero-sum games.}
    Mean duality gap over \(50\) independent games for \(T=50{,}000\) iterations, using \(\eta=1\) for all methods. The left panel shows the average-iterate duality gap and the right panel shows the last-iterate duality gap. The duality gap is plotted on a logarithmic \(y\)-axis. Shaded regions denote pointwise \(95\%\) confidence intervals over instances. All methods decrease in the average iterate. In the last iterate, \ogda\ and \ommwu\ decrease while standard \mmwu\ remains at a non-vanishing gap.}

    \label{fig:2qubit-qzsg-convergence-ci95}
\end{figure}

We next test whether the same behavior persists in larger randomly generated quantum zero-sum games. Figure~\ref{fig:4qubit-qzsg-convergence-ci95} reports the 4-qubit case, where Alice and Bob each control two qubits. We use step sizes \(\eta=10\) for \mmwu, \(\eta=10\) for \ommwu, and \(\eta=5\) for \ogda. The qualitative picture is consistent with the 2-qubit experiment: all three methods decrease the average-iterate duality gap, but the last-iterate behavior separates the optimistic methods from standard \mmwu. Both \ogda and \ommwu drive the last-iterate gap to the numerical floor, while the last-iterate gap of \mmwu remains bounded away from zero throughout the run.

Compared with the 2-qubit experiment, the confidence bands in the 4-qubit case are substantially narrower. This suggests that the slow last-iterate convergence behavior observed in some 2-qubit random instances is less prominent for these randomly generated 4-qubit games. One possible explanation is that random full-rank POVMs in the 2-qubit setting are more likely to produce instances with slow last-iterate convergence for \ommwu, as illustrated separately in Appendix~\ref{appendix:slow-last-iter-ommwu}. As the number of qubits increases, such slow-convergence instances appear harder to generate at random. We also use larger step sizes for \mmwu and \ommwu in the 4-qubit experiment to make the convergence behavior clearly visible over the finite horizon.

\begin{figure}[H]
    \centering
    \includegraphics[width=0.9\linewidth]{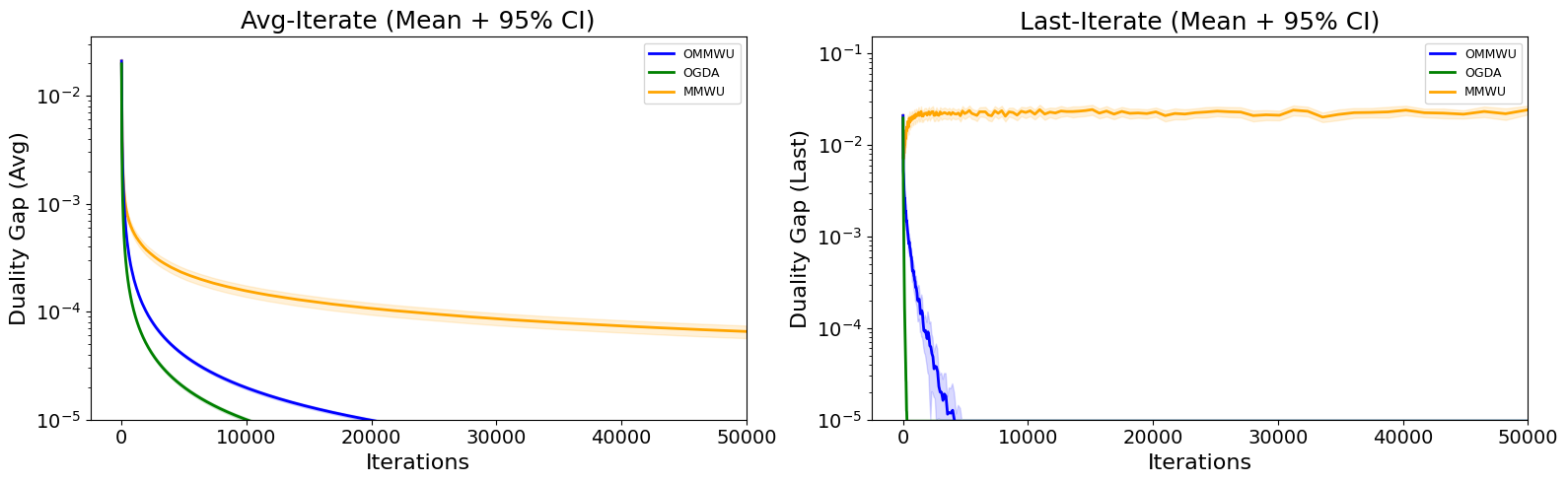}
    \caption{\textbf{Convergence on 4-qubit random quantum zero-sum games.}
    Mean duality gap over \(50\) independent games for \(T=50{,}000\) iterations; \mmwu\ and \ommwu\ use \(\eta=10\), while \ogda\ uses \(\eta=5\). The left panel shows the average-iterate duality gap and the right panel shows the last-iterate duality gap. The duality gap is plotted on a logarithmic \(y\)-axis. Shaded regions denote pointwise \(95\%\) confidence intervals over instances. The optimistic methods drive the last-iterate gap close to the numerical floor, whereas standard \mmwu\ remains bounded away from zero in the last iterate.}

    \label{fig:4qubit-qzsg-convergence-ci95}
\end{figure}

We finally consider the 6-qubit case, where Alice and Bob each control three qubits. We use the same step sizes \(\eta=10\) for \mmwu, \(\eta=10\) for \ommwu, and \(\eta=5\) for \ogda. Figure~\ref{fig:6qubit-qzsg-convergence-ci95} shows the same qualitative pattern as the 4-qubit experiment, but with slower convergence for \ommwu. In the average iterate, all three methods decrease the duality gap. In the last iterate, \ogda reaches the numerical floor quickly, while \ommwu decreases steadily over the full horizon. By contrast, standard \mmwu remains bounded away from zero and gradually increases after the initial transient, again indicating persistent oscillatory behavior rather than last-iterate convergence. The confidence bands are narrower than in the 4-qubit experiment, further suggesting that slow last-iterate convergence for \ommwu is less common in these larger random instances.
\begin{figure}[H]
    \centering
    \includegraphics[width=0.9\linewidth]{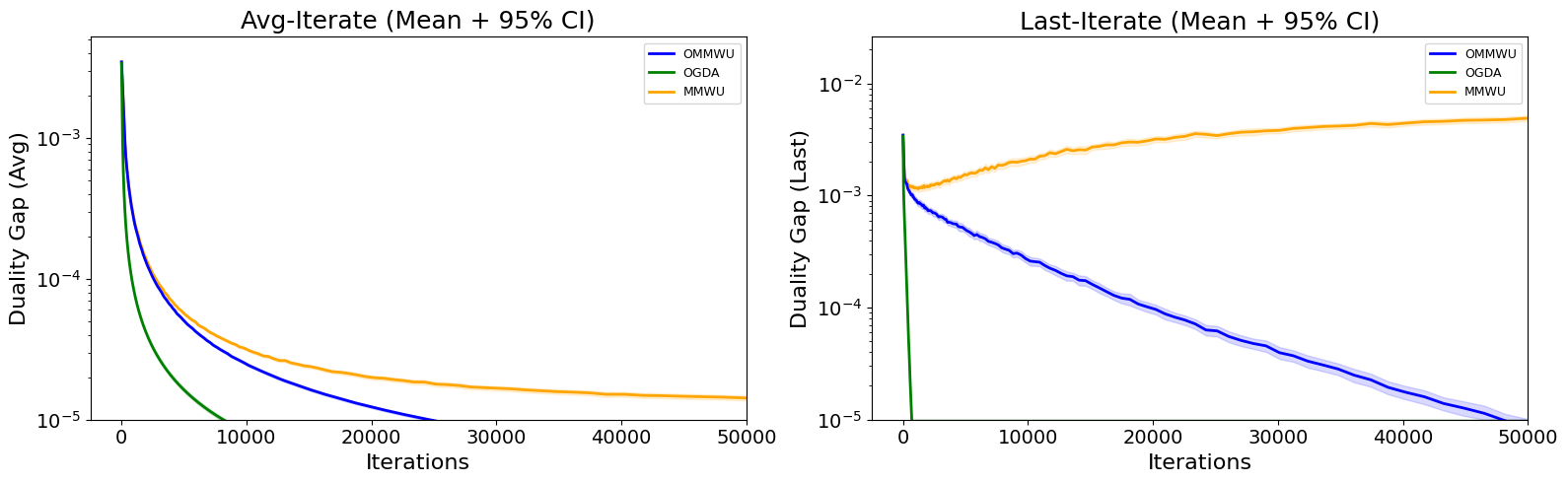}
    \caption{\textbf{Convergence on 6-qubit random quantum zero-sum games.}
    Mean duality gap over \(50\) independent games for \(T=50{,}000\) iterations; \mmwu\ and \ommwu\ use \(\eta=10\), while \ogda\ uses \(\eta=5\). The left panel shows the average-iterate duality gap and the right panel shows the last-iterate duality gap. The duality gap is plotted on a logarithmic \(y\)-axis. Shaded regions denote pointwise \(95\%\) confidence intervals over instances. \ogda\ reaches the numerical floor fastest, \ommwu\ decreases more gradually, and standard \mmwu\ stays bounded away from zero in the last iterate.}
    \label{fig:6qubit-qzsg-convergence-ci95}
\end{figure}

To complement the aggregate convergence plots, we also inspect the geometry of a single
2-qubit random instance. This diagnostic is not intended as a statistical comparison;
rather, it illustrates the trajectory-level behavior underlying the mean duality-gap
curves above. We use the same 2-qubit setting with \(\eta=1\) and run all methods for
\(T=50{,}000\) iterations on a fixed randomly generated full-rank POVM game.

Figure~\ref{fig:2qubit-single-instance-duality-gap} reports the average-iterate and
last-iterate duality gaps for this instance. The behavior is consistent with the
aggregate results. Standard \mmwu\ achieves a small average-iterate gap, but its
last-iterate gap remains large and oscillatory throughout the run. By contrast,
\ommwu\ has a decreasing last-iterate trend with large oscillations,
while \ogda\ rapidly drives the last-iterate gap to the numerical floor.

\begin{figure}[H]
    \centering
    \includegraphics[width=\linewidth]{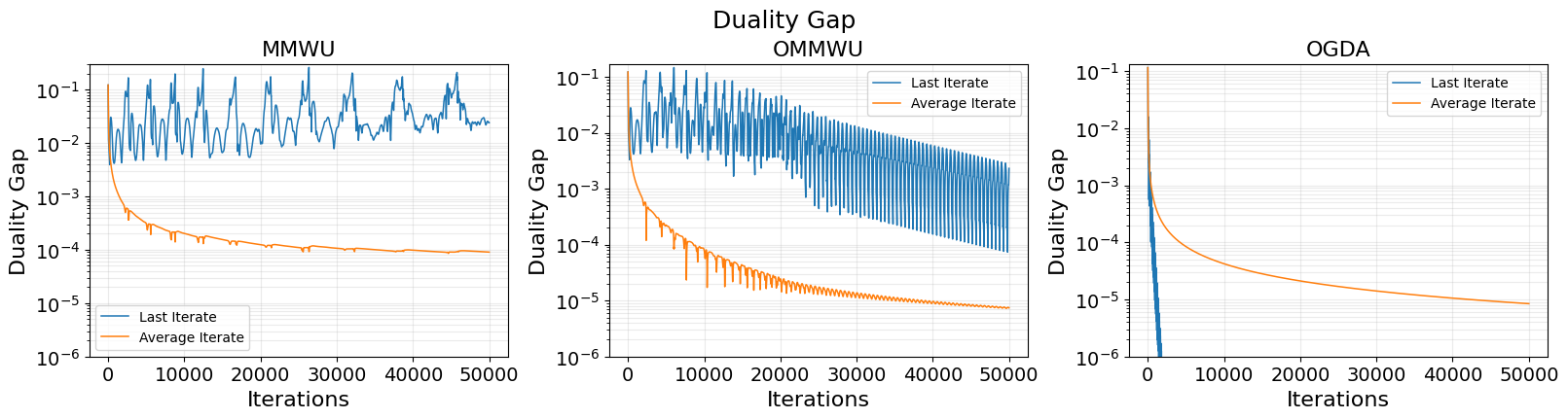}
    \caption{\textbf{Duality-gap trajectories for a single 2-qubit random quantum zero-sum game.}
    Each panel reports the last-iterate and average-iterate duality gaps for one method over
    \(T=50{,}000\) iterations. The duality gap is plotted on a logarithmic \(y\)-axis. 
    }
    \label{fig:2qubit-single-instance-duality-gap}
\end{figure}

Figure~\ref{fig:2qubit-single-instance-eigenvalues} tracks the eigenvalues of the
last-iterate product state \(\alpha_t \otimes \beta_t\). For standard \mmwu, the largest
eigenvalue repeatedly approaches one, while the other eigenvalues collapse toward zero
except for intermittent spikes. This indicates that the last iterates repeatedly move
near nearly pure product states, consistent with boundary cycling. For \ommwu, the
oscillations are gradually damped and the spectrum becomes more stable over time.
For \ogda, the spectrum stabilizes almost immediately after the initial transient.

\begin{figure}[H]
    \centering
    \includegraphics[width=\linewidth]{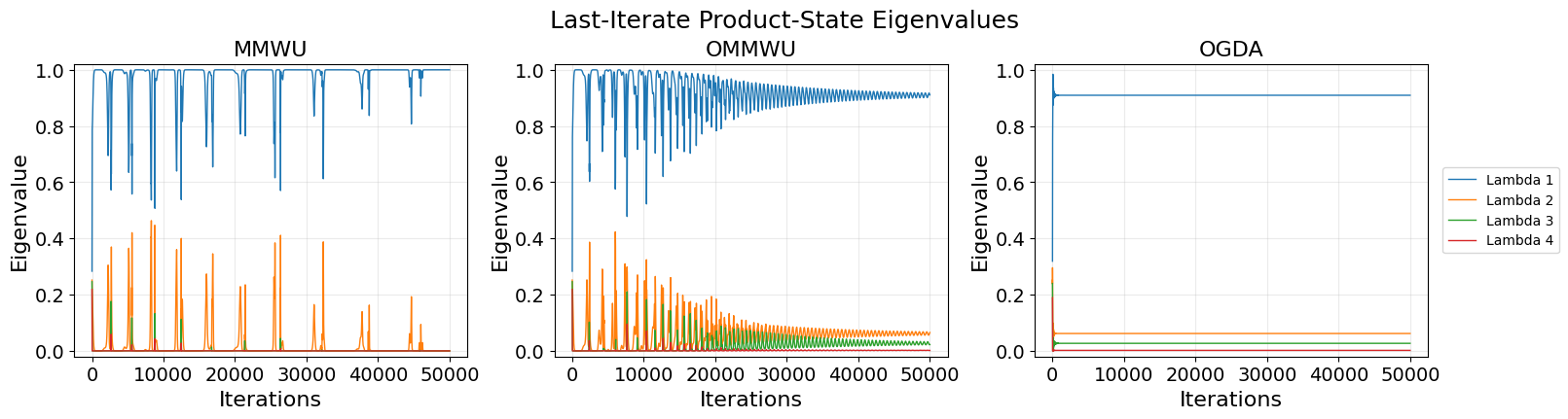}
    \caption{\textbf{Eigenvalues of the last-iterate product state for a single 2-qubit random quantum zero-sum game.}
    The plot tracks the spectrum of \(\alpha_t \otimes \beta_t\). 
    }
    \label{fig:2qubit-single-instance-eigenvalues}
\end{figure}

Finally, Figure~\ref{fig:2qubit-single-instance-bloch} visualizes the corresponding
Bloch-sphere trajectories of Alice's and Bob's last iterates. Standard \mmwu\ exhibits
broad recurrent motion for both players, especially for Bob, rather than settling near
a fixed point. The optimistic methods produce much more localized trajectories:
\ogda\ quickly concentrates near its terminal state, while \ommwu\ moves toward a
smaller region with damped oscillations. These trajectory-level diagnostics support
the same interpretation as the duality-gap plots: standard \mmwu\ converges mainly
through averaging, whereas optimism substantially suppresses cycling and improves
last-iterate behavior.

\begin{figure}[H]
    \centering
    \includegraphics[width=\linewidth]{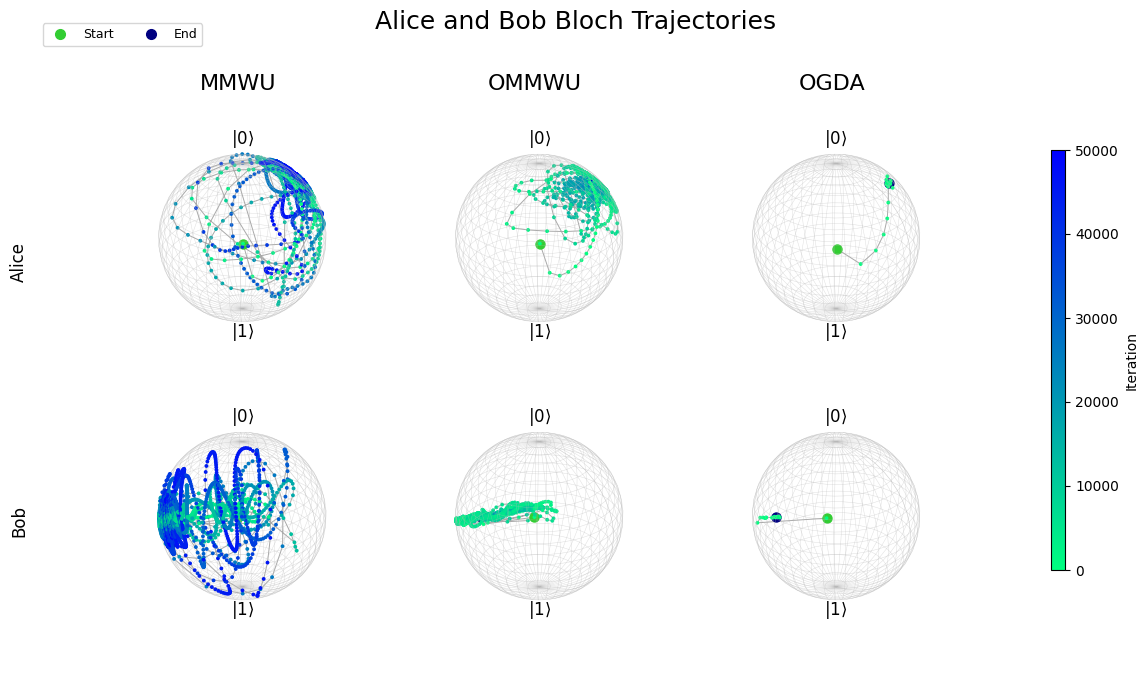}
    \caption{\textbf{Bloch-sphere trajectories of the last iterates for a single 2-qubit random quantum zero-sum game.}
    Rows correspond to Alice and Bob, and columns correspond to \mmwu, \ommwu, and \ogda. For visual clarity, the plots display every 10th saved iterate.
    }
    \label{fig:2qubit-single-instance-bloch}
\end{figure}

\subsection{Experiments of \ommwu\ on $U_{\delta}$}\label{appendix:exp:ommwu-Udelta}
Theorem~\ref{thm:name} gives a slow last-iterate lower bound for classical \ommwu\ on $A_{\delta}$, and Lemma~\ref{lem:diag-reduction} in Appendix \ref{appendix:slow-last-iter-ommwu}
transfers it to quantum \ommwu\ on the diagonal payoff observable $U_{\delta}$.
We complement this transfer with numerical experiments that track the last-iterate quantum duality gap along the \ommwu\ trajectory,
following and extending the experimental setup of \citet{cai2025fastlastiterateconvergencelearning}.

Concretely, we run \ommwu Algorithm~\ref{alg:ommwu} with the von Neumann entropy regularizer on the one-qubit-per-player quantum game with payoff observable \eqref{eqn:bad observable}
\begin{align*}
U_\delta \;=\; \diag\Big(\tfrac12+\delta,\tfrac12,0,1\Big), \qquad \payoff(\a,\b) \;=\; \tr{U_\delta^\dagger(\a\otimes\b)},
\end{align*}
and record the last-iterate quantum duality gap $\gap_Q(\a_t,\b_t)$ (defined in \eqref{eqn:duality-gap}) along the trajectory.
All plots record $t\mapsto \gap_Q(\a_t,\b_t)$ on a log-scaled $y$-axis.

In the first experiment, we use the diagonal initialization from Lemma~\ref{lem:diag-reduction},
\begin{align*}
\a_0 \;=\; \b_0 \;=\; \tfrac12\,\eye,
\end{align*}
fix the step size to $\eta=0.1$, and run for $T=10{,}000$ iterations, sweeping
$\delta\in\{0.005,\,0.01,\,0.05,\,0.1\}$.
In this setting, Lemma~\ref{lem:diag-reduction} implies that the induced trajectory coincides exactly with classical \ommwu\ on $A_\delta$
and preserves the duality gap:
\begin{align*}
\gap_Q(\a_t,\b_t) \;=\; \gap_C(x_t,y_t)
\qquad\text{for all }t.
\end{align*}
Thus, the experiment directly visualizes the transferred last-iterate phenomenon.

Figure~\ref{fig:hard-instance-Gdelta-sweep} shows a lack of last-iterate stabilization.
Across all $\delta$, the duality gap initially drops rapidly as the iterates approach equilibrium, but the trajectory does not
remain near the low-gap region. Instead, it eventually leaves and enters large oscillatory excursions.
This behavior matches the ``lack of forgetfulness'' identified by \citet{cai2025fastlastiterateconvergencelearning} for \oftrl-style
dynamics (of which \ommwu\ is a special case): historical gradient information accumulates, so the dynamics overshoot the equilibrium
and cycle rather than settling.
The dependence on $\delta$ is also visible: for smaller $\delta$, the trajectory can exhibit an extended ``flat'' interval in which
$\gap_Q(\a_t,\b_t)$ remains very small before the first large excursion, and the iteration of the first return to a large duality gap
moves later as $\delta$ decreases, consistent with the $1/\delta$ scaling in Theorem~\ref{thm:name}.

\begin{figure}[ht]
    \centering
    \includegraphics[width=0.8\textwidth]{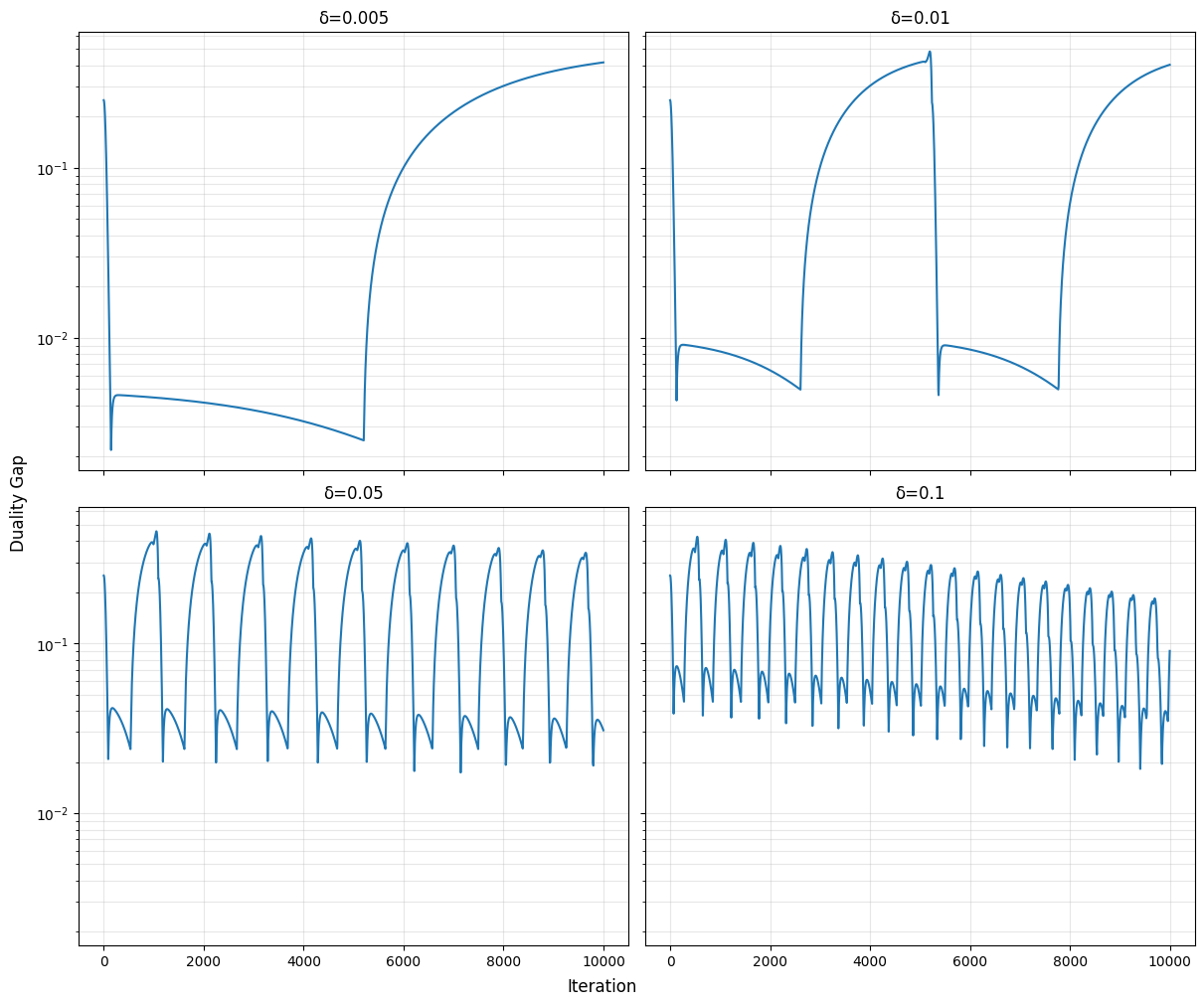}
    \caption{
    Last-iterate quantum duality gap trajectories $t\mapsto \gap_Q(\a_t,\b_t)$ for \ommwu\ on $U_\delta$
    under diagonal initialization $\a_0=\b_0=\tfrac12\eye$, with $\eta=0.1$.
    The trajectory can approach a small-gap region and later return to large excursions. Smaller $\delta$ yields longer delays before these excursions,
    consistent with slow last-iterate convergence on the hard instance.
    }
    \label{fig:hard-instance-Gdelta-sweep}
\end{figure}

In the second experiment, we test whether the last-iterate behavior in Figure~\ref{fig:hard-instance-Gdelta-sweep} is an artifact of the diagonal-invariant initialization from Lemma~\ref{lem:diag-reduction}, or whether it persists under genuinely quantum (non-diagonal) initial states.
We fix $\delta=0.05$ and initialize both players with an off-diagonal component,
\begin{align*}
\a_0 \;=\; \b_0 \;=\;
\begin{pmatrix}
\tfrac12 & \varepsilon\\
\varepsilon & \tfrac12
\end{pmatrix}.
\end{align*}
sweeping $\varepsilon\in\{0,\,0.1,\,0.2,\,0.3,\,0.4,\,0.45\}$.
We run \ommwu\ with step size $\eta=0.1$ for $T=10{,}000$ iterations and record the last-iterate quantum duality gap trajectory
$t\mapsto \gap_Q(\a_t,\b_t)$ with log-scaled $y$-axis, as above.

Figure~\ref{fig:hard-instance-offdiag-sweep} shows that moving off the diagonal subspace does not eliminate the large-gap cycling observed in the first experiment.
Across the entire sweep, $\gap_Q(\a_t,\b_t)$ again drops rapidly at early times and later returns to recurring large excursions. Varying $\varepsilon$
primarily changes the phase and period of the oscillations rather than stabilizing the last iterate.
Thus, while the first experiment visualizes the exact classical reduction guaranteed by Lemma~\ref{lem:diag-reduction}, the second experiment indicates that the same
qualitative last-iterate phenomenon persists even when the initialization is non-diagonal and the trajectory explores off-diagonal quantum degrees of freedom.

\begin{figure}[t]
    \centering
    \includegraphics[width=\linewidth]{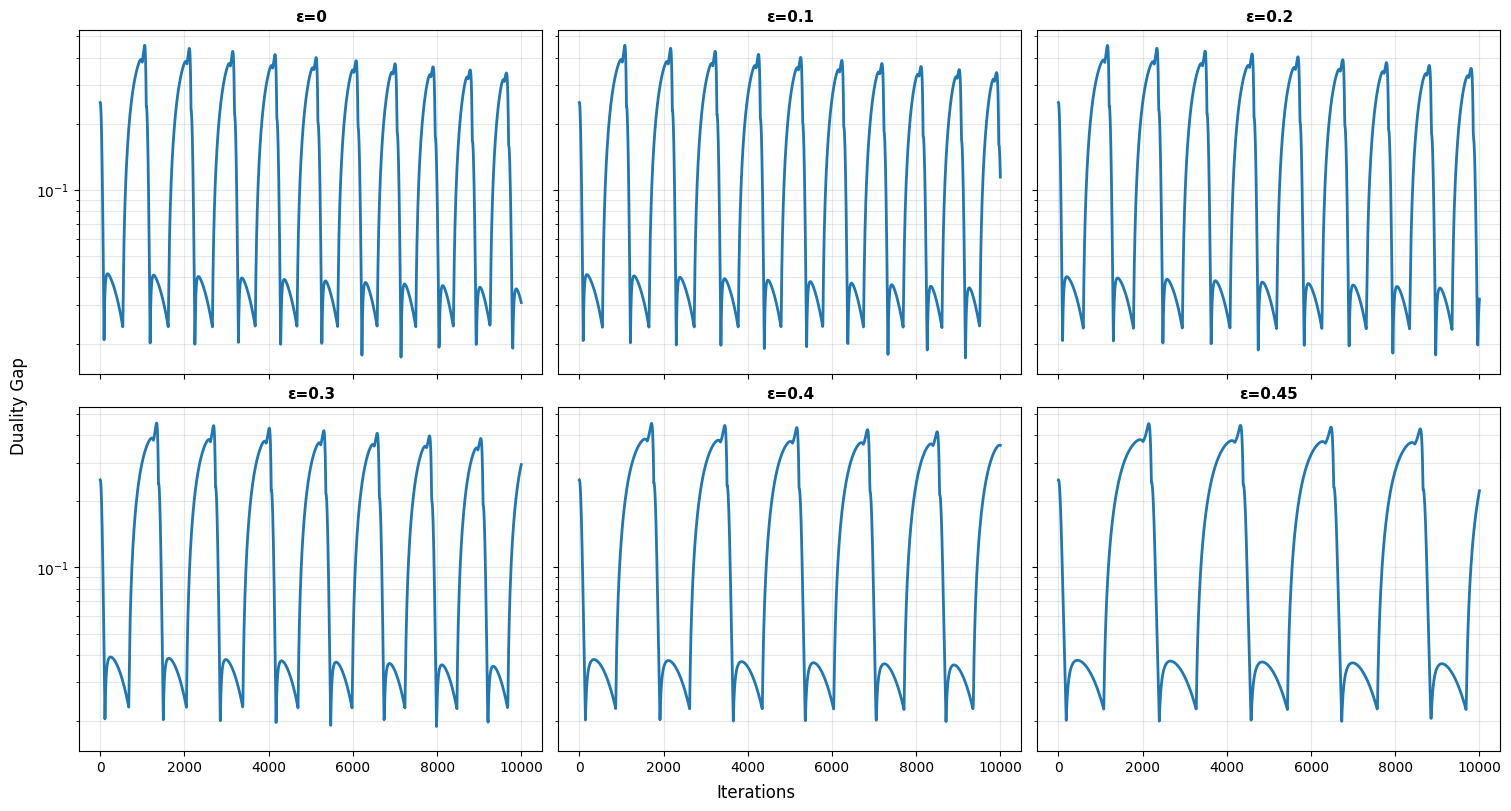}
    \caption{
    Last-iterate quantum duality gap trajectories $t\mapsto \gap_Q(\a_t,\b_t)$ for \ommwu\ on $U_\delta$
    with off-diagonal initialization, fixing $\delta=0.05$ and $\eta=0.1$ and sweeping $\varepsilon\in\{0,\,0.1,\,0.2,\,0.3,\,0.4,\,0.45\}$.
    Off-diagonal initial states do not suppress the recurring large-gap excursions.
    }
    \label{fig:hard-instance-offdiag-sweep}
\end{figure}

\newpage
\section{Epilogue: The Geometry--Rate Tradeoff}
This work shows that the non-polyhedral geometry of quantum state spaces does not preclude fast last-iterate convergence in zero-sum games. Although the spectraplex lacks the finite facial structure that underlies classical Hoffman bounds, its semidefinite geometry still supports an error bound of the appropriate form. In particular, metric subregularity of the saddle-point operator implies that small duality gap entails metric proximity to the Nash equilibrium set. This is the structural reason why the $1/\varepsilon$ barrier can be broken in quantum zero-sum games.

The resulting picture separates the geometric difficulty from the algorithmic one. For \itersmooth\ and \ogda, the SP--MS condition supplies exactly the stability needed to obtain logarithmic dependence on the target accuracy. The argument is insensitive to several degeneracies that are specific to cones: equilibria may be rank deficient, tight eigenspaces may rotate, and neutral directions may appear when strict complementarity fails. Nevertheless, the global error bound persists, so Euclidean first-order methods retain linear last-iterate convergence under instance-dependent conditioning.

The matrix-entropy method \ommwu\ reveals a complementary limitation. Its update is intrinsic to density matrices and preserves the spectraplex without projection, but its present last-iterate guarantee proceeds through entropy-regularized QREs. Sending the regularization parameter to zero recovers approximation to Nash equilibria, at the cost of a $\widetilde{\mathcal O}(1/\varepsilon)$ bound. Moreover, the diagonal reduction embeds the slow classical instances of \cite{cai2025fastlastiterateconvergencelearning} into quantum games while preserving the duality-gap trajectory. Thus any direct logarithmic last-iterate guarantee for unregularized \ommwu\ must necessarily depend on a conditioning parameter of the game.

Taken together, these results identify the relevant frontier as a tradeoff between geometry and conditioning, rather than a dichotomy between classical and quantum domains. The spectraplex admits the error-bound structure needed for accelerated last-iterate convergence, but the constants governing such rates encode spectral separation properties of the underlying game. A natural direction for future work is to characterize the quantum games for which this conditioning is benign, and in particular to determine when the intrinsic matrix geometry of \ommwu\ can yield the same logarithmic dependence achieved here by Euclidean optimistic methods.

\newpage

\bibliographystyle{quantum.bst}
\bibliography{refs.bib}

\newpage
\appendix

\section{Omitted Preliminaries}
\label{appendix:OmittedPreliminaries}
\subsection{Notations}
\label{appendix:notations}
For clarity, we collect here the main notation used in the paper.

\begin{itemize}[leftmargin=*]
  \item \textbf{Hermitian matrices.}  
  $\hmat^d := \{ A \in \C^{2^d \times 2^d} : A = A^\dagger \}$,  
  where $A^\dagger$ is the conjugate transpose of $A$.  
  The cone of positive semidefinite Hermitian matrices is  
  $$\hmat^d_{+} := \{ A \in \hmat^d : A \succeq 0 \}.$$

  \item \textbf{Identity matrix.} Let $\aspace \defeq \hmat^d_{+}$. We write $\eye_{\aspace}$ for the $2^d\times 2^d$ identity matrix (acting on $\aspace$).

  \item \textbf{Trace.}  
  For $A \in \C^{d \times d}$, the trace is  
  $\tr{A} := \sum_{i=1}^d A_{ii}$.

  \item \textbf{Inner product.}  
  For $A,B \in \C^{d \times d}$, the Hilbert–Schmidt inner product is  
  $$\inner{A,B} := \tr{A^\dagger B}.$$

  \item \textbf{Frobenius norm.}  
  For $A \in \C^{d \times d}$,  
  $\|A\|_F := \sqrt{\inner{A,A}} = \sqrt{\tr{A^\dagger A}} 
  = \bigl(\sum_{i=1}^d \sum_{j=1}^d |A_{ij}|^2\bigr)^{1/2}$.

  \item \textbf{Dual norm.}  
  For a normed space $(\mathcal{X}, \|\cdot\|)$ with dual $\mathcal{X}_*$,  
  the dual norm of $y \in \mathcal{X}_*$ is  
  $\|y\|_* := \sup \{ \inner{y,x} : x \in \mathcal{X}, \|x\| \le 1 \}$.  
  The Frobenius norm is self-dual: $\|x\|_* = \|x\|_F$ for all $x$.

  \item \textbf{Adjoint operator.}  
  For a linear map $F : \mathcal{X} \to \mathcal{Y}_*$, the adjoint  
  $\adj{F} : \mathcal{Y} \to \mathcal{X}_*$ is defined by  
  \[\text{$\inner{F(x),y} = \inner{x,\adj{F}(y)}$ for all $x \in \mathcal{X}, y \in \mathcal{Y}$.}\]

  \item \textbf{Operator norm.}  
  For a linear map $F : \mathcal{X} \to \mathcal{Y}_*$, the operator norm is  
$$
\|F\|_{\mathrm{op}} := \sup_{\|x\|=1,\;\|y\|=1} \inner{F(x),y}.
$$
\end{itemize}

Finally, crucial to our analysis are the notions of strong convexity and smoothness, which govern the curvature properties of functions and play a central role in determining algorithmic convergence rates.
\begin{definition}[$\mu$-Strong Convexity]\label{def:strong_convex}
A differentiable function $f:\mathcal{X}\to\R$ is \emph{$\mu$-strongly convex} with respect to a norm $\|\cdot\|$ if, for all $X,Y \in \mathcal{X}$,
\[
f(X) \;\ge\; f(Y) + \langle \nabla f(Y), X-Y \rangle + \tfrac{\mu}{2}\|X-Y\|^2.
\]
\end{definition}

\begin{definition}[$\beta$-Smoothness]\label{def:smooth}
A differentiable function $f:\mathcal{X}\to\R$ is \emph{$\beta$-smooth} with respect to a norm $\|\cdot\|$ if, for all $X,Y \in \mathcal{X}$,
\[
f(X) \;\le\; f(Y) + \langle \nabla f(Y), X-Y \rangle + \tfrac{\beta}{2}\|X-Y\|^2.
\]
\end{definition}

\subsection{Smoothness Properties of the Smoothed Duality Gap}
\label{appendix:prop:smoothness}
\begin{restatable}{proposition}{smoothnessprop}\label{prop:smoothness}
Suppose the regularizer $\reg$ is $\sigma_{\reg}$-strongly convex.
Then $\gap_\smooth(\joint)$ is smooth with gradient $\grad_\joint \gap_\smooth(\joint)=\gradient(\maximizer(\joint)),$
and $\grad_\joint \gap_\smooth(\joint)$ is Lipschitz with constant $L_\smooth=\opnorm{\gradient}^2 / (\smooth\sigma_{\reg}).$
\end{restatable}
\begin{proof}
    We know the joint payoff gradient is in Equation $\eqref{eqn:payoff-gradient}$ and its adjoint
    \begin{align*}
        & \gradient(\joint) = \parens{\gradient_\a(\b), \gradient_\b(\a)} = \parens{\superop(\b), -\adj{\superop}(\a)} \\
        & \adj{\gradient}(\joint) = \parens*{-\superop(\b),\,\adj{\superop}(\a)}.
    \end{align*}
    We can rewrite the $\gap_\smooth(\joint)$ in Equation \eqref{eqn:smoothed-gap} using gradient
    \begin{align} \label{eqn:smoothed gap gradient}
        \gap_\smooth(\joint) = \max_{\jointalt \in \jointsp} \braces*{\tr{\gradient(\jointalt) \joint} - \smooth \reg(\jointalt)}.
    \end{align}
    Then, the gradient of $\gap_\smooth(\joint)$ in $\joint$ is
    \begin{align*}
        \grad_\joint \gap_\smooth (\joint) =  \gradient(\maximizer(\joint)).
    \end{align*}
    To establish Lipschitz continuity of the gradient, we adapt Nesterov’s technique from the proof of Theorem 1 in \citet{Nesterov2005}. Suppose we have two arbitrary joint states $\joint_1, \joint_2 \in \jointsp$. By the first-order optimality condition, we have
    \begin{align}
        & \tr{\adj{\gradient}(\joint_1) - \smooth \grad \reg(\maximizer(\joint_1))(\maximizer(\joint_2) - \maximizer(\joint_1))} \leq 0,  \\
        & \tr{\adj{\gradient}(\joint_2) - \smooth \grad \reg(\maximizer(\joint_2))(\maximizer(\joint_1) - \maximizer(\joint_2))} \leq 0 .
    \end{align}
    Adding them and applying the strong convexity of $\reg$ and the adjoint identity, we have
    \begin{align*}
        \tr{\adj{\gradient}(\joint_1 - \joint_2)(\maximizer(\joint_1) - \maximizer(\joint_2))} & \geq \smooth \parens*{\grad \reg(\maximizer(\joint_1)) - \grad\reg(\maximizer(\joint_2))} \parens*{\maximizer(\joint_1) - \maximizer(\joint_2)} \\
        & \geq \smooth \sigma_{\reg}\fnorm*{\maximizer(\joint_1) - \maximizer(\joint_2)}^2.
    \end{align*}
    We apply the definition of operator norm
    \begin{align}
        \opnorm{\gradient} = \max_{\joint \in \jointsp, \jointalt \in \jointsp} \setdef{\tr{\gradient(\joint)^\top \jointalt}}{\fnorm{\joint} = 1, \fnorm{\jointalt} = 1}.
    \end{align}
    Therefore, for any $\joint \in \jointsp$,
    \begin{align*}
        \dnorm{\gradient(\joint)} \leq \opnorm{\gradient} \fnorm{\joint}
    \end{align*}
    Following the argument of Nesterov,
    \begin{align*}
        \dnorm{\gradient(\maximizer(\joint_1)) - \gradient(\maximizer(\joint_2))}^2 & \leq \opnorm{\gradient}^2 \cdot \fnorm{\maximizer(\joint_1) - \maximizer(\joint_2)}^2 \\
        & \leq \frac{1}{\smooth \sigma_{\reg}} \opnorm{\gradient}^2 \cdot \tr{\adj{\gradient}(\joint_1 - \joint_2)(\maximizer(\joint_1) - \maximizer(\joint_2))} \\
        & = \frac{1}{\smooth \sigma_{\reg}} \opnorm{\gradient}^2 \cdot \tr{\gradient(\maximizer(\joint_1) - \maximizer(\joint_2))(\joint_1 - \joint_2)} \\
        & \leq \frac{1}{\smooth \sigma_{\reg}} \opnorm{\gradient}^2  \cdot \dnorm{\gradient(\maximizer(\joint_1) - \maximizer(\joint_2)} \cdot \fnorm{\joint_1 - \joint_2},
    \end{align*}
    which proves the smoothness of $\gap$. The Lipschitz constant is 
    \begin{align}
        L_\smooth = \frac{1}{\smooth \sigma_{\reg}} \opnorm{\gradient}^2.
    \end{align}
\end{proof}

\section{Slack Geometry and Equilibrium Structure}
\label{appendix:strict complementarity}
This section develops a geometric description of Nash equilibria in semidefinite zero-sum games through slack operators and their associated tight subspaces. We first introduce best-response values and define positive semidefinite slack operators whose kernels characterize the directions that are optimal against a given opponent strategy. Using complementary slackness, we show that equilibrium strategies are supported entirely on these tight subspaces. We then lift this local characterization to a global one by aggregating across all equilibria, which leads to a classification of directions into essential, neutral, and non-essential. Under additional structural conditions, such as strict complementarity or nondegeneracy, this trichotomy collapses to a sharp dichotomy between directions that can arise at equilibrium and those that are strictly suboptimal. We conclude by comparing this semidefinite picture with the classical matrix-game setting, where strict complementarity is guaranteed to exist for some equilibrium, and hence the essential/non-essential dichotomy follows without requiring an additional nondegeneracy assumption.

\paragraph{Best-response values and slack operators}
We begin by introducing the best-response values and their associated slack operators, which identify the directions that are tight against a given opponent strategy.

\begin{definition}[Best-response values and slack operators]
\label{def:slack-operators}
For $\beta\in\bspace$ and $\alpha\in\aspace$, define the best-response values and slack operators:
\begin{align*}
    w(\beta) &:= \max_{\alpha\in\aspace}\inner{\alpha,\Xi(\beta)} = \lambda_{\max}(\Xi(\beta)), & W(\beta) &:= w(\beta)\,\eye_{\aspace}-\Xi(\beta)\in\hmat_+^{n}, \\
    v(\alpha) &:= \min_{\beta\in\bspace}\inner{\Xi^\dagger(\alpha),\beta} = \lambda_{\min}(\Xi^\dagger(\alpha)), & V(\alpha) &:= \Xi^\dagger(\alpha)-v(\alpha)\,\eye_{\bspace}\in\hmat_+^{m}.
\end{align*}
\end{definition}

The operator $W(\beta)\succeq 0$ measures Alice's slack against $\beta$. Its kernel is the eigenspace of $\Xi(\beta)$ corresponding to the top eigenvalue $w(\beta)$, representing the set of Alice directions that achieve the best-response value against $\beta$. We refer to $\ker(W(\beta))$ as Alice's \emph{tight subspace} against $\beta$. Bob's slack against $\alpha$ and his tight subspace $\ker(V(\alpha))$ are defined symmetrically.

\paragraph{Complementary slackness identities}
We connect these tight subspaces to Nash equilibria via semidefinite complementary slackness, specializing a result from \citet[Theorem~2.1]{ickstadt2025nashequilibriasemidefinitegames}.

\begin{lemma}[Complementary slackness characterization of equilibrium]
\label{lem:complementary-slackness}
A pair $(X^*,Y^*)\in\aspace\times\bspace$ is a Nash equilibrium with value $\gameval$ if and only if
\[
    W(Y^*)=\gameval\,\eye_{\aspace}-\Xi(Y^*)\succeq 0,
    \qquad
    V(X^*)=\Xi^\dagger(X^*)-\gameval\,\eye_{\bspace}\succeq 0,
\]
and
\begin{align}
\label{eqn:complementary slackness identities}
    \inner{X^*,W(Y^*)}=0,
    \qquad
    \inner{Y^*,V(X^*)}=0.
\end{align}
\end{lemma}

The PSD conditions ensure no pure direction outperforms the equilibrium value; e.g., $\bra{u}W(Y^*)\ket{u} = \gameval - \bra{u}\Xi(Y^*)\ket{u} \ge 0$. Equality holds exactly on the tight subspaces. Consequently, \eqref{eqn:complementary slackness identities} implies an equilibrium strategy places mass only on directions tight against the opponent:

\begin{corollary}
\label{cor:support-contained-in-tight}
Let $(X^*,Y^*)$ be a Nash equilibrium. Then $\range(X^*)\subseteq \ker(W(Y^*))$ and $\range(Y^*)\subseteq \ker(V(X^*))$.
\end{corollary}
\begin{proof}
We prove the first inclusion and the second is symmetric. Since $X^*\succeq 0$ and $W(Y^*)\succeq 0$, the identity $\inner{X^*,W(Y^*)}=\operatorname{Tr}(X^*W(Y^*))=0$
implies $X^{*1/2}W(Y^*)X^{*1/2}=0.$
Hence \(W(Y^*)X^*=0\), and therefore, $\range(X^*)\subseteq \ker(W(Y^*))$.
\end{proof}

To pass from local geometry to a global classification across the full equilibrium set $\NE$, we define the global supported ($\mathcal{S}$) and tight ($\mathcal{T}$) sets:
\begin{align*}
    \mathcal S_A &:= \bigcup_{(X^*,Y^*)\in\NE} \range(X^*), & \mathcal T_A &:= \bigcup_{(X^*,Y^*)\in\NE} \ker(W(Y^*)), \\
    \mathcal S_B &:= \bigcup_{(X^*,Y^*)\in\NE} \range(Y^*), & \mathcal T_B &:= \bigcup_{(X^*,Y^*)\in\NE} \ker(V(X^*)).
\end{align*}
By Corollary~\ref{cor:support-contained-in-tight}, $\mathcal S_A \subseteq \mathcal T_A$ and $\mathcal S_B \subseteq \mathcal T_B$. The potential gap between these sets motivates the following trichotomy.

\begin{definition}[Essential, neutral, and non-essential directions]
\label{def:three-direction-types}
A unit direction $u \in \C^n$ is \emph{essential} for Alice if $u\in \mathcal S_A$, \emph{neutral} if $u\in \mathcal T_A\setminus \mathcal S_A$, and \emph{non-essential} if $u\notin \mathcal T_A$. Bob's directions $v \in \C^m$ are classified symmetrically using $\mathcal S_B$ and $\mathcal T_B$.
\end{definition}

\paragraph{Strict complementarity and nondegeneracy}
We now introduce structural conditions that eliminate the neutral class by upgrading the inclusions in Corollary~\ref{cor:support-contained-in-tight} to equalities.

\begin{condition}[Strict complementarity]
\label{cond:strict-complementarity}
A Nash equilibrium $(X^*,Y^*)$ is \emph{strictly complementary} if 
\[
    \rank(X^*)+\rank(W(Y^*))=n, \qquad \rank(Y^*)+\rank(V(X^*))=m.
\]
\end{condition}

\begin{lemma}[Pointwise equality under strict complementarity]
\label{lem:SC-gives-equality}
If $(X^*,Y^*)$ is strictly complementary, then $\range(X^*)=\ker(W(Y^*))$ and $\range(Y^*)=\ker(V(X^*))$.
\end{lemma}

\begin{proof}
By Corollary~\ref{cor:support-contained-in-tight}, $\range(X^*)\subseteq \ker(W(Y^*))$. Strict complementarity and rank-nullity dictate $\dim(\ker(W(Y^*))) = n - \rank(W(Y^*)) = \rank(X^*)$. Since dimensions match, the inclusion is an equality. The argument for Bob is symmetric.
\end{proof}

We also adopt a game-theoretic regularity condition \citep{ickstadt2025nashequilibriasemidefinitegames} that guarantees strict complementarity.

\begin{definition}[Nondegeneracy]
\label{def:nondegeneracy}
The game is \emph{nondegenerate} if for every $X\in\aspace$ and $Y\in\bspace$,
\[
    \dim(\ker(W(Y))) \le \rank(Y), \qquad \dim(\ker(V(X))) \le \rank(X).
\]
\end{definition}

\begin{lemma}[Nondegeneracy implies strict complementarity]
\label{lem:nondegenerate-implies-SC}
If the game is nondegenerate, then every Nash equilibrium is strictly complementary.
\end{lemma}

\begin{proof}
For any $(X^*,Y^*)\in\NE$, Corollary~\ref{cor:support-contained-in-tight} and nondegeneracy imply $\rank(X^*) \le \dim(\ker(W(Y^*))) \le \rank(Y^*)$. By symmetry, $\rank(Y^*) \le \dim(\ker(V(X^*))) \le \rank(X^*)$. Equality must hold throughout, satisfying Condition~\ref{cond:strict-complementarity} via rank-nullity.
\end{proof}

\paragraph{Essential / non-essential separation}
Strict complementarity forces $\mathcal S_A = \mathcal T_A$ and $\mathcal S_B = \mathcal T_B$ for both players, collapsing the trichotomy into a dichotomy characterized by equilibrium payoffs.

\begin{theorem}
\label{thm:SC-collapse}
If every Nash equilibrium is strictly complementary (e.g., if the game is nondegenerate), then neutral directions do not occur ($\mathcal S_A=\mathcal T_A, \mathcal S_B=\mathcal T_B$). Every direction is either:
\begin{enumerate}
    \item \emph{Essential}: $\bra{u}\Xi(Y^*)\ket{u}=\gameval$ for some equilibrium $(X^*,Y^*)$; or
    \item \emph{Non-essential}: $\bra{u}\Xi(Y^*)\ket{u}<\gameval$ for all equilibria.
\end{enumerate}
The symmetric statement holds for Bob using $\Xi^\dagger(X^*)$ and $\gameval$.
\end{theorem}

\begin{proof}
Global equality $\mathcal{S}=\mathcal{T}$ follows from taking unions over Lemma~\ref{lem:SC-gives-equality}. For the payoffs, $u \in \mathcal{S}_A = \mathcal{T}_A \iff \exists (X^*,Y^*) \text{ s.t. } u \in \ker(W(Y^*))$, giving $\bra{u}(\gameval I - \Xi(Y^*))\ket{u} = 0$. Conversely, $u \notin \mathcal{T}_A \iff \bra{u}W(Y^*)\ket{u} > 0$ for all equilibria, yielding $\bra{u}\Xi(Y^*)\ket{u} < \gameval$.
\end{proof}

\begin{corollary}[Unique equilibrium]
\label{cor:unique-equiv-neutral-SC}
Under a unique Nash equilibrium $(X^*,Y^*)$, strict complementarity is equivalent to the absence of neutral directions.
\end{corollary}

\begin{proof}
If no neutral directions exist, $\range(X^*) = \ker(W(Y^*))$ and $\range(Y^*) = \ker(V(X^*))$. The claim follows from rank-nullity. The converse follows from Theorem~\ref{thm:SC-collapse}.
\end{proof}

\begin{theorem}
\label{thm:master-equilibrium}
There exists a equilibrium $(X^\dagger,Y^\dagger)\in\NE$ such that
\begin{align*}
     \range(X^\dagger) = \mathcal{S}_A,  \qquad
     \range(Y^\dagger) = \mathcal{S}_B.
\end{align*}
Consequently, if every Nash equilibrium is strictly complementary, then every
non-essential direction is strictly suboptimal against this equilibrium:
\[
    \bra{u}\Xi(Y^\dagger)\ket{u}<\gameval
    \quad\text{for all non-essential }u,
\]
and symmetrically for Bob.
\end{theorem}
\begin{proof}
    Recall that $\mathcal{S}_A = \bigcup_{(X^*, Y^*) \in \NE} \range(X^*)$. Because $\NE$ is a convex set and $\range(X_1 + X_2) = \range(X_1) + \range(X_2)$ for any $X_1, X_2 \succeq 0$, the set $\mathcal{S}_A$ is a subspace of $\C^n$. Since $\C^n$ is finite-dimensional, there exists a finite collection of equilibria $\{(X^{(i)}, Y^{(i)})\}_{i=1}^N \subseteq \NE$ such that their combined ranges span the entire essential space:$$    \mathcal{S}_A = \sum_{i=1}^N \range(X^{(i)}), \qquad \mathcal{S}_B = \sum_{i=1}^N \range(Y^{(i)}).$$Let $(X^\dagger, Y^\dagger) = \frac{1}{N} \sum_{i=1}^N (X^{(i)}, Y^{(i)})$. By the convexity of the Nash equilibrium set, $(X^\dagger, Y^\dagger) \in \NE$. Furthermore, for any finite sum of PSD matrices with positive weights, the range of the sum is exactly the sum of the individual ranges. Thus, $\range(X^\dagger) = \mathcal{S}_A$ and $\range(Y^\dagger) = \mathcal{S}_B$.
    
    Finally, assume every equilibrium is strictly complementary. By Theorem~\ref{thm:SC-collapse}, this implies $\mathcal{S}_A = \mathcal{T}_A$. If $u$ is non-essential, then by Definition~\ref{def:three-direction-types}, $u \notin \mathcal{T}_A$. By the definition of the global tight set, $u$ cannot be in the kernel of the slack operator $W(Y^\dagger)$. Since $W(Y^\dagger) \succeq 0$, it follows that $\bra{u}W(Y^\dagger)\ket{u} > 0$, which yields the strict suboptimality $\bra{u}\Xi(Y^\dagger)\ket{u} < \gameval$.
\end{proof}

\paragraph{Comparison with classical matrix games}
In this part, we contrast the classical linear programming (LP) setting with our quantum SDP framework to highlight a fundamental structural difference regarding strict complementarity. We note that the classical setting is naturally embedded within our framework: if we restrict the strategies $X, Y$ and the payoff observable to be diagonal matrices, the SDP reduces exactly to a classical matrix game.

The critical divergence lies in the generic existence of strictly complementary equilibria. In the SDP setting, strict complementarity can fail, permitting the existence of neutral directions unless regularity conditions (like nondegeneracy) are imposed \cite{Alizadeh1997}. In contrast, for classical finite zero-sum games, the Goldman--Tucker theorem guarantees the existence of at least one strictly complementary equilibrium. Thus, although strict complementarity need not hold for every equilibrium, one can always choose an equilibrium for which support and tightness coincide pointwise.

\begin{definition}[Classical slacks and complementary slackness]
Let $A \in \mathbb{R}^{m\times n}$ be the payoff matrix of a zero-sum game with value $v$. For any Nash equilibrium $(x^*,y^*) \in \Delta_m \times \Delta_n$, the primal-dual LP formulation defines the row slacks $r \in \mathbb{R}^m_+$ and column slacks $c \in \mathbb{R}^n_+$ as
$$r := v\mathbf{1} - Ay^* \ge 0, \qquad c := A^\top x^* - v\mathbf{1} \ge 0.$$
Complementary slackness dictates that $x_i^* r_i = 0$ and $y_j^* c_j = 0$ for all $i, j$.
\end{definition}

In the classical setting, the tight subspaces discussed in Definition~\ref{def:slack-operators} correspond to the zero-entries of the slack vectors $r$ and $c$. We classify pure strategies analogously to Definition~\ref{def:three-direction-types}: a pure strategy is \emph{essential} if it belongs to the support of at least one Nash equilibrium, and \emph{non-essential} otherwise. 

\begin{theorem}[Goldman-Tucker strict complementarity]
\label{thm:classical-sc-existence}
Every finite zero-sum matrix game admits at least one \emph{strictly complementary} Nash equilibrium $(x^*,y^*)$, satisfying
$$x_i^* + r_i > 0 \quad \text{for all } i, \qquad y_j^* + c_j > 0 \quad \text{for all } j.$$
\end{theorem}
Because Goldman--Tucker guarantees the existence of a strictly complementary equilibrium, one may fix such an equilibrium $(x^*,y^*)$. For this particular equilibrium pair, support and tightness coincide pointwise:
\[
    x_i^*>0 \iff r_i=0,
    \qquad
    y_j^*>0 \iff c_j=0.
\]
This is sufficient to collapse the classical trichotomy into an essential/non-essential dichotomy, without introducing an additional nondegeneracy assumption. The following is a classical consequence closely related to Lemma C.3 of \citet{mertikopoulos2017cyclesadversarialregularizedlearning}.

\begin{theorem}[Classical essential / non-essential separation]
\label{thm:essential-separation-classical}
Let $(x^*,y^*)$ be a strictly complementary equilibrium of a classical zero-sum game. The classical trichotomy collapses into a strict dichotomy exactly mirroring Theorem~\ref{thm:SC-collapse}:
\begin{enumerate}
    \item \textbf{Essential strategies:} If row $i$ is essential, then $x_i^* > 0$, implying it is tight against $y^*$ (i.e., $(Ay^*)_i = v$).
    \item \textbf{Non-essential strategies:} If row $i$ is non-essential, then $x_i^* = 0$, implying it is strictly suboptimal against $y^*$ (i.e., $(Ay^*)_i < v$).
\end{enumerate}
The symmetric statement holds for the column player's strategies against $x^*$.
\end{theorem}

\begin{proof}
If row $i$ is essential, there exists an equilibrium $\bar{x}$ where $\bar{x}_i > 0$. Since $\bar{x}$ secures value $v$ against the equilibrium strategy $y^*$, we must have $\bar{x}^\top A y^* = v$. Because $(Ay^*)_k \le v$ for all $k$, any row in the support of $\bar{x}$ must be tight; hence $(Ay^*)_i = v$. Since $(x^*,y^*)$ is strictly complementary, $r_i = 0 \implies x_i^* > 0$.

Conversely, if row $i$ is non-essential, it is never in the support of any equilibrium, so $x_i^* = 0$. Strict complementarity then requires $r_i > 0$, yielding $(Ay^*)_i < v$. The column argument is symmetric.
\end{proof}


\section{Omitted Proofs for Convergence Analysis}
\label{appendix:OmittedConvergence}
In Section \ref{subsec:q-smoothing}, we analyze, as a helping proposition, the iteration complexity of the inner \qsmoothing procedure of the \itersmooth algorithm \ref{alg:iterative smoothing}. Section \ref{subsec:spms} defines Saddle-Point Metric Subregularity (SP-MS) for spectraplexes and proves that SP-MS holds for quantum zero-sum games with a (related) parameter $\spmsparam=0$. In Section \ref{subsec:q-iterated-convergence}, we use the results in Section \ref{subsec:q-smoothing} and \ref{subsec:spms} to prove Theorem \ref{thm:q-iterated-convergence-rate-fin}. Section \ref{subsec:ogda-convergence} then establishes the convergence rate of \ogda under SP-MS with $\spmsparam=0$. 

\subsection{Convergence Rates of \qsmoothing}
\label{subsec:q-smoothing}
\subsubsection{Road-map for Main Results}
\label{subsec:q-smoothing-statements}
This section transfers the progress guarantee for the smoothed gap $\gap_\smooth(\joint)$ to a guarantee for the original duality gap $\gap(\joint)$.
By the smoothing construction in Section~\ref{subsec:smoothing-prelim}, the smoothed gap defined in~\eqref{eqn:smoothed-gap} uniformly approximates the original gap:
\begin{align} \label{ineqn:bound-gap-by-smoothed-gap}
    \gap_\smooth(\joint)
    \;\le\;
    \gap(\joint)
    \;\le\;
    \gap_\smooth(\joint) + \smooth \regmax,
    \qquad \forall \joint \in \jointsp,
\end{align}
where $\regmax := \max_{\jointalt \in \jointsp} \reg(\jointalt)$.
Choosing $\smooth=\varepsilon/(2\regmax)$ gives $\smooth\regmax=\varepsilon/2$.
Hence it suffices to ensure $\gap_\smooth(\joint_k)\le \varepsilon/2$, since then $\gap(\joint_k)\le \varepsilon$.

It remains to bound the number of \qsmoothing iterations needed to drive the smoothed gap below $\varepsilon/2$.
The \qsmoothing\ analysis yields an explicit rate for $\gap_\smooth(\joint_k)-\gap_\smooth(\jointstar)$ with $\jointstar\in\Opt$. Combining this rate with $\smooth=\varepsilon/(2\regmax)$ gives the iteration complexity stated below.
\qsmoothingprop*
The proof of Proposition~\ref{prop:qsmoothing}, together with the auxiliary lemmas needed for the analysis, is deferred to Appendix~\ref{appendix:prop:qsmoothing}.

\subsubsection{Proofs of Auxiliary Lemmas and Proposition \ref{prop:qsmoothing}}
\label{appendix:prop:qsmoothing}
To analyze the number of iterations of \qsmoothing, we adapt the technique of \citet{Lan2011PrimalDual}, which bounds the original duality gap $\gap(\joint_k)$ using the smoothed gap $\gap_\smooth(\joint_k)$ introduced in Section~\ref{subsec:smoothing-prelim}.
Recall the definition in~\eqref{eqn:smoothed-gap}:
\begin{align*}
    \gap_\smooth(\joint) = \max_{\jointalt \in \jointsp} \braces*{\tr{\a \superop(\balt^\top)} - \tr{\aalt \superop(\b^\top)} - \smooth \reg(\jointalt)}.
\end{align*}
The next lemma adapted from \citet[Theorem 1]{Lan2011PrimalDual} gives a bound on the convergence of \qsmoothing.
\begin{lemma} \label{lemma:smoothed-gap-bound}
    Suppose that the sequence $\braces{\joint_k}$ is generated by \qsmoothing algorithm with $\smooth = \frac{\varepsilon}{2 \regmax}$ and let $\jointstar \in \Opt$. Then, for all $k \geq 1$
    \begin{align} \label{eqn:gap-between-kth-interation-and-opt}
        \gap_\smooth (\joint_k) - \gap_\smooth(\jointstar) \leq \frac{4 L_\smooth \frac{1}{2} \fnorm{\joint_0 - \jointstar}^2 }{\sigma_{\reg} k (k+1)}.
    \end{align}
\end{lemma}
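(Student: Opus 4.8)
The plan is to view \qsmoothing as the matrix analogue of Nesterov's accelerated gradient method---in the primal–dual form of \citet{Lan2011PrimalDual}---run on the objective $\gap_\smooth$ over the convex compact domain $\jointsp$ with the quadratic prox-function $\reg$. Under this identification the lemma is essentially a transcription of Theorem~1 of \citet{Lan2011PrimalDual} into the Hilbert–Schmidt geometry, so the real work is to check that its hypotheses hold and to match constants. The structural ingredients are already available: by the Proposition preceding the lemma, $\gap_\smooth$ is convex---being, via Equation~\eqref{eqn:smoothed gap gradient}, a pointwise maximum over $\jointalt$ of functions affine in $\joint$---and has $L_\smooth$-Lipschitz gradient with respect to the Frobenius norm, with $L_\smooth=\opnorm{\gradient}^2/(\smooth\sigma_{\reg})$ and $\grad_\joint\gap_\smooth(\joint)=\gradient(\maximizer(\joint))$ a genuine gradient since $\maximizer(\joint)$ is single-valued. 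Moreover $\reg$ is $\sigma_{\reg}$-strongly convex with respect to the same (self-dual) norm, with $\sigma_{\reg}=1$, and, being a shifted quadratic with identity Hessian, its Bregman divergence satisfies $D_{\reg}(\joint,\jointalt)=\tfrac12\fnorm{\joint-\jointalt}^2$ independently of the reference point $\jointtilde$; this last identity is what ultimately produces the factor $\tfrac12\fnorm{\joint_0-\jointstar}^2$ in the stated bound.

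Concretely I would proceed as follows. \emph{Step 1.} Write down the coupled iterates of \qsmoothing---the prox sequence $\{w_k\}$, the extrapolated points $\{\underline{\joint}_k\}$ at which the gradient $\gradient(\maximizer(\cdot))$ is evaluated, and the output iterates $\{\joint_k\}$---together with the weights $\gamma_k=\Theta(k)$ and the matching momentum coefficients specified by the algorithm. \emph{Step 2.} At each prox-step, invoke the three-point (prox) inequality for the prox-mapping of the linearized objective against the $\sigma_{\reg}$-strongly convex $\reg$: for an arbitrary comparator $\joint\in\jointsp$,
\[
\gamma_k\,\bigl\langle\gradient(\maximizer(\underline{\joint}_k)),\,w_k-\joint\bigr\rangle
\;\le\; D_{\reg}(\joint,w_{k-1})-D_{\reg}(\joint,w_k)-\tfrac{\sigma_{\reg}}{2}\fnorm{w_k-w_{k-1}}^2 .
\]
\emph{Step 3.} Combine this with the $L_\smooth$-smoothness descent inequality for $\gap_\smooth$ applied along the extrapolation step, and with convexity of $\gap_\smooth$, so as to replace the linear term by the function-value surplus $\gap_\smooth(\joint_k)-\gap_\smooth(\joint)$ up to controllable error. \emph{Step 4.} Telescope over $k$: with $\Gamma_k:=\sum_{j\le k}\gamma_j=\Theta(k(k+1))$ and the weights chosen so that the cross terms cancel and the residual quadratic terms are absorbed by the Bregman potential, one is left with
\[
\gap_\smooth(\joint_k)-\gap_\smooth(\joint)\;\le\;\frac{4L_\smooth}{\sigma_{\reg}\,k(k+1)}\,D_{\reg}(\joint,\joint_0).
\]
\emph{Step 5.} Specialize $\joint=\jointstar\in\Opt$ and substitute $D_{\reg}(\jointstar,\joint_0)=\tfrac12\fnorm{\joint_0-\jointstar}^2$ to obtain exactly \eqref{eqn:gap-between-kth-interation-and-opt}. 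Note that the prescribed value $\smooth=\epsilon/(2\regmax)$ enters this argument only through $L_\smooth$; it is carried along because it is the value at which \eqref{ineqn:bound-gap-by-smoothed-gap} later converts \eqref{eqn:gap-between-kth-interation-and-opt} into a bound on the true duality gap $\gap(\joint_k)$.

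The main obstacle is bookkeeping: choosing the weight sequence $\{\gamma_k\}$ and momentum parameters so that, after telescoping, every residual error term cancels identically and the leading constant comes out precisely as $4L_\smooth/(\sigma_{\reg}k(k+1))$ times the Bregman divergence---i.e.\ faithfully reproducing the conventions of \citet{Lan2011PrimalDual}. A secondary, and in our setting minor, point is to keep the prox-mapping and the smoothness/descent inequalities consistently phrased with respect to the Frobenius norm (self-dual, so primal and dual norms coincide) and to note that no subdifferential subtleties arise because $\maximizer(\joint)$ is unique. Nothing particular to the spectraplex geometry is required for this lemma: the curvature of the feasible set plays no role here and enters only later, through the SP–MS error bound.
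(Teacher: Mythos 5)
Your proposal is correct and takes the same route as the paper, which does not write out a proof of Lemma~\ref{lemma:smoothed-gap-bound} but simply flags it as ``adapted from Theorem~1 in \citet{Lan2011PrimalDual}''; your plan carries out precisely the adaptation that citation leaves implicit---checking convexity of $\gap_\smooth$ (pointwise maximum of functions affine in $\joint$ via \eqref{eqn:smoothed gap gradient}), $L_\smooth$-smoothness from the preceding proposition, $\sigma_\reg$-strong convexity of $\reg$ with $D_\reg(\joint,\jointalt)=\tfrac12\fnorm{\joint-\jointalt}^2$ in the self-dual Frobenius norm, and then telescoping the accelerated prox-mapping estimates. One negligible slip: the preceding proposition establishes only smoothness of $\gap_\smooth$, not its convexity, but the convexity argument you supply is immediate and correct.
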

\noindent Given Lemma \ref{lemma:gap-bound} bounding $\gap_\smooth(\joint_k)$, we can now derive a bound on the original gap $\gap(\joint_k)$.  
\begin{lemma} 
    \label{lemma:gap-bound}
   Under the same assumptions as lemma \ref{lemma:smoothed-gap-bound}, for all $k \geq 1$,
    \begin{align*}
        \gap(\joint_k) \leq \smooth \regmax + \frac{4 L_\smooth \frac{1}{2} \fnorm{\joint_0 - \jointstar}^2 }{\sigma_{\reg} k^2}.
    \end{align*}    
\end{lemma}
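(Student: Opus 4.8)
The plan is to chain together the two facts already available: the sandwich inequality~\eqref{ineqn:bound-gap-by-smoothed-gap} relating $\gap$ to $\gap_\smooth$, and the convergence bound of Lemma~\ref{lemma:smoothed-gap-bound} on $\gap_\smooth(\joint_k) - \gap_\smooth(\jointstar)$. Concretely, first I would invoke the right-hand inequality in~\eqref{ineqn:bound-gap-by-smoothed-gap} at the iterate $\joint_k$ to get $\gap(\joint_k) \le \gap_\smooth(\joint_k) + \smooth \regmax$, which already isolates the additive bias term $\smooth\regmax$ that appears in the claimed bound. It then remains to control $\gap_\smooth(\joint_k)$ itself.

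For that, I would use Lemma~\ref{lemma:smoothed-gap-bound}, which gives
\[
\gap_\smooth(\joint_k) \;\le\; \gap_\smooth(\jointstar) + \frac{4 L_\smooth \tfrac12 \fnorm{\joint_0 - \jointstar}^2}{\sigma_{\reg}\, k(k+1)}.
\]
The key small observation is that $\gap_\smooth(\jointstar) \le 0$: since $\jointstar \in \Opt$ we have $\gap(\jointstar) = \min_{\joint \in \jointsp} \gap(\joint) = 0$ by von Neumann's min–max theorem, and the left-hand inequality of~\eqref{ineqn:bound-gap-by-smoothed-gap} gives $\gap_\smooth(\jointstar) \le \gap(\jointstar) = 0$. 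Dropping this nonpositive term and using $k(k+1) \ge k^2$ yields $\gap_\smooth(\joint_k) \le \frac{4 L_\smooth \tfrac12 \fnorm{\joint_0 - \jointstar}^2}{\sigma_{\reg}\, k^2}$.

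Combining the two displays gives exactly $\gap(\joint_k) \le \smooth \regmax + \frac{4 L_\smooth \tfrac12 \fnorm{\joint_0 - \jointstar}^2}{\sigma_{\reg}\, k^2}$, as desired. There is no real obstacle here — the statement is a direct corollary of Lemma~\ref{lemma:smoothed-gap-bound} together with~\eqref{ineqn:bound-gap-by-smoothed-gap}; the only point requiring a moment's care is justifying $\gap_\smooth(\jointstar)\le 0$, which as noted follows immediately from optimality of $\jointstar$ and the definition of the game value. (One could alternatively substitute $\smooth = \tfrac{\epsilon}{2\regmax}$ at the end to read off that the bias term equals $\epsilon/2$, though the lemma as stated keeps $\smooth\regmax$ symbolic.)
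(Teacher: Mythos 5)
Your proof is correct and is essentially the same argument as the paper's, which decomposes $\gap(\joint_k)-\min_{\joint\in\jointsp}\gap(\joint)$ into the three terms $\gap(\joint_k)-\gap_\smooth(\joint_k)\le\smooth\regmax$, the middle term controlled by Lemma~\ref{lemma:smoothed-gap-bound}, and $\gap_\smooth(\jointstar)-\min_\joint\gap(\joint)\le 0$, exactly matching your chain of inequalities. Your explicit justification of $\gap_\smooth(\jointstar)\le 0$ and of replacing $k(k+1)$ by $k^2$ only makes transparent two steps the paper leaves implicit.
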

\begin{proof}
    By the min–max theorem we have $\min_{\joint\in \jointsp} \gap(\joint) = 0$. Observe that
    \begin{align}
        \gap(\joint_k) - \min_{\joint\in \jointsp} \gap(\joint) = \underbrace{\gap(\joint_k) - \gap_\smooth(\joint_k)}_{(I)} + \underbrace{\gap_\smooth(\joint_k) - \gap_\smooth(\jointstar)}_{\text{Lemma } \ref{lemma:smoothed-gap-bound}} + \underbrace{\gap_\smooth(\jointstar) - \min_{\joint\in \jointsp} \gap(\joint)}_{(II)}.
    \end{align}
    From Inequality \eqref{ineqn:bound-gap-by-smoothed-gap}, Equation $(I) \leq \smooth \regmax$ and Equation $(II) \leq 0$. Applying Lemma \ref{lemma:smoothed-gap-bound} to the middle term yields,
        \begin{align*}
            \gap(\joint_k) \leq \smooth \regmax + \frac{4 L_\smooth \frac{1}{2} \fnorm{\joint_0 - \jointstar}^2 }{\sigma_{\reg} k^2}.
        \end{align*}
\end{proof}
We are interested in the gap between $\gap(\joint_k)$ and $\min_{\joint\in \jointsp} \gap(\joint)$. We want to investigate how many first-order iterations the algorithm will require at most to terminate when $ \gap(\joint_k) -  \min_{\joint\in \jointsp} \gap(\joint) < \varepsilon$. We know $\min_{\joint\in \jointsp} \gap(\joint) = 0$ because of the minimax theorem. Notice that
\begin{align}
    \gap(\joint_k) = \underbrace{\gap(\joint_k) - \gap_\smooth(\joint_k)}_{(I)} + \underbrace{\gap_\smooth(\joint_k) - \gap_\smooth(\jointstar)}_{\text{Inequality } \eqref{eqn:gap-between-kth-interation-and-opt}} + \underbrace{\gap_\smooth(\jointstar) - \min_{\joint\in \jointsp} \gap(\joint)}_{(II)}.
\end{align}
According to Inequality \eqref{ineqn:bound-gap-by-smoothed-gap}, Equation $(I) \leq \smooth \regmax$ and Equation $(II) \leq 0$. Therefore,
\begin{align*}
    \gap(\joint_k) \leq \smooth \regmax + \frac{4 L_\smooth \frac{1}{2} \fnorm{\joint_0 - \jointstar}^2 }{\sigma_{\reg} k^2}, \quad \forall k \geq 1.
\end{align*}
With the bound on the duality gap after $k$ iterations in hand, we can derive the iteration complexity of \qsmoothing.
We restate Proposition~\ref{prop:qsmoothing} for convenience below.
\qsmoothingprop*
\begin{proof}
    Plug in $\sigma_{\reg} = 1$ and $L_\smooth = \frac{1}{\smooth} \opnorm{\gradient}^2$ from previous section. Therefore, let $\varepsilon > 0$, we want
    \begin{align} \label{ineqn:bound-kth-gap}
        \gap(\joint_k) \leq \frac{\varepsilon}{2} + \frac{4 \cdot \opnorm{\gradient}^2 \cdot \regmax \cdot \fnorm{\joint_0 - \jointstar}^2}{\varepsilon \cdot k^2} \leq \varepsilon.
    \end{align}
    Let $\jointstar \in \argmin_{\jointalt \in \Opt} \braces{\fnorm{\joint_0 - \jointalt}}$. Then, $\fnorm{\joint_0 - \jointstar} = \dist(\joint_0)$. Solve Inequality \eqref{ineqn:bound-kth-gap} for $k$ will give the number of iterations.
\end{proof}
\subsection{Saddle-Point Metric Subregularity for Spectraplexes (SP-MS)}
\label{appendix:spms}
\subsubsection{Road-map for Main Result}
\label{appendix:spms-roadmap}
Before proving Theorem~\ref{thm:sp-ms-for-spectraplex}, we first formally define SP-MS in Definition~\ref{def:spms} and restate Theorem~\ref{thm:sp-ms-for-spectraplex} as Theorem~\ref{thm:sp-ms-for-spectraplex-appendix} for convenience.

\begin{restatable}{definition}{spmsdefspectraplexes}
(SP-MS for spectraplexes)
\label{def:spms}
For any $\joint \in \jointsp \setminus \cond{\jointsp}$, let $\sol{\joint}$ denote its projection onto the optimal joint state set, defined as $\sol{\joint} = \projz(z) \defeq \argmin_{z' \in \cond{\jointsp}} \fnorm*{z - z'}$. The SP-MS conditions is defined as
	\begin{align}
		\sup_{\jointalt \in \jointsp} \frac{\tr{\gradpayoff(\joint)\, (\joint - \jointalt)}}{\fnorm{\joint - \jointalt}} \geq C \fnorm*{\joint - \jointopt}^{\spmsparam + 1}
	\end{align}
	holding for some parameter $\spmsparam \geq 0$ and constant $C > 0$.
\end{restatable}
With the definition of SP-MS, we are ready to prove the following theorem.
\begin{restatable}{theorem}{thmspmsforspectraplexappendix}
\label{thm:sp-ms-for-spectraplex-appendix}
	A quantum zero-sum game with bilinear payoff function $\payoff(\jointwhole) = \payoffwhole$ defined above satisfies SP-MS with $\spmsparam = 0$.
\end{restatable}
Our proof follows the template of \citet[Theorem~5]{wei2021linear}, which establishes related results for the simplex, but we highlight the new difficulties that arise in the semidefinite (spectraplex) setting.  
The overall strategy decomposes into five parts.  
Before detailing each step, recall that establishing an error bound between the duality gap and the distance from the Nash set reduces to proving that, for some constant $c>0$ independent of accuracy, denoting $\aspaceopt = \Opt_\a$ and $\bspaceopt = \Opt_\b$ the sets of optimal strategies,
\begin{align*}
  & \max_{\balt \in \bspace} \bigl(\tr{\astate \superop(\balt)} - \gameval\bigr) 
      \;\geq\; c \,\fnorm*{\astate - \proj_{\aspaceopt}(\astate)} 
      && \forall\, \astate \in \aspace, \\
  & \max_{\aalt \in \aspace} \bigl(\gameval - \tr{\aalt \superop(\bstate)}\bigr) 
      \;\geq\; c \,\fnorm*{\bstate - \proj_{\bspaceopt}(\bstate)} 
      && \forall\, \bstate \in \bspace.
\end{align*}
Observe these inequalities are symmetric; hence it suffices to prove the first.  

\medskip

\noindent\textbf{Goal reduction.}
Since the two inequalities are symmetric, it suffices to prove the $\a$-side bound.
Write the equilibrium maximizer set as
\[
\aspaceopt=\Bigl\{\a\in\aspace:\ \max_{\b \in \bspace}\tr{\a\superop(\b)} \le \gameval\Bigr\},
\qquad
\aopt=\proj_{\aspaceopt}(\a).
\]
The strategy is to (i) lower bound the duality gap by a \emph{supporting functional} that separates
$\a$ from $\aopt$ (Claims~\ref{claim: a-violated-constraint} and \ref{claim:max-greater-max-of-subset}), and (ii) upper bound $\|\a-\aopt\|_F$ by that separation
functional times a \emph{uniformly bounded conic weight} coming from a normal-cone decomposition (Claims~\ref{claim:in-mcone} and \ref{claim:conical-bound-coefficient}).

\paragraph{Claim \ref{claim:compact}: Compactness of Equilibria.}
Define $g(\a) =\max_{\b \in \bspace}\tr{\a\superop(\b)}$.
For each fixed $\b$, $\tr{\a\superop(\b)}$ is linear in $\a$, hence convex, which implies that the pointwise maximum function $g$
is convex. Therefore $\aspaceopt$, by its definition, is convex.
Continuity of $g$ (via a maximum theorem/compactness of $\mathcal B$) implies $\aspaceopt$ is closed;
as a closed subset of compact $\aspace$, it is compact.
\emph{Why it matters:} the projection $\proj_{\aspaceopt}(\a)$ is well-defined and behaves stably.

\paragraph{Claim \ref{claim: a-violated-constraint}: Separation via ``Active" Subspaces.}
The technical pivot is to express $\aspaceopt$ as an intersection of halfspaces indexed by
\emph{optimality constraints} $\feasi$ (coming from extreme points of $\mathcal B$ through $\feasi=\superop(\b)$),
together with feasibility constraints of the spectraplex.
If $\a\notin\aspaceopt$ but satisfied all constraints tight at $\aopt$,
one constructs a compact intermediate feasible set by intersecting only the violated constraints and uses
the fact that $\aopt = \proj_{\aspaceopt}(\a)$ must lie on its boundary. This forces the existence of a tight constraint at
$\aopt$ that is \emph{strictly violated} by $\a$, contradicting the assumption.
\emph{Quantum departure:} “active constraints” are not coordinate indices (as on the simplex) but PSD
constraints encoded by Hermitian halfspaces.

\paragraph{Claim \ref{claim:max-greater-max-of-subset}: Duality Gap as a Proxy of constraint violations.}
Set $\feasi=\superop(\b)$ and $t_\feasi=\mathrm{val}(\superop)$ so that the duality gap is
$\max_{\b}(\tr{\a\superop(\b)}-\mathrm{val}(\superop))=\max_{\feasi\in \constrainto}(\tr{\feasi\a}-t_\feasi)$.
For $\feasi$ that is tight at $\aopt$ (i.e., $\tr{\feasi\aopt}=t_\feasi$), we have
\[
\max_{\b\in\bspace}\bigl(\tr{\a\superop(\b)}-\mathrm{val}(\superop)\bigr)
\ \ge\
\max_{\feasi\in \constrainto^*}\bigl(\tr{\feasi\a}-\tr{\feasi\aopt}\bigr)
=
\max_{\feasi\in \constrainto^*}\tr{\feasi(\a-\aopt)}.
\]
\emph{Why it matters:} we have reduced the gap to a linear functional applied to the deviation $\a-\aopt$.

\paragraph{Claim \ref{claim:in-mcone}: Decomposition via Normal Cone.}
We can show that $\a-\aopt$ lies in the normal cone $\ncone$ because
$\aopt$ is the Euclidean/Frobenius projection of $\a$ onto the convex set $\aspaceopt$.
The normal cone can be written as a conic hull of the constraint normals that are tight at $\aopt$
(feasibility constraints, tight optimality constraints, and the trace constraint direction $\eye$).
Intersecting with an appropriate feasibility cone yields a closed convex cone $\mcone$ containing
$\a-\aopt$ and admitting a representation
\[
\a-\aopt = \int_\conespace  w\,dp( w) + r \eye,
\quad \optim\in \conespace:=\constraintf^\ast\cup \constrainto^\ast,\ r\ge 0,
\]
for $w \in \conespace$ and some nonnegative measure $p$.
\emph{Quantum departure:} instead of finitely many active simplex facets, the spectraplex yields a potentially
infinite family of PSD-feasibility normals; the measure representation absorbs this.

\paragraph{Claim \ref{claim:conical-bound-coefficient}: Uniform Boundedness of Weights.}
Define the “total weight” functional $\varphi(\a-\aopt)=p(\conespace)+r$ from the above decomposition.
Normalize $\acone=\frac{(\a-\aopt)}{\|\a-\aopt\|_F}$ and restrict attention to the compact set
$\pcone = \mcone\cap\{\acone:\|\acone\|_\infty\le 1\}$ (compactness uses closedness and boundedness in finite dimensions).
Since $\varphi$ is linear and continuous, $\varphi(\pcone)$ is compact in $\mathbb R$, hence bounded:
\[
p(\conespace)+r \ \le\ C'\,\|\a-\aopt\|_F
\quad\text{for a constant $C'$ independent of $\a$.}
\]
\emph{Why it matters:} this is the uniform constant that replaces the “finite-dimensional combinatorial bound”
available in the simplex case.

\paragraph{How the claims combine to yield the error bound (and SP--MS).}
Using the decomposition from Claims~\ref{claim:in-mcone} and \ref{claim:conical-bound-coefficient} and duality of trace/Frobenius norms, one obtains a quadratic-to-linear
comparison of the form
\[
\|\a-\aopt\|_F^2
\ \le\
\left(\sup_{\optim\in \conespace}\tr{\optim(\a-\aopt)}\right)\cdot (p(\conespace)+r)
\ \le\
C'\,\|\a-\aopt\|_F\cdot \max_{\optim\in \constrainto^\ast}\tr{\optim(\a-\aopt)},
\]
where feasibility normals do not contribute positively (by feasibility of $\a$ and tightness at $\aopt$),
so the supremum reduces to tight optimality constraints.
Cancel $\|\a-\aopt\|_F$ and invoke Claim~\ref{claim:max-greater-max-of-subset} to lower bound the remaining term by the duality gap,
giving the desired inequality with $c=1/C'$. The $\b$-side follows identically, and the joint SP--MS statement
(Definition~\ref{def:spms}) follows by combining the two one-sided bounds.

\subsubsection{Proof of Theorem \ref{thm:sp-ms-for-spectraplex-appendix}}
\label{appendix:thm:sp-ms-for-spectraplex-appendix}

Appendix~\ref{appendix:spms-roadmap} provides a technical overview of the argument. In this section, we formally state and prove each of the five claims above.
We restate Definition \ref{def:spms} of SP-MS and Theorem \ref{thm:sp-ms-for-spectraplex-appendix}, which corresponds to Theorem~\ref{thm:sp-ms-for-spectraplex} in the main text.
\spmsdefspectraplexes*
\thmspmsforspectraplexappendix*

\begin{proof}
	Denote $\aspaceopt = \Opt_\a$ and $\bspaceopt = \Opt_\b$.
	We aim to prove the following two inequalities:
	\begin{align}
		& \max_{\balt \in \bspace} \parens*{\tr{\astate \superop(\balt)} - \gameval} \geq c \fnorm*{\astate - \proj_{\aspaceopt}(\astate)} \quad \text{for all } \ainspace \label{eqn:bound-deviation-alpha}\\
		& \max_{\aalt \in \aspace} \parens*{\gameval - \tr{\aalt \superop(\b)}} \geq c \fnorm*{\bstate - \proj_{\bspaceopt}(\bstate)} \quad \text{for all } \binspace. \label{eqn:bound-deviation-beta}
	\end{align}
	With these two inequalities and the diameter of joint state $\fnorm*{\joint - \jointalt} \leq 2$, we have:
	\begin{align}
		\sup_{\jointalt \in \jointsp} \frac{\tr{\gradpayoff(\joint)\, (\joint - \jointalt)}}{\fnorm{\joint - \jointalt}}  & \geq \frac{1}{2} \sup_{\jointalt \in \jointsp} \tr{\gradpayoff(\joint)\, (\joint - \jointalt)} \label{ineqn:condition-measure-ineqn} \\
		& 	= \frac{1}{2} \parens*{\max_{\balt \in \bspace} \tr{\astate \superop(\balt)} -  \min_{\aalt \in \aspace} \tr{\aalt \superop(\b)} } \notag	\\
		& \geq \frac{c}{2} \parens*{ \fnorm*{\bstate - \proj_{\bspaceopt}(\bstate)} + \fnorm*{\astate - \proj_{\aspaceopt}(\astate)}} \notag \\
		& \geq \frac{c}{2} \fnorm*{\joint - \projz(\joint)}. \notag
	\end{align}
	which shows that SP-MS holds when $\spmsparam = 0$. Therefore, we need it remains to proof inequality \eqref{eqn:bound-deviation-alpha} and \eqref{eqn:bound-deviation-beta}. The proofs of them follow a similar structure. We therefore present only the proof of inequality \eqref{eqn:bound-deviation-alpha} in full detail.
	By definition of the game value, we can rewrite $\aspaceopt$ in \eqref{eqn:star-argminmax} as follows
	\begin{align*}
		\aspaceopt & = \setdef*{\ainspace}{\max_{\binspace} \payoffwhole \leq \gameval}
	\end{align*}
	\begin{claim}\label{claim:compact}
		$\aspaceopt$ is convex and compact.
	\end{claim}
	\begin{proof}
		Consider the function $g \from \aspace \to \R$ defined by
		\begin{align*}
			g(\astate) \defeq \max_{\binspace}\payoffwhole.
		\end{align*}
		For each fixed $\bstate \in \bspace$, $\payoffwhole$ is linear in $\astate$, and hence convex. Since the pointwise maximum of convex functions is convex, it follows that $g$ is convex. Since the sublevel set of a convex function is convex, $\aspaceopt = \setdef*{\ainspace}{g(\astate) \leq \gameval}$ is convex.
		
		By Maximum Theorem, $g$ is continuous on $\aspace$. Since the sublevel of a continuous real-valued function is closed, $\aspaceopt$ is closed. Moreover, since $\aspaceopt$ is a closed subset of the compact set $\aspace$, it follows that $\aspaceopt$ is compact.
	\end{proof}
	By Bauer's maximum principle, we can write $\aspaceopt$ as
	\begin{align*}
		\aspaceopt = \setdef*{\ainspace}{\max_{\b \in \extpt(\bspace)} \payoffwhole \leq \gameval},
	\end{align*}
	where $\extpt(\bspace)$ denotes the set of extreme points of $\bspace$. Define the set of all feasibility constraints as
	\begin{align*}
		\constraintf = \setdef*{\feasi}{\feasi = -\uvec \adj{\uvec} \text{ for complex unit vectors } \uvec}, 
	\end{align*}
	and define the set of all optimality constraints as
	\begin{align*}
		\constrainto = \setdef*{\optim}{\optim = \superop(\bstate) \text{ for } \bstate \in \extpt(\bspace)}.
	\end{align*}
    We can now define $\aspaceopt$ as following:
	\begin{align*}
	    \aspaceopt = \setdef*{
	        \astate \in \hmat^\adim
	    }{
	        \underbrace{\tr{\feasi \astate} \leq t_\feasi}_{\text{(i)}},
	        \underbrace{\tr{\optim \astate} \leq t_\optim}_{\text{(ii)}},
	        \underbrace{\tr{\astate} = 1}_{\text{(iii)}},
	        \feasi \in \constraintf, \optim \in \constrainto
	    }
	\end{align*}
	$(i)$ and $(iii)$ are the constraints of feasibility for spectraplex with $t_\feasi = 0$ in our case. The inequality $(ii)$ are the constraints of optimality for $\payoff$ with $t_\optim = \gameval$ in our case. 
	

	Let's focus on $\astate \in \aspace \setminus \aspaceopt$ since if $\astate \in \aspaceopt$, equation \eqref{eqn:bound-deviation-alpha} holds trivially. Denote $\aopt \defeq \proj_{\aspaceopt}(\astate)$. Define a constraint is tight at $\aopt$ if $\tr{\feasi \aopt} = t_\feasi$ and $\tr{\optim \aopt} = t_\optim$. We denote $\cond{\constraintf}$ as the set of all tight feasibility constraints and $\cond{\constrainto}$ as the set of all tight optimality constraints at $\aopt$. That is,
	\begin{align*}
		\cond{\constraintf} = \setdef*{\feasi \in \constraintf}{\tr{\feasi \aopt} = t_\feasi}, \\
		\cond{\constrainto} = \setdef*{\optim \in \constrainto}{\tr{\optim \aopt} = t_\optim}. 
	\end{align*}
	
	\begin{claim} \label{claim: a-violated-constraint}
		Any point $\astate \in \aspace \setminus \aspaceopt$ must violate at least one the tight optimality constraint in $\cond{\constrainto}$.
	\end{claim}
	\begin{proof}
		Suppose that, by contradiction, $\astate$ satisfies $\tr{\optim \astate} = t_\optim$ for all $\optim \in \cond{\constrainto}$. Suppose that $\astate$ violates some $\wilde{\constrainto} \in \constrainto \setminus \cond{\constrainto}$. Then, we have the following constraints
		\begin{align*}
			& \tr{\optim \astate} \leq t_\optim \text{ for all } \optim \in \constrainto \setminus \wilde{\constrainto}, \\
			& \tr{\optim \astate} > t_\optim \text{ for all } \optim \in \wilde{\constrainto}. \\
			& \tr{\optim \aopt} < t_\optim \text{ for all } \optim \in \wilde{\constrainto}.
		\end{align*}
        
        Let $\aspace$ be a spectraplex and $\aspaceopt = \setdef*{\a \in \aspace}{\tr{\optim \a} \leq t_\optim, \forall \optim \in \constrainto}$. We have $\aopt = \proj_{\aspaceopt}(\a)$ and $\wilde{\constrainto} \subset \constrainto$ and $\wilde{\constrainto}$ might not be compact.
        Observe the set $\wilde{\aspace} = \setdef*{\a \in \aspace}{\tr{\optim \a} \leq t_\optim, \forall \optim \in \wilde{\constrainto}}$ is compact, because arbitrary intersection of closed set is closed and $\aspace$ is compact. Clearly, $\a$ is not in $\wilde{\aspace}$. $\proj_{\aspaceopt}(\a) \in \aspaceopt$ and also $\proj_{\aspaceopt}(\a) \in \wilde{\aspace}$. Since $\wilde{\aspace}$ is compact, $\proj_{\aspaceopt}(\a)$ should be on the boundary of $\wilde{\aspace}$. Therefore, there exists $\optim \in \wilde{\constrainto}$ such that $\tr{\optim \proj_{\aspaceopt}(\a)} = t_\optim$. However, we know that $\tr{\optim \aopt} < t_\optim$ for all $\optim \in \constrainto$. Therefore, $\aopt$ is not the projection of $\a$ into $\aspaceopt$, which is an contradiction.
	\end{proof}
	
	\begin{claim} \label{claim:max-greater-max-of-subset}
		\begin{align*}
			\max_{\binspace} \parens*{\payoffwhole  - \gameval} \geq \max_{\optim \in \cond{\constrainto}} \tr{\optim (\astate - \aopt)}.
		\end{align*}
	\end{claim}
	\begin{proof}
	Since we set $\optim = \superop(\bstate^\top)$ and $t_\optim = \gameval$ for every $\optim \in \constrainto$,
		\begin{align*}
			\max_{\binspace} \parens*{\payoffwhole  - \gameval} = \max_{\optim \in \constrainto} \parens*{\tr{\optim \astate} - t_\optim} \overset{(*)}{\geq} \max_{\optim \in \cond{\constrainto}} \parens*{\tr{\optim \astate} - \tr{\optim \aopt}} = \max_{\optim \in \cond{\constrainto}} \tr{\optim \parens*{\astate - \aopt}}.
		\end{align*}
		Inequality $(*)$ is because for every $\optim \in \cond{\constrainto}$, we have $\tr{\optim \aopt} = t_\optim$.		
	\end{proof}

	The normal cone of $\aspaceopt$ at $\aopt$ can be written as follows:
	\begin{align}
		\ncone & = \setdef*{\aalt - \aopt}{\aalt \in \hmat^{\adim}_{+}, \proj_{\aspaceopt}(\aalt) = \aopt} \notag \\
		& = \cone \parens*{\setdef*{\feasi}{\tr{\feasi \aopt} = t_\feasi} \cup \setdef*{\optim}{\tr{\optim \aopt} = t_\optim} \cup \setof*{\eye}} \notag \\
		& = \setdef*{\int_{\conespace} \feasi \ddp(\feasi) + \int_{\conespace} \optim \ddp(\optim) + r \eye}{r \geq 0, \text{ for all non-negative measure $\measure$ over $\conespace$}} \label{eqn:normal-cone},
	\end{align}
	where $\conespace = \cond{\constraintf} \cup \cond{\constrainto}$. 
	
	Define the feasible cone as follows:
		\begin{align}
			\feasicone
			=\;\hmat_{+}^{\adim}
			\;\cap\;
			\bigcap_{\feasialt\in\cond{\constraintf}}
			\setdef*{\acone\in\hmat_{+}^{\adim}}{\tr{\feasialt\,\acone}\le t_{\feasialt}}.
		\end{align}
	
	Then, we intersect the normal cone $\ncone$ with the feasible cone $\feasicone$:
		\begin{align}
		    \mcone 
		    &= \ncone \cap \feasicone 
		    = \ncone \cap \setdef*{
		        \acone \in \hmat^{\adim}_{+}
		    }{
		        \tr{\feasialt \acone} \leq t_\feasialt,\; \forall \feasialt \in \cond{\constraintf}
		    } \notag \\
		    &= \setdef*{
		        \acone = \int_{\conespace} \feasi \ddp(\feasi) 
		        + \int_{\conespace} \optim \ddp(\optim) 
		        + r \eye
		    }{
		        r \geq 0,\; 
		        \tr{\feasialt \acone} \leq 0,\; 
		        \forall \feasialt \in \cond{\constraintf}
		    }.
		    \label{eqn:normal-cone-intersect-feasible-cone}
		\end{align}
	Note that each element $\acone \in \mcone$ may correspond to a different non-negative measure $\measure$.
	
	\begin{claim} \label{claim:in-mcone}
		$\mcone$ is closed and convex and $\a - \aopt \in \mcone$.
	\end{claim}
	\begin{proof}		 
		Note that the normal cone $\ncone$ is closed and convex. Here $\hmat_{+}^{\adim}$ is PSD cone, which is closed and convex, and each set $\setdef*{\acone\in\hmat_{+}^{\adim}}{\tr{\feasialt\,\acone}\le t_{\feasialt}}$ is a closed half‐space. Since arbitrary intersections of closed convex sets remain closed and convex, it follows that $\feasicone$ is closed and convex. Therefore, $\mcone$ is a closed and convex cone.

		Since $\a - \aopt \in \ncone$, according to Equation \eqref{eqn:normal-cone}, we have:
		\begin{align*}
			\a - \aopt = \int_{\conespace} \feasi \ddp(\feasi) + \int_{\conespace} \optim \ddp(\optim) + r \eye,
		\end{align*}
		for some measure $\measure$ and $r \geq 0$. For the feasibility constraint, observe that:
		\begin{align*}
			\tr{\feasialt (\a - \aopt)} = \tr{\feasialt \a} - \tr{\feasialt \aopt} = \tr{\feasialt \a} - t_{\feasialt} \leq 0,
		\end{align*}
		because $\tr{\feasialt \aopt} = t_\feasialt$ for all $\feasialt \in \cond{\constraintf}$.
	\end{proof}
	\begin{claim} \label{claim:conical-bound-coefficient}
		We can write
		\begin{align*}
			\a - \aopt = \int_{\conespace} \feasi \ddp(\feasi) 
		        + \int_{\conespace} \optim \ddp(\optim) 
		        + r \eye
		\end{align*}
		for some measure $\measure$ with
		\begin{align*}
			0 \leq \measure(\conespace) + r \leq \constant \fnorm*{\a - \aopt}.
		\end{align*}
	\end{claim}
	\begin{proof}
		Observe that:
		\begin{align*}
			\frac{\a - \aopt}{\fnorm*{\a - \aopt}} \in \mcone, \a \neq \aopt
			\quad \text{and} \quad 
			\frac{\a - \aopt}{\fnorm*{\a - \aopt}} \in \setdef*{
			    \acone' \in \hmat^{\adim}_{+}
			}{
			    \supnorm{\acone'} \leq 1
			},
		\end{align*}
		where the latter set is the intersection of the PSD cone with a closed and bounded box defined by the entrywise $\ell_\infty$ norm.  We restrict $\mcone$ to the subset of bounded matrices under the $\ell_\infty$ norm
		\begin{align*}
		    \pcone = \mcone \cap \setdef*{
		        \acone' \in \hmat^{\adim}_{+}
		    }{
		        \supnorm{\acone'} \leq 1
		    }.
		\end{align*}
		$\pcone$ is the intersection of closed sets, so it is closed. Therefore $\pcone$ is compact.
		
		Let $\coefmap: \ncone \to \R$ be the map defined by
		\begin{align}
			\coefmap(\acone) = \int_\conespace 1 \ddp(\feasi) + \int_\conespace 1 \ddp(\optim) + r \cdot 1 = \measure(\conespace) + r.
		\end{align}
        
		Since sums of linear maps are linear, $\coefmap$ is linear. Since any linear map between finite-dimensional space is continuous, $\coefmap$ is continuous. Since $\pcone$ is a convex compact subset within $\ncone$ and $\coefmap$ is linear and continuous, $\coefmap(\pcone)$ is also compact in $\R$. Hence there exists a constant $\constant$ such that:
		\begin{align*}
			\coefmap(\acone) \leq \constant, \quad \forall \acone \in \pcone.
		\end{align*}
		Therefore, for any $\frac{\a - \aopt}{\fnorm*{\a - \aopt}} \in \pcone$,
		\begin{align*}
			\frac{\a - \aopt}{\fnorm*{\a - \aopt}} = \int_{\conespace} \feasi \ddp(\feasi) + \int_{\conespace} \optim \ddp(\optim) + r \eye \text{ with } 0 \leq \measure(\conespace) + r \leq \constant.
		\end{align*}
		Hence,
		\begin{align*}
			\a - \aopt = \int_{\conespace} \feasi \ddp'(\feasi) + \int_{\conespace} \optim \ddp'(\optim) + r' \eye \text{ with } 0 \leq \measure'(\conespace) + r' \leq \constant \fnorm*{\a - \aopt}.
		\end{align*}
	\end{proof}
	We are now ready to prove inequality \eqref{eqn:bound-deviation-alpha} using the claims we have established. According to Claim \ref{claim:conical-bound-coefficient},
	\begin{align}
		\fnorm*{\a - \aopt}^2 & = \tr{\parens*{\int_{\conespace} \feasi \ddp(\feasi) 
	        + \int_{\conespace} \optim \ddp(\optim) 
	        + r \eye} (\a - \aopt)} \notag \\
	        & = \underbrace{\int_{\conespace} \tr{\feasi (\a - \aopt)} \ddp(\feasi)}_{(a)} + \underbrace{\int_{\conespace} \tr{\optim (\a - \aopt)} \ddp(\optim)}_{(b)} + \underbrace{r \tr{\eye (\a - \aopt)}}_{(c)}. \label{eqn:bound-square}
	\end{align}
	Notice that
	\begin{align*}
		(c) = r \tr{\eye (\a - \aopt)} = r \parens*{\tr{\a} - \tr{\aopt}} = 0.
	\end{align*}
	According to Claim \ref{claim:in-mcone}, since $\a - \aopt \in \mcone$, $\tr{\feasialt (\a - \aopt)} \leq 0,\; \forall \feasialt \in \cond{\constraintf}$. Therefore equation $(a) \leq 0$.
	Observe that
	\begin{align}\label{eqn:bound-(b)}
		(b) \leq \sup_{\optim \in \conespace} \tr{\optim (\a - \aopt)} \int_\conespace 1 \ddp(\optim)
	\end{align}
	By Claim \ref{claim: a-violated-constraint}, given $t_\optim = 0$, we know there must exist a $\optim \in \constrainto$ such that $\tr{\optim \a} > 0$ and $\tr{\optim \aopt} \leq 0$. Therefore, 
	\begin{align*}
		\sup_{\optim \in \conespace} \tr{\optim (\a - \aopt)} = \sup_{\optim \in \conespace} ( \tr{\optim \a} - \tr{\optim \aopt} ) > 0.
	\end{align*}
	By Claim \ref{claim:conical-bound-coefficient}, 
	\begin{align*}
		\int_\conespace 1 \ddp(\optim) \leq \constant \fnorm*{\a - \aopt}.
	\end{align*}
	Therefore, continue from equation \eqref{eqn:bound-(b)},
	\begin{align*}
		(b) \leq \sup_{\optim \in \conespace} \tr{\optim (\a - \aopt)} \constant \fnorm*{\a - \aopt}
	\end{align*}
	We can bound \ac{LHS} of equation \eqref{eqn:bound-square} by
	\begin{align*}
		\fnorm*{\a - \aopt}^2 \leq \sup_{\optim \in \conespace} \tr{\optim (\a - \aopt)} \cdot \constant \fnorm*{\a - \aopt}.
	\end{align*}
	Inequality \eqref{eqn:bound-deviation-alpha} follows immediately from Claim \ref{claim:max-greater-max-of-subset},
	\begin{align*}
		\fnorm*{\a - \aopt}^2 \leq \max_{\binspace} \parens*{\payoffwhole  - \gameval} \cdot \constant \fnorm*{\a - \aopt}.
	\end{align*}
	Inequality \eqref{eqn:bound-deviation-beta} can be established by a symmetric argument. This concludes the proof.
\end{proof}

\subsection{Convergence Rates of \itersmooth}
\label{appendix:itersmooth-convergence}

\subsubsection{Road-map for Main Results}
\label{appendix:roadmap-itersmooth-convergence}
In this section, we aim to prove the following theorem.
\qiteratedconvergenceratefin*
Specifically, we show how SP-MS for spectraplexes implies the error bound for spectraplexes. Using the error bound, we then prove the total number of iterations required for the \itersmooth algorithm to compute an $\varepsilon$-equilibrium.
\begin{restatable}
{proposition}{propconditionnumber} \label{prop:condition-number}
   Let $\superop: \bspace \to \aspace$ be the superoperator defined by POVM and $\mathcal{U}$ as in \eqref{eqn:superoperator}, and let the duality gap be defined as in \eqref{eqn:duality-gap}. Then there exists $\delta > 0$ such that, for all $\joint \in \jointsp$,
    \begin{align}
        \dist(\joint) \leq \frac{\gap(\joint)}{\delta}.
    \end{align}
\end{restatable}
Having established the error bound, we now define the condition measure of a superoperator $\superop$ as
\begin{align}
    \delta(\superop) = \sup_\delta \setdef*{\delta}{\dist(\joint) \leq \frac{\gap(\joint)}{\delta}, \, \forall \joint \in \jointsp}.
\end{align}
With this definition, we can state the main theorem on the iteration complexity of the \itersmooth.
\begin{restatable}{theorem}{thmqiteratedconvergencerate} \label{thm:q-iterated-convergence-rate}
    \itersmooth algorithm requires at most
    \begin{align*}
        \frac{2\sqrt{2} \cdot \gamma \cdot \opnorm{\gradient} \cdot \ln{(\opnorm{\gradient} / \varepsilon)} \cdot \sqrt{\regmax}}{\ln{(\gamma)} \cdot \delta(\superop)}
    \end{align*}
    first-order iterations to compute an $\varepsilon$-equilibrium.
\end{restatable}
The proof of Proposition \ref{prop:condition-number} is deferred to Appendix \ref{appendix:thm:q-iterated-convergence-rate}.
The proof of Theorem~\ref{thm:q-iterated-convergence-rate} proceeds in two steps. The \itersmooth algorithm consists of an inner loop, which invokes \qsmoothing at a fixed smoothing level, and an outer loop, which progressively decreases the smoothing parameter. In Step 1, we bound the number of iterations required for \qsmoothing to terminate within each inner loop. In Step 2, we bound the number of outer-loop stages before the algorithm terminates. We state these bounds as two claims.
Their proofs, as well as the proof of Theorem~\ref{thm:q-iterated-convergence-rate},
appear in Appendix~\ref{appendix:thm:q-iterated-convergence-rate}.

Theorem \ref{thm:q-iterated-convergence-rate-fin} now follows naturally from Theorem \ref{thm:q-iterated-convergence-rate}.
\begin{proof}[Proof of Theorem \ref{thm:q-iterated-convergence-rate-fin}]
    As discussed in Section \ref{subsec:model-payoffs}, the diameter of the joint state space $\regmax \leq 2$. Therefore, letting $\gamma=e$ and $\kappa(\superop) = \opnorm{\gradient}/\delta(\superop)$, we can obtain the desired $\bigoh(\kappa(\superop)\ln{(\opnorm{\gradient} / \varepsilon)})$ iterations for \itersmooth.
\end{proof}

\subsubsection{Proofs of Proposition \ref{prop:condition-number} and Theorem \ref{thm:q-iterated-convergence-rate}}
We restate and prove Proposition \ref{prop:condition-number}.
\propconditionnumber*

\begin{proof}
    From Inequality \eqref{ineqn:condition-measure-ineqn}, we have the following 
    \begin{align*}
        & \frac{1}{2} \sup_{\jointalt \in \jointsp} \gradpayoff(\joint)^\top (\joint - \jointalt) 
        = \frac{1}{2} \parens*{\max_{\balt \in \bspace} \tr{\astate \superop(\balt^\top)} -  \min_{\aalt \in \aspace} \tr{\aalt \superop(\b^\top)} } \\
        & = \frac{1}{2} \gap(\joint) \geq \frac{c}{2} \fnorm*{\joint - \projz(\joint)} = \frac{c}{2} \dist(\joint), \quad \forall \joint \in \jointsp.
    \end{align*}
    Therefore, we have $\dist(\joint) \leq \gap(\joint) / c,$ so one may take $\delta=c$, completing the proof.
\end{proof}
We now restate and prove Theorem \ref{thm:q-iterated-convergence-rate}.
\thmqiteratedconvergencerate*
\begin{proof}
To prove Theorem \ref{thm:q-iterated-convergence-rate}, we first prove the following claims.
\label{appendix:thm:q-iterated-convergence-rate}
\begin{claim} \label{claim:claim1-q-iterated-convergence-rate}
    Each call to \qsmoothing at smoothing level $\smooth_i$ terminates after at most
    \begin{align*}
        \frac{2\sqrt{2} \cdot \gamma \cdot \opnorm{\gradient} \cdot \sqrt{\regmax}}{\delta(\superop)}
    \end{align*}
    iterations.
\end{claim}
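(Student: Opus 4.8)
The plan is to combine the $\bigoh(1/\epsilon)$ iteration bound for \qsmoothing proved above with the error bound $\dist(\joint)\le\gap(\joint)/\delta(\superop)$, exploiting the warm-start structure of \itersmooth. Recall that the $i$-th outer stage of \itersmooth runs \qsmoothing at smoothing level $\smooth_i$ to drive the duality gap below a target accuracy $\epsilon_i$, initialized at the joint state $\joint_0^{(i)}$ returned by the previous stage, and that the accuracies are tightened geometrically, $\epsilon_i=\epsilon_{i-1}/\gamma$, where $\gamma>1$ is the deregularization factor (so $\smooth_i=\epsilon_i/(2\regmax)$ shrinks by the same ratio). The key point to extract is that, because of the error bound, a warm start from the previous stage is already within distance $\bigoh(\epsilon_{i-1})=\bigoh(\gamma\epsilon_i)$ of the optimal set, so the $\dist(\joint_0)/\epsilon$ factor in the \qsmoothing bound is a pure constant.

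First I would invoke the \qsmoothing iteration bound: stage $i$ terminates after at most
\[
k_i \;\le\; \frac{2\sqrt{2}\,\opnorm{\gradient}\,\sqrt{\regmax}\,\dist\!\bigl(\joint_0^{(i)}\bigr)}{\epsilon_i}
\]
first-order iterations, so the claim reduces to the single estimate $\dist(\joint_0^{(i)})\le\gamma\,\epsilon_i/\delta(\superop)$. For $i\ge 2$, the point $\joint_0^{(i)}$ is the output of stage $i-1$, hence an $\epsilon_{i-1}$-equilibrium, i.e. $\gap(\joint_0^{(i)})\le\epsilon_{i-1}=\gamma\,\epsilon_i$; the error bound (a consequence of Theorem~\ref{thm:sp-ms-for-spectraplex} via the preceding proposition) then gives $\dist(\joint_0^{(i)})\le\gap(\joint_0^{(i)})/\delta(\superop)\le\gamma\,\epsilon_i/\delta(\superop)$. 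Substituting into the display, the factors $\epsilon_i$ cancel and yield $k_i\le 2\sqrt{2}\,\gamma\,\opnorm{\gradient}\,\sqrt{\regmax}/\delta(\superop)$, a bound independent of the stage index and of the target accuracy, which is exactly the claimed quantity.

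The one step needing care is the first stage, where no previous iterate exists: there I would use $\dist(\joint_0^{(1)})\le\diam$ together with the fact that \itersmooth initializes $\epsilon_1$ at a coarse scale compatible with the geometric schedule (equivalently, $\epsilon_1$ chosen so that $\epsilon_1\le\gap(\joint_0^{(1)})/\gamma$, after which the error bound again gives $\dist(\joint_0^{(1)})\le\gamma\,\epsilon_1/\delta(\superop)$). Beyond this, the argument is routine algebra; the only genuine subtlety is bookkeeping — confirming that the quantity \qsmoothing actually pushes below $\epsilon_{i-1}$ is the \emph{true} duality gap $\gap$ (via Lemma~\ref{lemma:gap-bound} and the stopping rule), not merely the smoothed gap $\gap_\smooth$, so that the chaining through the error bound at each stage is legitimate, and checking that the $\epsilon_i$-cancellation is uniform across all stages including $i=1$.
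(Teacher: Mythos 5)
Your argument is essentially identical to the paper's proof: invoke the $\bigoh\bigl(\dist(\joint_0)/\epsilon\bigr)$ iteration bound for \qsmoothing, bound the warm-start distance via the error bound $\dist \le \gap/\delta(\superop)$ applied to the previous stage's $\epsilon_{i-1}=\gamma\epsilon_i$ guarantee, and cancel the $\epsilon_i$ factors to obtain the stage-independent bound. Your extra care about the first stage and about the stopping criterion being measured in the true gap $\gap$ (not $\gap_\smooth$) are refinements the paper leaves implicit, but the route is the same.
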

\begin{claim} \label{claim:claim2-q-iterated-convergence-rate}
   The \itersmooth algorithm invokes \qsmoothing at most
$        \frac{\ln(\opnorm{\gradient} / \varepsilon)}{\ln(\gamma)}
$    times.
\end{claim}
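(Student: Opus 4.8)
The plan is to reason entirely about the outer loop of \itersmooth, treating each inner call to \qsmoothing as a black box that, by Lemma~\ref{lemma:gap-bound} together with the \qsmoothing convergence proposition of Section~\ref{subsec:q-smoothing}, returns a joint state whose duality gap is at most the target accuracy of that stage. As prescribed by Algorithm~\ref{alg:iterative smoothing}, the loop proceeds in stages $i=0,1,2,\dots$: stage $i$ fixes a smoothing level $\smooth_i$ --- equivalently a target accuracy $\epsilon_i$ with $\smooth_i$ of order $\epsilon_i/\regmax$ --- calls \qsmoothing warm-started from the output of stage $i-1$, and then geometrically deflates the smoothing by the fixed factor $\gamma>1$, so that $\smooth_{i+1}=\smooth_i/\gamma$ and hence $\epsilon_{i+1}=\epsilon_i/\gamma$. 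The loop halts at the first stage whose guaranteed accuracy has dropped to $\epsilon$, and since it invokes \qsmoothing exactly once per stage, the claim is equivalent to bounding the number of stages.

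First I would fix the initialization. Since the payoff is valued in $[-1,1]$, the upper and lower value functions obey $\primal(\a)\le 1$ and $\dual(\b)\ge -1$, so the duality gap \eqref{eqn:duality-gap} at any starting point $\joint_0\in\jointsp$ satisfies $\gap(\joint_0)\le 2$; hence the algorithm's initial target $\epsilon_0$, taken of order $\opnorm{\gradient}$, is a legitimate upper bound on the stage-$0$ gap (and if $\opnorm{\gradient}<\epsilon$ the game is already solved, requiring zero stages). Next I would verify the stage invariant that keeps each \qsmoothing call well-posed: if the iterate entering stage $i$ satisfies $\gap\le\epsilon_{i-1}$, then the error bound of Section~\ref{subsec:q-iterated-convergence} --- which follows from Theorem~\ref{thm:sp-ms-for-spectraplex} with $\spmsparam=0$ --- gives $\dist(\joint)\le\gap(\joint)/\delta(\superop)\le\epsilon_{i-1}/\delta(\superop)$, so the \qsmoothing proposition guarantees that stage $i$ terminates and produces an iterate with $\gap\le\epsilon_i$. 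This is precisely the mechanism behind Claim~\ref{claim:claim1-q-iterated-convergence-rate}, and it is what makes the per-stage iteration count independent of $\epsilon$; by induction, every stage is entered with a valid accuracy certificate and hands one on to the next.

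It then remains to unroll the geometric recursion. Because $\epsilon_i=\epsilon_0\,\gamma^{-i}$, the halting stage $N$ is the least index with $\epsilon_0\,\gamma^{-N}\le\epsilon$, namely $N=\lceil \ln(\epsilon_0/\epsilon)/\ln\gamma\rceil=\lceil \ln(\opnorm{\gradient}/\epsilon)/\ln\gamma\rceil$ after substituting $\epsilon_0$. Since \qsmoothing is invoked once per stage, this yields the claimed count and, multiplied by the per-stage bound of Claim~\ref{claim:claim1-q-iterated-convergence-rate}, reproduces the total iteration complexity of Theorem~\ref{thm:q-iterated-convergence-rate}.

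The recursion-unrolling and the boundedness of the initial gap are routine; the one point deserving care --- and the reason this bookkeeping is isolated as a separate claim --- is the stage invariant: one must check that the error bound, a global statement over all of $\jointsp$, genuinely applies to the (arbitrary-looking) iterate passed from stage $i-1$ to stage $i$, which it does since that iterate lies in $\jointsp$ and the bound of Section~\ref{subsec:q-iterated-convergence} is uniform there, and one must check that the geometric factor $\gamma$ in the smoothing schedule is the same constant driving the per-stage count, since a subgeometric deflation would destroy the logarithmic dependence on $1/\epsilon$.
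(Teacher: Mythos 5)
Your argument is essentially the paper's: the stage targets deflate geometrically, $\epsilon_N=\gap(\joint_0)/\gamma^N\le\opnorm{\gradient}/\gamma^N$, and solving $\opnorm{\gradient}/\gamma^N\le\epsilon$ gives $N\le\ln(\opnorm{\gradient}/\epsilon)/\ln\gamma$, exactly as in the paper's one-line proof. The additional discussion of the warm-start/stage invariant is correct but belongs to Claim~\ref{claim:claim1-q-iterated-convergence-rate} rather than to this counting claim, and does not change the argument.
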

\begin{proof} \textbf{of Claim \ref{claim:claim1-q-iterated-convergence-rate}}
    For each $i$, 
    \begin{align*}
        \dist(\joint_i) \leq \frac{\varepsilon_i}{\delta(\superop)} = \frac{\gamma \cdot \varepsilon_{i+1}}{\delta(\superop)}
    \end{align*}
    The $i$-th call of \qsmoothing will finishes in at most
    \begin{align*}
        k_i \leq \frac{2\sqrt{2} \cdot \opnorm{\gradient} \cdot \sqrt{\regmax} \cdot \dist(\joint_0) }{\varepsilon_{i+1}} & \leq \frac{2\sqrt{2} \cdot \opnorm{\gradient} \cdot \sqrt{\regmax}}{\varepsilon_{i+1}} \cdot \frac{\gamma \cdot \varepsilon_{i+1}}{\delta(\superop)} \\
        & = \frac{2\sqrt{2} \cdot \opnorm{\gradient} \cdot \sqrt{\regmax}\cdot \gamma}{\delta(\superop)}.
    \end{align*}
\end{proof}       

\begin{proof} \textbf{of Claim \ref{claim:claim2-q-iterated-convergence-rate}}
    After $N$-th call of \itersmooth, $
        \gap(\joint_N) < \varepsilon_N = \frac{\gap(\joint_0)}{\gamma^N} \leq \frac{\opnorm{\gradient}}{\gamma^N}$.\\
    Then, for $\gap(\joint_N) \leq \varepsilon$, solving $\frac{\opnorm{\gradient}}{\gamma^N} \leq \varepsilon$ for $N$ gives at most $
        N = \frac{\ln(\opnorm{\gradient} / \varepsilon)}{\ln(\gamma)}.$
\end{proof}
    Using Claim \ref{claim:claim1-q-iterated-convergence-rate} and \ref{claim:claim2-q-iterated-convergence-rate}, for \itersmooth algorithm, we will call
    \begin{align*}
        \frac{2\sqrt{2} \cdot \gamma \cdot \opnorm{\gradient} \cdot \sqrt{\regmax}}{\delta(\superop)} \cdot \frac{\ln(\opnorm{\gradient} / \varepsilon)}{\ln(\gamma)} =         \frac{2\sqrt{2} \cdot \gamma \cdot \opnorm{\gradient} \cdot \ln{(\opnorm{\gradient} / \varepsilon)} \cdot \sqrt{\regmax}}{\ln{(\gamma)} \cdot \delta(\superop)}
    \end{align*}
    first-order iteration at most.
\end{proof}

\subsection{Convergence Rates of \ogda}
\label{appendix_subsec:ogda-convergence}
In \cite{wei2021linear}, Wei et al. proves the following convergence rate for $\ogda$.
\begin{theorem}[Theorem 8 in \citet{wei2021linear}]\label{thm:wei2021-theorem8}
    For smoothness parameter $L$ (of $F$) and step size $\eta \leq \frac{1}{8L}$, if SP-MS holds with $\b=0$ , then $\ogda$ guarantees linear last-iterate convergence:
    \begin{align*}
        \mathrm{dist}^2(z_t,\mathcal{Z^*}) \leq 64\mathrm{dist}^2(\hat z_1,\mathcal{Z^*})(1+C_5)^{-t}.
    \end{align*}
    Here $C_5 := \min\{\,16\eta^2 C^2/81,\; 1/2\,\}$, where $C$ is the constant from the SP-MS condition. 
\end{theorem}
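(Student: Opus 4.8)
The plan is to obtain Theorem~\ref{thm:wei2021-theorem8} by verifying that the hypotheses of the Wei et al.\ last‑iterate analysis \cite{wei2021linear} are purely Hilbert‑space facts---monotonicity, Lipschitzness and an error bound---so that they survive the passage from simplices to spectraplexes, and then feeding in our own metric‑subregularity result. The starting observation is that $\hmat^{\adim}\oplus\hmat^{\bdim}$ with the Hilbert--Schmidt inner product $\inner{\cdot,\cdot}$ is a finite‑dimensional real Euclidean space, that $\jointsp=\aspace\oplus\bspace$ is closed, convex and compact of diameter $2$, and that $\cond{\jointsp}=\Opt$ is nonempty, closed and convex by Claim~\ref{claim:compact} and its symmetric analogue. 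The payoff‑gradient operator $\gradient(\joint)=(\superop(\b),-\adj{\superop}(\a))$ has a skew‑symmetric linear part, hence is monotone, and is $L$‑Lipschitz with $L=\opnorm{\gradient}$ (as already computed in the analysis of the smoothed gap). Finally, Theorem~\ref{thm:sp-ms-for-spectraplex} gives SP--MS with $\spmsparam=0$ and an accuracy‑independent constant $C>0$. None of these inputs mentions vertices or piecewise linearity, so they are exactly what the cited proof consumes.

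With the inputs in place I would reproduce the potential‑function contraction. Writing the projected \ogda iteration with step $\eta$ and projection $\proj_{\jointsp}$, one forms the Lyapunov function $\Phi_t$ that augments $\fnorm{\joint_t-\jointstar}^2$ (for a fixed $\jointstar\in\cond{\jointsp}$) with the standard \ogda correction terms $\fnorm{\joint_t-\joint_{t-1}}^2$ and a bilinear remainder. Using only firm nonexpansiveness of $\proj_{\jointsp}$ (valid for any closed convex set), monotonicity, and $L$‑Lipschitzness, one derives, for $\eta\le 1/(8L)$, a one‑step estimate of the shape
\[
\Phi_{t+1}\;\le\;\Phi_t-\Theta(\eta^2)\Bigl(\sup_{\jointalt\in\jointsp}\frac{\inner{\gradient(\joint_t),\,\joint_t-\jointalt}}{\fnorm{\joint_t-\jointalt}}\Bigr)^2.
\]
This is the point where SP--MS enters: by Theorem~\ref{thm:sp-ms-for-spectraplex} the bracketed natural residual is at least $C\,\dist(\joint_t)$, and since $\Phi_t$ is sandwiched between constant multiples of $\dist^2(\joint_t)$, we obtain $\Phi_{t+1}\le(1+C_5)^{-1}\Phi_t$ with $C_5=\min\{16\eta^2C^2/81,\;1/2\}$. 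Unrolling the recursion and translating back from $\Phi$ to $\dist^2$ yields the stated bound $\dist^2(\joint_t,\cond{\jointsp})\le 64\,\dist^2(\hat\joint_1,\cond{\jointsp})(1+C_5)^{-t}$.

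To close the loop to Theorem~\ref{thm:ogda-convergence-rate}, I would then note that $\gap$ is a maximum of affine maps whose gradients have Frobenius norm $\bigoh(\opnorm{\gradient})$ and that $\gap$ vanishes on $\cond{\jointsp}$, so $\gap(\joint_t)\le\bigoh(\opnorm{\gradient})\,\dist(\joint_t)$; taking $t=\Theta\bigl(\log(1/\epsilon)/\log(1+C_5)\bigr)=\bigoh_d(\log(1/\epsilon))$ then forces $\gap(\joint_t)\le\epsilon$, i.e.\ an $\epsilon$‑equilibrium, with step size $\eta=\bigoh_d(1)$ as required.

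The main obstacle is bookkeeping rather than new mathematics: one must check that \emph{every} inequality in the cited potential‑function computation invokes only inner‑product identities, convexity, and the monotone/Lipschitz/error‑bound package, with no covert appeal to polyhedrality of $\jointsp$ or to a finite active‑set decomposition of $\gap$. The subtle requirement is that the SP--MS constant be a single $C$ valid uniformly over $\jointsp$ (and not a $\joint$‑dependent modulus); this is precisely what Theorem~\ref{thm:sp-ms-for-spectraplex} delivers, its Claim~\ref{claim:conical-bound-coefficient} producing one instance‑level constant $\constant$ by compactness of the normalized normal‑cone slice $\pcone$. Once this uniformity is secured, the remainder of the transfer is mechanical.
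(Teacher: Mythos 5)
Your proposal matches the paper's treatment: the paper does not reprove the statement but cites Wei et al.~\cite{wei2021linear} and observes that their potential-function argument uses only Hilbert-space ingredients (monotonicity, Lipschitzness, nonexpansive projection) plus the SP--MS error bound, which Theorem~\ref{thm:sp-ms-for-spectraplex} supplies over spectraplexes with a uniform constant. Your reconstruction of the contraction argument and the transfer checklist is exactly this route, just spelled out in slightly more detail than the paper itself.
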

The proof of this convergence rate in \cite{wei2021linear} is for feasible sets $\aspace, \bspace \subset \mathbb{R}^d$. However, the same proof also works for feasible sets inside any Hilbert space that supports the SP-MS condition over the feasible sets. We already have SP-MS for spectraplexes $\aspace$ and $\bspace$ in Theorem \ref{thm:sp-ms-for-spectraplex-appendix}. Therefore, we are ready to prove the following theorem, which is the number of iterations for $\ogda$ to compute a $\varepsilon$-equilibrium for quantum zero sum games.
\begin{theorem} (Convergence for \ogda)\label{thm:ogda-convergence-rate-appendix}
    Let $\eta$ be the step size with $\eta= \bigoh_d(1)$. Then \ogda computes an $\varepsilon$-equilibrium in at most $\bigoh_d(\log(1/\varepsilon))$
    first-order iterations.
\end{theorem}
\begin{proof}
    First, by Lemma 9 in \cite{Vasconcelos2025quadraticspeedupin}, we have $L = \bigoh_d(1)$ for $\gradpayoff$ in quantum zero sum games, which makes the step size $\eta = \bigoh_d(1)$. Since we know that SP-MS (for spectraplexes) holds for quantum zero sum games with $\spmsparam=0$ from Theorem \ref{thm:sp-ms-for-spectraplex-appendix}, for payoff $\payoff$ of quantum zero sum games, the $\ogda$ has the following convergence by Theorem \ref{thm:wei2021-theorem8}:
    \begin{align*}
        \dist^2(\joint_t) \leq 64\mathrm{dist}^2(\hat\joint_1)(1+C_5)^{-t}.
    \end{align*}
    In terms of the number of iterations, we can conclude that the $\ogda$ computes $\varepsilon$-equilibrium for quantum zero sum games in at most $\bigoh_d(\log(1/\varepsilon))$ iterations.
\end{proof}

\subsection{Convergence Rates of \ommwu}
\label{subsec:ommwu-convergence}
\subsubsection{Road-map for Main Results}
This section gives a proof for the quantum last-iterate convergence proof of \ommwu in quantum zero-sum games. We study two-player zero-sum games where each player’s action is a density matrix and the payoff is bilinear through the linear map $\superop$ (and its adjoint $\superop^*$).
The \ommwu for quantum games iterates $(\joint_t,\hat \joint_t)$ evolve by matrix exponentials, so the natural potential is the quantum relative entropy. Our goal is a \emph{last-iterate} guarantee, i.e., a bound directly on $S(\jointstar\|\joint_t)$ rather than on an average iterate.

\medskip
\noindent As in the classical case, the starting point is the one-step regret inequality for optimistic mirror descent-ascent, which applies to updates of the form
\begin{align*}
 \joint_t &= \argmin_{\joint\in \mathcal{Z}}\Bigl\{\eta\langle \joint, \gradient(\hat \joint_t)\rangle + D_\psi(\joint,\hat \joint_t)\Bigr\},\\
\hat \joint_{t+1} &= \argmin_{\joint\in \mathcal{Z}}\Bigl\{\eta\langle \joint, \gradient(\joint_t)\rangle + D_\psi(\joint,\hat \joint_t)\Bigr\}.
\end{align*}
In our quantum setting, we apply Hilbert-Schmidt inner product. The $\jointsp = \aspace \times \bspace$, $\psi$ is the negative von-Neumann entropy, and $D_\psi(\cdot,\cdot)$ coincides with the quantum relative entropy $S(\cdot\|\cdot)$. In order to connect the \ommwu to a quantal-response formulation, we introduce an explicit entropic regularization term into each optimistic prox subproblem:
\begin{align*}
 \joint_t &= \argmin_{\joint\in \mathcal{Z}}\Bigl\{\eta(\psi(\joint)+\langle \joint, \gradient(\hat \joint_t)\rangle) + D_\psi(\joint,\hat \joint_t)\Bigr\},\\
\hat \joint_{t+1} &= \argmin_{\joint\in \mathcal{Z}}\Bigl\{\eta(\psi(\joint)+\langle \joint, \gradient(\joint_t)\rangle) + D_\psi(\joint,\hat \joint_t)\Bigr\}.
\end{align*}

 This modification is natural because, in finite-dimensional zero-sum games, adding a negative-entropy penalty changes the equilibrium notion from a Nash equilibrium to a quantal response equilibrium (QRE), namely a regularized saddle point in which each player responds according to a softened best response rather than an exact best response.

In the present quantum setting, the same principle applies over the spectraplex: replacing the unregularized saddle-point problem by its entropy-regularized counterpart leads to the variational inequality associated with the monotone operator $\gradient +\delta \nabla \psi$, where $\psi$ is the negative von Neumann entropy and $\delta >0$ is the regularization parameter. Accordingly, the fixed point of the regularized \ommwu dynamics is no longer the unregularized Nash equilibrium, but rather the quantum analogue of a QRE, i.e., the solution $\joint^{(\delta)}\in \operatorname{int}(\mathcal Z)$ of the min-max problem
\begin{align}
\label{problem:regularized-min-max}
\min_{\a \in \aspace} \max_{\b \in \bspace} u(\a,\b)+\delta g_1(\a)-\delta g_2(\b)    
\end{align}
 where $g_1,g_2$ are entropic regularizers for both player A and B. To prove linear convergence to this regularized solution, we apply the non-Euclidean prox lemma \cite[Proposition. 2.3]{Bauschke07ProxTheorem} to the second optimistic update with comparison point $\joint^{(\delta)}$, obtaining a descent inequality for $D_\psi(\joint^{(\delta)},\hat{\joint}_{t+1})$, and we apply the same lemma to the first optimistic update with comparison point $\hat{\joint}_{t+1}$ in order to control the intermediate term $D_\psi(\hat{\joint}_{t+1},\hat{\joint}_t)$. After substituting the latter inequality into the former, we can obtain
 \begin{align}
(1+\eta\delta)\,D_\psi(\joint^{(\delta)},\hat{\joint}_{t+1})
\le\;&
D_\psi(\joint^{(\delta)},\hat{\joint}_t)
-
D_\psi(\joint_t,\hat{\joint}_t)
-
(1+\eta\delta)\,D_\psi(\hat{\joint}_{t+1},\joint_t)
\notag\\
&\quad
+
\eta\,\Big\langle
\gradient(\joint_t)+\delta\nabla\psi(\joint^{(\delta)}),
\,\joint^{(\delta)}-\joint_t
\Big\rangle
\notag\\
&\quad
+
\eta\,\Big\langle
\gradient(\hat{\joint}_t)-\gradient(\joint_t),
\,\hat{\joint}_{t+1}-\joint_t
\Big\rangle
-
\eta\delta D_\psi(\joint_t,\joint^{(\delta)}).
\label{eq:main-recursion-star}
\end{align}
 The Bregman terms combine so that a factor $(1+\eta\alpha)$ appears on the left-hand side, while the remaining terms consist of an inner product involving $\gradient(\joint_t)+\delta\nabla\psi(\joint^{(\delta)})$ and an cross term involving $\gradient(\hat{\joint}_t)-\gradient(\joint_t)$. The first term is nonpositive because $\joint^{(\delta)}$ is an QRE and $\gradient$ is monotone, whereas the cross term is controlled by the $L$-smoothness of $F$ and the strong convexity of $\psi$, allowing it to be absorbed into the negative Bregman terms whenever $\eta\le 1/L$. This yields the contraction
\[
D_\psi(\joint^{(\delta)},\hat{\joint}_{t+1})
\le
({1+\eta\delta})^{-1}\,D_\psi(\joint^{(\delta)},\hat{\joint}_t),
\]
and hence
\[
D_\psi(\joint^{(\delta)},\hat{\joint}_{t+1})
\le
(1+\eta\delta)^{-t}D_\psi(\joint^{(\delta)},\hat{\joint}_1).
\]
In particular, when $\psi$ is chosen to be the negative von Neumann entropy, $D_\psi$ is exactly the corresponding quantum relative entropy, so the above estimate gives a linear convergence guarantee for the regularized \ommwu iterates to the quantum QRE. Finally, a QRE $\joint^{(\delta)}$ satisfies that $\gap(\joint^{(\delta)}) \le \mathcal{O}(\delta)$ where $\gap$ is the duality gap. By \ref{eq:error-bound}, we can turn the upper bound of $\gap$ into an upper bound for geometric distance between $\joint^{(\delta)}$ and the Nash equilibrium of the game. Finally, we can conclude the following convergence result:
\ommwuconvergence*

\subsubsection{Auxiliary Lemmas}
In this section, we present the necessary lemmas for the proof of the convergence rates of \ommwu.
\begin{proposition}[{\citealp[Proposition 2.3]{Bauschke07ProxTheorem}}]
\label{prop:QRE-D1}
Let $\{x,y\}\subseteq \mathrm{dom}(\psi)$ and $\{u,v\}\subseteq \operatorname{int}\mathrm{dom}(\psi)$. Then
\begin{align}
&D_\psi(u,v)+D_\psi(v,u)
=
\langle \nabla \psi(u)-\nabla \psi(v),\,u-v\rangle, \\
&D_\psi(x,u)
=
D_\psi(x,v)+D_\psi(v,u)+\langle \nabla \psi(v)-\nabla \psi(u),\,x-v\rangle, \\
&D_\psi(x,v)+D_\psi(y,u)
=
D_\psi(x,u)+D_\psi(y,v)+\langle \nabla \psi(u)-\nabla \psi(v),\,x-y\rangle.
\end{align}
\end{proposition}

\begin{proposition}[{\citealp[Proposition D.2]{Sokota2023QRE}}]
\label{prop:QRE-D2}
Assume $\mathcal{Z}$ closed convex and both $f$ and $\psi$ are differentiable at $\bar z$ (defined below).
Then the following statements are equivalent:
\begin{enumerate}
    \item $\bar z
    =
    \arg\min_{z\in \mathcal Z}
    \eta \langle g ,z\rangle + f(z) + D_\psi(z,y).$
    \item For all $z\in \mathcal Z$, $\langle \eta g + \nabla f(\bar z),\,\bar z-z\rangle
    \le
    D_\psi(z,y)-D_\psi(z,\bar z)-D_\psi(\bar z,y).$
\end{enumerate}
\end{proposition} 
For \ommwu, we can write each proximal step in the form of (1) in Proposition \ref{prop:QRE-D2}. The assumptions for $f,\psi$ are satisfied, given $f  = \delta\eta \cdot  \psi$ where $\psi$ is the negative von Neumann entropy and $g = \gradient$ where $\gradient$ is the gradient operator. Therefore, as we assume that the iterations of \ommwu stays in the interior of player's strategy space, the Proposition \ref{prop:QRE-D2} can be applied to each step of \ommwu.
\begin{lemma}[QRE implies approximate Nash equilibrium]
\label{lemm:qre-eps-nash-gap-vn}
Let \((\a^{(\delta)},\b^{(\delta)})\) be a \(\delta\)-QRE, namely a solution of the
entropy-regularized quantum zero-sum game
\[
\min_{\a\in\aspace}\max_{\b\in\bspace}
\Bigl\{
u(\a,\b)+\delta g_A(\a)-\delta g_B(\b)
\Bigr\},
\]
where \(g_A\) and \(g_B\) are the negative von-Neumann entropy regularizers. Then it satisfies that $\gap(\a^{(\delta)},\b^{(\delta)})
\le\delta(\log d_A+\log d_B)$ for $d_A,d_B$ as the dimension of $\aspace,\bspace$ and $\gap$ as the \ref{eqn:duality-gap}. Therefore \((\a^{(\delta)},\b^{(\delta)})\) is an \(\varepsilon\)-Nash equilibrium with respect
to the duality gap for $\varepsilon=\delta(\log d_A+\log d_B).$
\end{lemma}

\begin{proof}
Since \((\a^{(\delta)},\b^{(\delta)})\) is a saddle point of the regularized problem, it satisfies
\begin{align}
&u(\a^{(\delta)},\b)+\delta g_A(\a^{(\delta)})-\delta g_B(\b)
\le
u(\a^{(\delta)},\b^{(\delta)})+\delta g_A(\a^{(\delta)})-\delta g_B(\b^{(\delta)})
\qquad
\forall \b\in\bspace,
\label{eq:qre-br-b-vn}
\\
&u(\a^{(\delta)},\b^{(\delta)})+\delta g_A(\a^{(\delta)})-\delta g_B(\b^{(\delta)})
\le
u(\a,\b^{(\delta)})+\delta g_A(\a)-\delta g_B(\b^{(\delta)})
\qquad
\forall \a\in\aspace.
\label{eq:qre-br-a-vn}
\end{align}
Rearranging \eqref{eq:qre-br-b-vn} yields
\[
u(\a^{(\delta)},\b)-u(\a^{(\delta)},\b^{(\delta)})
\le
\delta\bigl(g_B(\b)-g_B(\b^{(\delta)})\bigr)
\qquad
\forall \b\in\bspace,
\]
and therefore
\begin{equation}
\max_{\b\in\bspace}u(\a^{(\delta)},\b)-u(\a^{(\delta)},\b^{(\delta)})
\le
\delta\Bigl(\max_{\b\in\bspace} g_B(\b)-g_B(\b^{(\delta)})\Bigr).
\label{eq:qre-gap-b-vn-pre}
\end{equation}
Since \(g_B\) is the negative von-Neumann entropy, we can write this in terms of its minimum rather than the maximum of the corresponding entropy. Indeed, if
\(H_B(\b):=-g_B(\b)\), then
\[
\max_{\b\in\bspace} H_B(\b)-H_B(\b^{(\delta)})
=
g_B(\b^{(\delta)})-\min_{\b\in\bspace} g_B(\b).
\]
Thus \eqref{eq:qre-gap-b-vn-pre} is equivalent to
\begin{equation}
\max_{\b\in\bspace}u(\a^{(\delta)},\b)-u(\a^{(\delta)},\b^{(\delta)})
\le
\delta\Bigl(g_B(\b^{(\delta)})-\min_{\b\in\bspace} g_B(\b)\Bigr).
\label{eq:qre-gap-b-vn}
\end{equation}
Similarly, rearranging \eqref{eq:qre-br-a-vn} gives
\[
u(\a^{(\delta)},\b^{(\delta)})-u(\a,\b^{(\delta)})
\le
\delta\bigl(g_A(\a)-g_A(\a^{(\delta)})\bigr)
\qquad
\forall \a\in\aspace,
\]
and hence we get
\begin{equation}
u(\a^{(\delta)},\b^{(\delta)})-\min_{\a\in\aspace}u(\a,\b^{(\delta)})
\le
\delta\Bigl(g_A(\a^{(\delta)})-\min_{\a\in\aspace} g_A(\a)\Bigr).
\label{eq:qre-gap-a-vn}
\end{equation}
Adding \eqref{eq:qre-gap-b-vn} and \eqref{eq:qre-gap-a-vn}, we obtain
\begin{align*}
\gap(\a^{(\delta)},\b^{(\delta)})
&=
\max_{\b\in\bspace}u(\a^{(\delta)},\b)
-
\min_{\a\in\aspace}u(\a,\b^{(\delta)})
\\
&=
\Bigl(
\max_{\b\in\bspace}u(\a^{(\delta)},\b)-u(\a^{(\delta)},\b^{(\delta)})
\Bigr)
+
\Bigl(
u(\a^{(\delta)},\b^{(\delta)})-\min_{\a\in\aspace}u(\a,\b^{(\delta)})
\Bigr)
\\
&\le
\delta\Bigl(g_B(\b^{(\delta)})-\min_{\b\in\bspace} g_B(\b)\Bigr)
+
\delta\Bigl(g_A(\a^{(\delta)})-\min_{\a\in\aspace} g_A(\a)\Bigr).
\end{align*}
It remains to make the bound explicit using the definition of negative von-Neumann entropy.
For any density matrix \(\rho\) on \(d\) dimensions with eigenvalues
\(\lambda_1,\dots,\lambda_d\), we have
\[
g(\rho)=\tr{\rho\log\rho}=\sum_{i=1}^d \lambda_i \log \lambda_i.
\]
Since \(x\log x\le 0\) for \(x\in[0,1]\), it follows that $g(\rho)\le 0.$ Moreover, by concavity of the von-Neumann entropy, we have $g(\rho)\ge -\log d$, with equality at the maximally mixed state \(\rho=I/d\). Therefore
\[
\min_{\a\in\aspace} g_A(\a)=-\log d_A,
\qquad
\min_{\b\in\bspace} g_B(\b)=-\log d_B,
\]
and
\[
g_A(\a^{(\delta)})-\min_{\a\in\aspace} g_A(\a)
=
g_A(\a^{(\delta)})+\log d_A
\le
\log d_A,
\]
\[
g_B(\b^{(\delta)})-\min_{\b\in\bspace} g_B(\b)
=
g_B(\b^{(\delta)})+\log d_B
\le
\log d_B.
\]
Substituting these inequalities into the previous estimate gives
\[
\gap(\a^{(\delta)},\b^{(\delta)})
\le
\delta(\log d_A+\log d_B).
\]
Hence \((\a^{(\delta)},\b^{(\delta)})\) is an \(\varepsilon\)-Nash equilibrium with respect to
the duality gap for \(\varepsilon=\delta(\log d_A+\log d_B)\).
\end{proof}
In our context, we assume $d_A = 2^n$ and $d_B = 2^m$. Therefore we can further simplify the result of Lemma \ref{lemm:qre-eps-nash-gap-vn} as $\gap(\a^{(\delta)},\b^{(\delta)})
\le
\delta(n+m).$
\subsubsection{Proof of Theorem \ref{thm:ommwu-convergence}}
Here we restate and proof Theorem \ref{thm:ommwu-convergence}.
\ommwuconvergence*
\begin{proof}
We prove convergence of the \emph{regularized} \ommwu dynamics to the unique solution of \eqref{problem:regularized-min-max}. The proof follows the same framework as the proof of Theorem~3.4 of \cite{Sokota2023QRE}. In our notation, the regularized \ommwu iterates are
\begin{align}
\joint_t
&=
\arg\min_{\joint\in Z}
\Bigl\{
\eta\langle \gradient(\hat{\joint}_t),\joint\rangle+\eta\delta \psi(\joint)+D_\psi(\joint,\hat{\joint}_t)
\Bigr\},
\label{eq:omwmwu-proof-first-step}
\\
\hat{\joint}_{t+1}
&=
\arg\min_{\joint\in Z}
\Bigl\{
\eta\langle \gradient(\joint_t),\joint\rangle+\eta\delta \psi(\joint)+D_\psi(\joint,\hat{\joint}_t)
\Bigr\}.
\label{eq:omwmwu-proof-second-step}
\end{align}
Applying Proposition \ref{prop:QRE-D2} to \eqref{eq:omwmwu-proof-second-step} with
\(z=\joint^{(\delta)}\), \(y=\hat{\joint}_t\), \(g=\gradient(\joint_t)\), and \(f=\eta\delta\psi\), we obtain
\begin{align}
D_\psi(\joint^{(\delta)},\hat{\joint}_{t+1})
\le\,&
D_\psi(\joint^{(\delta)},\hat{\joint}_t)
-
D_\psi(\hat{\joint}_{t+1},\hat{\joint}_t)
\notag\\
&\quad
+
\eta\bigl\langle \gradient(\joint_t),\,\joint^{(\delta)}-\hat{\joint}_{t+1}\bigr\rangle
+
\eta\delta\bigl\langle \nabla\psi(\hat{\joint}_{t+1}),\,\joint^{(\delta)}-\hat{\joint}_{t+1}\bigr\rangle.
\label{eq:omwmwu-proof-prox-second}
\end{align}
Likewise, applying Proposition \ref{prop:QRE-D2} to \eqref{eq:omwmwu-proof-first-step} with
\(z=\hat{\joint}_{t+1}\), \(y=\hat{\joint}_t\), \(g=\gradient(\hat{\joint}_t)\), and \(f=\eta\delta\psi\), we can obtain
\begin{align}
D_\psi(\hat{\joint}_{t+1},\joint_t)
\le\,&
D_\psi(\hat{\joint}_{t+1},\hat{\joint}_t)
-
D_\psi(\joint_t,\hat{\joint}_t)
\notag\\
&\quad
+
\eta\bigl\langle \gradient(\hat{\joint}_t),\,\hat{\joint}_{t+1}-\joint_t\bigr\rangle
+
\eta\delta\bigl\langle \nabla\psi(\joint_t),\,\hat{\joint}_{t+1}-\joint_t\bigr\rangle.
\label{eq:omwmwu-proof-prox-first}
\end{align}
Rearranging \eqref{eq:omwmwu-proof-prox-first}, we get
\begin{align}
-D_\psi(\hat{\joint}_{t+1},\hat{\joint}_t)
\le\,&
-D_\psi(\hat{\joint}_{t+1},\joint_t)
-D_\psi(\joint_t,\hat{\joint}_t)
\notag\\
&\quad
+
\eta\bigl\langle \gradient(\hat{\joint}_t),\,\hat{\joint}_{t+1}-\joint_t\bigr\rangle
+
\eta\delta\bigl\langle \nabla\psi(\joint_t),\,\hat{\joint}_{t+1}-\joint_t\bigr\rangle.
\label{eq:omwmwu-proof-prox-first-rearranged}
\end{align}
Substituting \eqref{eq:omwmwu-proof-prox-first-rearranged} into
\eqref{eq:omwmwu-proof-prox-second}, we arrive at
\begin{align}
D_\psi(\joint^{(\delta)},\hat{\joint}_{t+1})
\le\,&
D_\psi(\joint^{(\delta)},\hat{\joint}_t)
-
D_\psi(\hat{\joint}_{t+1},\joint_t)
-
D_\psi(\joint_t,\hat{\joint}_t)
\notag\\
&\quad
+
\eta\bigl\langle \gradient(\hat{\joint}_t),\,\hat{\joint}_{t+1}-\joint_t\bigr\rangle
+
\eta\bigl\langle \gradient(\joint_t),\,\joint^{(\delta)}-\hat{\joint}_{t+1}\bigr\rangle
\notag\\
&\quad
+
\eta\delta\bigl\langle \nabla\psi(\joint_t),\,\hat{\joint}_{t+1}-\joint_t\bigr\rangle
+
\eta\delta\bigl\langle \nabla\psi(\hat{\joint}_{t+1}),\,\joint^{(\delta)}-\hat{\joint}_{t+1}\bigr\rangle.
\label{eq:omwmwu-proof-after-substitution}
\end{align}
We now simplify the operator and entropy terms. First,
\begin{align}
\bigl\langle \gradient(\hat{\joint}_t),\,\hat{\joint}_{t+1}-\joint_t\bigr\rangle
&+
\bigl\langle \gradient(\joint_t),\,\joint^{(\delta)}-\hat{\joint}_{t+1}\bigr\rangle
\notag\\
&=\bigl\langle \gradient(\joint_t),\,\joint^{(\delta)}-\joint_t\bigr\rangle
+
\bigl\langle \gradient(\hat{\joint}_t)-\gradient(\joint_t),\,\hat{\joint}_{t+1}-\joint_t\bigr\rangle.
\label{eq:omwmwu-proof-F-rewrite}
\end{align}
Second, using Proposition \ref{prop:QRE-D1}, we have
\begin{align}
&\bigl\langle \nabla\psi(\joint_t),\,\hat{\joint}_{t+1}-\joint_t\bigr\rangle
+
\bigl\langle \nabla\psi(\hat{\joint}_{t+1}),\,\joint^{(\delta)}-\hat{\joint}_{t+1}\bigr\rangle
\notag\\
&\qquad=
\bigl\langle \nabla\psi(\joint^{(\delta)}),\,\joint^{(\delta)}-\joint_t\bigr\rangle
-
D_\psi(\joint_t,\joint^{(\delta)})
-
D_\psi(\joint^{(\delta)},\hat{\joint}_{t+1})
-
D_\psi(\hat{\joint}_{t+1},\joint_t).
\label{eq:omwmwu-proof-psi-rewrite}
\end{align}
Substituting \eqref{eq:omwmwu-proof-F-rewrite} and \eqref{eq:omwmwu-proof-psi-rewrite}
into \eqref{eq:omwmwu-proof-after-substitution} gives
\begin{align}
D_\psi(\joint^{(\delta)},\hat{\joint}_{t+1})
\le\,&
D_\psi(\joint^{(\delta)},\hat{\joint}_t)
-
D_\psi(\hat{\joint}_{t+1},\joint_t)
-
D_\psi(\joint_t,\hat{\joint}_t)
\notag\\
&\quad
+
\eta\bigl\langle \gradient(\joint_t),\,\joint^{(\delta)}-\joint_t\bigr\rangle
+
\eta\bigl\langle \gradient(\hat{\joint}_t)-\gradient(\joint_t),\,\hat{\joint}_{t+1}-\joint_t\bigr\rangle
\notag\\
&\quad
+
\eta\delta\bigl\langle \nabla\psi(\joint^{(\delta)}),\,\joint^{(\delta)}-\joint_t\bigr\rangle
-
\eta\delta D_\psi(\joint_t,\joint^{(\delta)})
\notag\\
&\quad
-
\eta\delta D_\psi(\joint^{(\delta)},\hat{\joint}_{t+1})
-
\eta\delta D_\psi(\hat{\joint}_{t+1},\joint_t).
\label{eq:omwmwu-proof-main-recursion-pre}
\end{align}
Moving the term
\(\eta\delta D_\psi(\joint^{(\delta)},\hat{\joint}_{t+1})\) to the left-hand side yields
\begin{align}
(1+\eta\delta)\,D_\psi(\joint^{(\delta)},\hat{\joint}_{t+1})
\le\,&
D_\psi(\joint^{(\delta)},\hat{\joint}_t)
-
D_\psi(\joint_t,\hat{\joint}_t)
-
(1+\eta\delta)\,D_\psi(\hat{\joint}_{t+1},\joint_t)
\notag\\
&\quad
+
\eta\bigl\langle \gradient(\joint_t)+\delta\nabla\psi(\joint^{(\delta)}),\,\joint^{(\delta)}-\joint_t\bigr\rangle
\notag\\
&\quad
+
\eta\bigl\langle \gradient(\hat{\joint}_t)-\gradient(\joint_t),\,\hat{\joint}_{t+1}-\joint_t\bigr\rangle
-
\eta\delta D_\psi(\joint_t,\joint^{(\delta)}).
\label{eq:omwmwu-proof-main-recursion}
\end{align}
\noindent Since \(\joint^{(\delta)}\) solves the variational inequality \(\operatorname{VI}(\mathcal{Z},\gradient+\delta\nabla\psi)\) by Proposition 2.4 in \cite{Sokota2023QRE}, we have
\[
\bigl\langle \gradient(\joint^{(\delta)})+\delta\nabla\psi(\joint^{(\delta)}),\,\joint_t-\joint^{(\delta)}\bigr\rangle
\ge 0.
\]
Since \(\gradient\) is monotone, we have
\[
\bigl\langle \gradient(\joint_t)-\gradient(\joint^{(\delta)}),\,\joint_t-\joint^{(\delta)}\bigr\rangle \ge 0.
\]
Adding previous two inequalities gives
\[
\bigl\langle \gradient(\joint_t)+\delta\nabla\psi(\joint^{(\delta)}),\,\joint^{(\delta)}-\joint_t\bigr\rangle
\le 0.
\]
Therefore, the corresponding term in \eqref{eq:omwmwu-proof-main-recursion} is nonpositive and may be
discarded. We thus obtain
\begin{align}
(1+\eta\delta)\,D_\psi(\joint^{(\delta)},\hat{\joint}_{t+1})
\le\,&
D_\psi(\joint^{(\delta)},\hat{\joint}_t)
-
D_\psi(\joint_t,\hat{\joint}_t)
-
(1+\eta\delta)\,D_\psi(\hat{\joint}_{t+1},\joint_t)
\notag\\
&\quad
+
\eta\bigl\langle \gradient(\hat{\joint}_t)-\gradient(\joint_t),\,\hat{\joint}_{t+1}-\joint_t\bigr\rangle.
\label{eq:omwmwu-proof-main-recursion-simplified}
\end{align}
It remains to deal with the cross term. We use a sharper absorption than the Young inequality in order to improve the dependence on $\delta$. By \(L\)-smoothness of \(\gradient\), generalized Cauchy--Schwarz, and \(1\)-strong convexity of \(\psi\), we have
\begin{align}
\eta\bigl\langle \gradient(\hat{\joint}_t)-\gradient(\joint_t),\,
\hat{\joint}_{t+1}-\joint_t\bigr\rangle
&\le
\eta\|\gradient(\hat{\joint}_t)-\gradient(\joint_t)\|\,
\|\hat{\joint}_{t+1}-\joint_t\|
\notag\\
&\le
\eta L\|\hat{\joint}_t-\joint_t\|\,
\|\hat{\joint}_{t+1}-\joint_t\|
\notag\\
&\le
2\eta L
\sqrt{
D_\psi(\joint_t,\hat{\joint}_t)
D_\psi(\hat{\joint}_{t+1},\joint_t)
}
\notag\\
&\le
\eta L \big(D_\psi(\joint_t,\hat{\joint}_t)
+
 D_\psi(\hat{\joint}_{t+1},\joint_t) \big).
\label{eq:omwmwu-proof-cross-bound-improved}
\end{align}
Indeed, the third inequality follows from
\(D_\psi(X,Y)\ge \frac12\|X-Y\|^2\), and the final inequality is
\(2\sqrt{ab}\le a+b\).
Substituting \eqref{eq:omwmwu-proof-cross-bound-improved} into
\eqref{eq:omwmwu-proof-main-recursion-simplified}, we obtain
\begin{align}
(1+\eta\delta)\,D_\psi(\joint^{(\delta)},\hat{\joint}_{t+1})
\le\,&
D_\psi(\joint^{(\delta)},\hat{\joint}_t)
-
(1-\eta L)D_\psi(\joint_t,\hat{\joint}_t)
\notag\\
&\quad
-
(1+\eta\delta-\eta L)D_\psi(\hat{\joint}_{t+1},\joint_t).
\label{eq:omwmwu-proof-final-ineq-improved}
\end{align}
Consequently, if $\eta \le \frac{1}{L}$, then $1-\eta L\ge 0$ and $1+\eta\delta-\eta L\ge \eta\delta\ge 0$.
Hence the last two terms on the right-hand side of
\eqref{eq:omwmwu-proof-final-ineq-improved} are nonpositive, and we conclude that
\[
(1+\eta\delta)\,D_\psi(\joint^{(\delta)},\hat{\joint}_{t+1})
\le
D_\psi(\joint^{(\delta)},\hat{\joint}_t).
\]
Equivalently,
\[
D_\psi(\joint^{(\delta)},\hat{\joint}_{t+1})
\le
\frac{1}{1+\eta\delta}\,
D_\psi(\joint^{(\delta)},\hat{\joint}_t).
\]
Iterating this estimate gives the linear convergence rate
\[
D_\psi(\joint^{(\delta)},\hat{\joint}_{t+1})
\le
(1+\eta\delta)^{-t}\,
D_\psi(\joint^{(\delta)},\hat{\joint}_1).
\]
This establishes the improved \(\mathcal O(1/\delta)\) dependence for convergence of the regularized
\ommwu dynamics to the QRE \(\joint^{(\delta)}\). Finally, by Lemma
\ref{lemm:qre-eps-nash-gap-vn} and \ref{eq:error-bound}, we have
\[
\|\joint^{(\delta)}-\jointstar\|_F \le \mathcal{O}_d(\delta).
\]
\end{proof}





\section{Slow Last-Iterate Convergence of \ommwu}
\label{appendix:slow-last-iter-ommwu}
\subsection{Conjecture on Linear Last-Iterate Convergence of \ommwu}
\label{appendix:conjecture-ommwu}
Going back to the last-iterate behavior of \ommwu, our current analysis establishes linear convergence of \ommwu\ to a regularized equilibrium QRE for each fixed regularization parameter $\delta$. In particular, this yields an $\bigoh(\log(1/\varepsilon))$ iteration bound for reaching $\varepsilon$-accuracy relative to that regularized target (Theorem \ref{thm:ommwu-convergence}). However, this does not by itself imply the same logarithmic dependence for approximation of a Nash equilibrium, since passing from the QRE to Nash requires the regularization parameter to vanish with the target accuracy. This motivates the following conjecture, which asserts that \ommwu\ should in fact converge linearly in the last iterate directly to the Nash equilibrium.

\begin{conjecture}[Linear last-iterate convergence of \ommwu]
\label{conj:ommwu-linear}
Consider a quantum zero-sum game over the spectraplex $\jointsp$, and suppose the game admits a unique Nash equilibrium $\joint^\star$.
Then there exists a constant step size $\eta_0>0$ such that, for every $\eta\in(0,\eta_0]$, the iterates generated by \ommwu\ converge linearly in the last iterate:
\[
\dist(\joint_t,\joint^\star)\le C(1-\gamma)^t
\]
for some instance-dependent constants $C>0$ and $\gamma\in(0,1)$.
Equivalently, \ommwu\ computes an $\varepsilon$-approximate Nash equilibrium in
\(
T=\bigoh\!\big(\mathfrak{C}\log(1/\varepsilon)\big)
\)
iterations, where $\mathfrak{C}$ is a geometric condition number of the game.
\end{conjecture}
\noindent The motivation for Conjecture~\ref{conj:ommwu-linear} is that \ommwu\ is better aligned with the intrinsic entropic geometry of the spectraplex than Euclidean methods, and in the classical simplex setting the corresponding optimistic multiplicative-weights dynamics exhibit linear last-iterate convergence under suitable assumptions \citep{wei2021linear}.
At the same time, extending such arguments to the matrix setting faces substantial obstacles, including the lack of a simple coordinatewise support decomposition and the non-commutativity of matrix multiplication.

Although Conjecture~\ref{conj:ommwu-linear} remains open, one can still ask what form such a result could possibly take. Our next result gives a rigorous partial answer in the negative direction: even if \ommwu\ does admit linear last-iterate convergence, such a guarantee cannot be instance-independent. 
We lift the classical last-iterate lower bounds for \ommwu from \citet{cai2025fastlastiterateconvergencelearning} to quantum \ommwu. We work throughout in the quantum zero-sum framework of Section~\ref{sec:preliminaries}. Our reduction specializes to the one-qubit case $\adim = \bdim = 1$ and selects a \emph{diagonal} payoff observable on the joint space.

Under diagonal initialization, this diagonal structure yields three consequences: (i) the expected payoff depends only on the diagonal entries of $(\a, \b)$; (ii) the payoff gradients $\gradient_\a(\b)$ and $\gradient_\b(\a)$ are diagonal and coincide with the classical feedback vectors of an embedded $2 \times 2$ matrix game; and (iii) when \ommwu{} is instantiated with the von Neumann entropy regularizer, the updates preserve diagonality, so all iterates remain diagonal. Consequently, quantum \ommwu{} coincides exactly with classical \ommwu{} (equivalently, \oftrl{} with negative entropy) on the embedded $2 \times 2$ game, and the quantum duality gap along the trajectory matches the classical duality gap. This is formally stated in Lemma \ref{lem:diag-reduction}. Consequently, any slow last-iterate convergence lower bound for classical $\ommwu$ transfers to quantum $\ommwu$. The following lemma shows that this reduction holds for any diagonal payoff observable and diagonal initialization.
\begin{lemma}\label{lem:diag-reduction}
Fix $A\in\R^{2\times 2}$ and define the diagonal payoff observable
\[
U := \begin{pmatrix}
A_{11} & 0 & 0 & 0 \\
0 & A_{12} & 0 & 0 \\
0 & 0 & A_{21} & 0 \\
0 & 0 & 0 & A_{22}
\end{pmatrix} = \diag(A_{11},A_{12},A_{21},A_{22}).
\]
Run \ommwu Algorithm \ref{alg:ommwu} with von Neumann entropy regularizer on the quantum game
$\payoff(\a,\b)=\tr{U^\dagger(\a\otimes\b)}$ over $\aspace\oplus\bspace$ with $\adim=\bdim=1$,
and initialize diagonally as
\[
\a_0=\b_0=\hat\a_0=\hat\b_0=\frac{\eye}{2} \in \hmat^2_{+}.
\]
Then for every $t\ge 0$ the following hold.

\begin{enumerate}[leftmargin=*]
\item
All iterates remain diagonal: there exist $x_t,y_t,\hat x_t,\hat y_t\in\Delta^2$ such that
\[
\a_t=\diag(x_t),\quad \b_t=\diag(y_t),\quad
\hat\a_t=\diag(\hat x_t),\quad \hat\b_t=\diag(\hat y_t).
\]
\item
The induced simplex trajectory $(x_t,y_t)$ coincides exactly with the classical ommwu on the $2\times 2$ game $A$ with vectors
$\ell_x^t = Ay_t$ and $\ell_y^t = -A^\top x_t$.

\item
Recall the quantum \eqref{eqn:duality-gap} and classical duality gap
\begin{align*}
        \gap_Q(\a, \b) & = \max_{\balt \in \bspace} \tr{U^\dagger (\a \otimes \balt)} - \min_{\aalt \in \aspace} \tr{U^\dagger (\aalt \otimes \b)}. \\
        \gap_C(x, y) &= \max_{y' \in \Delta^2} x^\top A y' - \min_{x' \in \Delta^2} {x'}^\top A y.
\end{align*}
For every $t$,
\[
\gap_Q(\a_t,\b_t)=\gap_C(x_t,y_t), \quad \text{where } \gap_C(x,y) := \max_{y'\in\Delta^2} x^\top A y' - \min_{x'\in\Delta^2} {x'}^\top A y.
\]
\end{enumerate}
\end{lemma}
For readability, we defer the proof of Lemma~\ref{lem:diag-reduction} to Appendix~\ref{appendix:lem:diag-reduction}.
We apply Lemma~\ref{lem:diag-reduction} to the hard $2\times 2$ instance from~\citet{cai2025fastlastiterateconvergencelearning}.
Fix $\delta\in(0,\tfrac12)$ and let
\[
A_\delta :=
\begin{pmatrix}
\frac12+\delta & \frac12\\
0 & 1
\end{pmatrix}.
\]
Let $U_\delta$ denote the payoff observable on the joint space
\begin{align} \label{eqn:bad observable}
    U_\delta \defeq
\diag\Big(\tfrac12+\delta,\tfrac12,0,1\Big).
\end{align}
\begin{remark}
    By Lemma~\ref{lem:diag-reduction}, quantum \ommwu on $U_\delta$ with diagonal initialization is equivalent to classical \ommwu on $A_\delta$, with matching duality gaps along the trajectory. Hence, the last-iterate lower bounds of \citet{cai2025fastlastiterateconvergencelearning} transfer verbatim. We restate their bound below.
\caiLastIterateTheorem*
By Lemma \ref{lem:diag-reduction}, we know that for every $\delta \in (0, \hat{\delta})$, the quantum \ommwu dynamics on $U_\delta$ with any step size $\eta \leq \frac{1}{4L}$ satisfies the following: there exists an iteration $t \geq \frac{c_1}{3 \eta L \delta}$ with $\gap_Q(\a_t, \b_t) \geq c_2$.
\end{remark}

\subsection{Proof for Lemma \ref{lem:diag-reduction}}
\label{appendix:lem:diag-reduction}
Let's prove the three statements one by one.

Firstly, we want to show that all iterates remain diagonal.
For any $\a\in\aspace$ and $\b\in\bspace$, let $x=(\a_{11},\a_{22})$ and $y=(\b_{11},\b_{22})$.
Since $U$ is diagonal, we have
\[
\payoff(\a,\b)=\tr{U^\dagger(\a\otimes\b)}
=\sum_{i,j\in\{1,2\}} A_{ij}\, \a_{ii}\,\b_{jj}
= x^\top A y.
\]
Moreover, the payoff gradients are given by partial traces, and the diagonal structure implies they are diagonal and depend only on
$(\a_{11},\a_{22})$ and $(\b_{11},\b_{22})$:
\[
\gradient_\a(\b)=\superop(\b)=\diag(Ay),
\qquad
\gradient_\b(\a)=-\adj{\superop}(\a)=-\diag(A^\top x).
\]
Now assume $\hat\a_t$ and $\b_t$ are diagonal. Then $\log(\hat\a_t)$ is diagonal and $\gradient_\a(\b_t)$ is diagonal, so
$\log(\hat\a_t)+\eta\,\gradient_\a(\b_t)$ is diagonal. The matrix softmax
$\Lambda(Z)=\exp(Z)/\tr{\exp(Z)}$ preserves diagonality, hence $\a_{t+1}$ is diagonal.
The same argument applies symmetrically to $\b_{t+1}$ and to $\hat\a_{t+1},\hat\b_{t+1}$.
Since $\a_0=\b_0=\hat\a_0=\hat\b_0=\eye/2$ are diagonal, an induction yields that all iterates remain diagonal.
Thus there exist $x_t ,y_t,\hat x_t,\hat y_t\in\Delta^2$ such that
\[
\a_t=\diag(x_t),\quad \b_t=\diag(y_t),\quad
\hat\a_t=\diag(\hat x_t),\quad \hat\b_t=\diag(\hat y_t),
\qquad \forall t\ge 0.
\]

Secondly, we show that the induced simplex trajectory $(x_t,y_t)$ coincides exactly with the classical \ommwu on the $2\times 2$ game $A$.
Fix $t$ and write $\hat\a_t=\diag(\hat x_t)$, $\b_t=\diag(y_t)$.
By the previous step, $\gradient_\a(\b_t)=\diag(Ay_t)$, so the \ommwu update has the form
\[
\a_{t+1}
=\Lambda \bigl(\log(\hat\a_t)+\eta\,\gradient_\a(\b_t)\bigr)
=\Lambda \Bigl(\diag\bigl(\log(\hat x_t)+\eta\,Ay_t\bigr)\Bigr).
\]
For any $u\in\R^2$, $\exp(\diag(u))=\diag(\exp(u))$ and $\tr{\exp(\diag(u))}=\sum_i e^{u_i}$, hence
$\Lambda(\diag(u))=\diag(\mathrm{softmax}(u))$. Therefore,
\[
x_{t+1}[i]
=\frac{\hat x_t[i]\exp\bigl(\eta\,(Ay_t)[i]\bigr)}
{\sum_{j=1}^2 \hat x_t[j]\exp\bigl(\eta\,(Ay_t)[j]\bigr)},
\qquad (i\in\{1,2\}).
\]
The same derivation for Bob gives
\[
y_{t+1}[i]
=\frac{\hat y_t[i]\exp \bigl(\eta\,(-A^\top x_t)[i]\bigr)}
{\sum_{j=1}^2 \hat y_t[j]\exp \bigl(\eta\,(-A^\top x_t)[j]\bigr)},
\qquad (i\in\{1,2\}).
\]
These are exactly the classical \ommwu updates
on the feedback vectors $\ell_x^t=Ay_t$ and $\ell_y^t=-A^\top x_t$.

Lastly, we show that the quantum duality gap coincides with the classical duality gap along the trajectory.
Recall the quantum duality gap
\[
\gap_Q(\a,\b)
=\max_{\balt\in\bspace}\tr{U^\dagger(\a\otimes \balt)}
-\min_{\aalt\in\aspace}\tr{U^\dagger(\aalt\otimes \b)}.
\]
For any $\balt\in\bspace$, letting $y'=(\balt_{11},\balt_{22})\in\Delta^2$, the diagonal-payoff identity gives
$\tr{U^\dagger(\a\otimes \balt)}=x^\top A y'$; conversely, for any $y'\in\Delta^2$ the diagonal state
$\balt=\diag(y')$ attains the same value. Hence
\[
\max_{\balt\in\bspace}\tr{U^\dagger(\a\otimes \balt)}
=\max_{y'\in\Delta^2} x^\top A y'.
\]
Similarly, for any $\aalt\in\aspace$, letting $x'=(\aalt_{11},\aalt_{22})\in\Delta^2$, we have
$\tr{U^\dagger(\aalt\otimes \b)}=x'^\top A y$, and every $x'\in\Delta^2$ is attained by the diagonal state $\aalt=\diag(x')$. Thus
\[
\min_{\aalt\in\aspace}\tr{U^\dagger(\aalt\otimes \b)}
=\min_{x'\in\Delta^2} x'^\top A y.
\]
Subtracting yields $\gap_Q(\a,\b)=\gap_C(x,y)$.
Applying this with $\a=\a_t=\diag(x_t)$ and $\b=\b_t=\diag(y_t)$ gives $\gap_Q(\a_t,\b_t)=\gap_C(x_t,y_t)$ for all $t$.



\section{Application: Parallel Approximation of Strictly
Positive Semidefinite Programs}
\label{appendix:approximation}
From the seminal work of \citet{jain2009parallel}, a fundamental connection was established between equilibrium states of quantum zero-sum games and positive semidefinite programs (SDPs). In particular, their parallel algorithm computes $\varepsilon$-approximate equilibria in quantum zero-sum games with iteration complexity $\bigoh_d(1/\varepsilon^2)$, and these equilibria can in turn be leveraged to approximate solutions of certain SDPs. Building on this connection, we show that the $\bigoh_d(\log(1/\varepsilon))$ convergence-rate algorithms developed in Section~\ref{sec:algorithms} yield an exponential improvement: $(1+\varepsilon)$-approximate solutions to SDPs can be obtained with logarithmic iteration complexity, thereby exponentially speeding up the bound of prior parallel approaches. To set the stage for the main result of the section, we first collect the key definitions required for its formulation.
\paragraph{Superoperator Semidefinite Programs.}
Let $\a \in \C^{n\times n}$ and $\b \in \C^{m\times m}$ be Hermitian matrices, let $\superop \from \C^{m\times m} \to \C^{n\times n}$ be a linear superoperator that preserves Hermiticity. Given an instance tuple $\parens{\a, \b, \superop}$, the \emph{superoperator form} of the primal/dual semidefinite program is
\begin{align}
    \textbf{Primal} & \quad \max_{y \in \C^{n\times n}} \tr{\b y} \qquad \text{subject to } \superop(y) \mleq \a, \quad y \mgeq 0 \label{eqn:sdp-primal} \\
    \textbf{Dual} & \quad \min_{x \in \C^{m\times m}} \tr{\a x} \qquad \text{subject to } \adj{\superop}(x) \mgeq \b, \quad x \mgeq 0 \label{eqn:sdp-dual}.
\end{align}
This form is equivalent to the \emph{standard form} of semidefinite programs. We refer to it as the \emph{superoperator semidefinite program}. We use superoperator form to connects quantum zero-sum game with semidefinite programs.

\paragraph{Strong Duality via Slater’s Condition.} In contrast with LP, strong duality always holds whenever the primal (or dual) problem is feasible for the case of SDP, it does not always hold automatically. \emph{Slater’s condition} is a standard constraint qualification in convex optimization which ensures \emph{strong duality}. It requires the existence of a strictly feasible solution that satisfies all inequality constraints with strict inequalities. Applied to our setting, if either the primal or the dual program admits a strictly feasible state, then by Slater's condition, strong duality holds: the optimal values of \eqref{eqn:sdp-primal} and \eqref{eqn:sdp-dual} coincide and are attained. In particular, for the primal–dual pair \eqref{eqn:sdp-primal} and \eqref{eqn:sdp-dual}, strong duality holds if at least one of the following conditions is satisfied:
\begin{enumerate}
  \item[(i)] there exists $Y \mg 0$ with $\superop(Y) \ml A$,
  \item[(ii)] there exists $X \mg 0$ with $\adj{\superop}(X) \mg B$.
\end{enumerate}

\paragraph{Positive and Strictly Positive Instances.} We call an SDP instance $(\a, \b, \superop)$ positive if $\a \mgeq 0$, $\b \mgeq 0$, and $\superop$ is a positive superoperator, meaning that whenever $x \mgeq 0$, we also have $\superop(x) \mgeq 0$. We call the instance strictly positive if $\a \mg 0$, $\b \mg 0$, and $\superop$ is strictly positive, which is defined by the condition $\superop(\eye) \mg 0$.

Observe that every strictly positive instance satisfies Slater's condition. Indeed, for the primal program, pick $t > 0$ small enough that $\superop(t \eye) = t\superop(\eye) \ml \a$ yielding strictly feasibility for \eqref{eqn:sdp-primal}. Similarly, for the dual program, pick $s > 0$ large enough that $\adj{\superop}(s \eye) = s\adj{\superop}(\eye) \mg \b$ yielding strictly feasibility for \eqref{eqn:sdp-dual}. Therefore, strong duality holds for all strictly positive superoperator semidefinite program.

With strong duality for strictly positive SDPs in place, we next present the reduction to \emph{trace form} and show how the methods of Section~\ref{sec:algorithms} can be deployed.

\paragraph{From Strictly Positive Instances to Trace Form.} The \itersmooth algorithm \ref{alg:iterative smoothing} applies once any strictly positive superoperator semidefinite programs with instance $(\a, \b, \superopalt)$ has been normalized into the following \emph{trace form}
\begin{align}
    \textbf{Primal} & \quad \max_{y \in \C^{n\times n}} \tr{y} \qquad \text{subject to } \superop(y) \mleq \eye, \quad y \mgeq 0 \label{eqn:sdp-primal-trace} \\
    \textbf{Dual} & \quad \min_{x \in \C^{m\times m}} \tr{x} \qquad \text{subject to } \adj{\superop}(x) \mgeq \eye, \quad x \mgeq 0 \label{eqn:sdp-dual-trace} \\
    \text{with } & \superop(y) = \a^{-\frac{1}{2}} \superopalt\parens{\b^{-\frac{1}{2}} y \b^{-\frac{1}{2}}} \a^{-\frac{1}{2}}. \notag
\end{align}

Since $\superopalt$ is strictly positive, we have $\superop(\eye) \mg 0$, and hence $\superop$ is strictly positive as well. By Slater’s condition, strong duality holds: the common optimal value $\opt(\superop)$ of the primal and dual programs is attained and satisfies $\opt(\superop)>0$. 

\paragraph{Connection between Quantum Zero-Sum Games and Superoperator SDPs.}
The following proposition from \citet{jain2009parallel} establishes the connection between quantum zero-sum games and strictly positive superoperator semidefinite programs. In particular, it shows how an $\varepsilon$-approximate Nash equilibrium can be transformed into an approximate solution for this program. The connection is formalized in the following proposition.
\begin{proposition}[Proposition 5 in \citet{jain2009parallel}]\label{prop:connection-sdp-gameval}
    For any strictly positive superoperator, we have $\val(\superop) = 1 / \opt(\superop)$.
\end{proposition}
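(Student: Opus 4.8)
The plan is to re-derive this identity by the standard homogenization argument: rescale the feasible set of the trace-form primal \eqref{eqn:sdp-primal-trace} onto the spectraplex, read off a $\min$--$\max$ expression, and match it to $\val(\superop)$ via von Neumann's minimax theorem. The only external inputs are that minimax theorem and the Slater-based strong duality already recorded just above.

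First I would use what is in hand: strict positivity has bought strong duality for the pair \eqref{eqn:sdp-primal-trace}--\eqref{eqn:sdp-dual-trace}, so $\opt(\superop)$ is finite, strictly positive, attained, and equal to both the primal and the dual optimum. Next, homogenize the primal. Since $\superop$ is positive (this is part of the strict-positivity hypothesis, and the trace-form congruences by $\b^{-1/2}$, then $\superopalt$, then $\a^{-1/2}$ each preserve the positive-semidefinite cone), $\superop(\rho)\mgeq 0$ for every density matrix $\rho$. Every nonzero feasible $y\mgeq 0$ factors uniquely as $y=t\,\rho$ with $t=\tr{y}>0$ and $\rho$ a density matrix of the appropriate dimension, and then $\superop(y)\mleq\eye$ is equivalent to $t\,\lambda_{\max}(\superop(\rho))\le 1$. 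Hence for each direction $\rho$ the largest admissible trace is $1/\lambda_{\max}(\superop(\rho))$ — a positive real, equivalently $\superop(\rho)\neq 0$, for every $\rho$, precisely because $\opt(\superop)$ is finite — and therefore
\[
\opt(\superop)\;=\;\sup_{\rho}\frac{1}{\lambda_{\max}(\superop(\rho))}\;=\;\left(\min_{\rho}\lambda_{\max}(\superop(\rho))\right)^{-1},
\]
the minimum being attained because $\rho\mapsto\lambda_{\max}(\superop(\rho))$ is continuous on the compact spectraplex.

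It remains to see the inner quantity as a game payoff. For any Hermitian $M$ one has $\lambda_{\max}(M)=\max_{\sigma}\tr{\sigma M}$, with $\sigma$ ranging over density matrices, because $\sigma\mapsto\tr{\sigma M}$ is linear and the spectraplex is the convex hull of its rank-one projectors. Substituting, $\opt(\superop)=\bigl(\min_{\rho}\max_{\sigma}\tr{\sigma\,\superop(\rho)}\bigr)^{-1}$, where $\rho,\sigma$ range over the spectraplexes of the two appropriate dimensions — precisely the strategy sets of the quantum zero-sum game defined by $\superop$. Since $\tr{\sigma\superop(\rho)}$ is bilinear and those sets are convex and compact, von Neumann's minimax theorem lets us exchange $\min$ and $\max$, and the resulting common value is, by definition, $\val(\superop)$. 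Hence $\opt(\superop)=1/\val(\superop)$, that is, $\val(\superop)=1/\opt(\superop)$. As a consistency check I would also homogenize the dual \eqref{eqn:sdp-dual-trace}: $\adj{\superop}$ is positive as well (the positive-semidefinite cone is self-dual), and $\lambda_{\min}(\adj{\superop}(\sigma))=\min_{\rho}\tr{\sigma\,\superop(\rho)}$, so the same steps give $\opt(\superop)=\bigl(\max_{\sigma}\min_{\rho}\tr{\sigma\superop(\rho)}\bigr)^{-1}$ — the identical number, the two derivations being interchanged by strong duality on the SDP side and by minimax on the game side.

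The argument is short, so the real care sits in two places. The first is the boundary bookkeeping: one must know that $\lambda_{\max}(\superop(\rho))$ stays bounded away from $0$ over density matrices $\rho$, so that the supremum is a genuine positive, finite, attained quantity — exactly the content of the already-established finiteness and positivity of $\opt(\superop)$; a boundary density matrix $\rho$ with $\superop(\rho)=0$ would be the degenerate case, and this is precisely what strict positivity rules out. The second is orientation: one must keep straight which spectraplex is minimized and which is maximized, so that the $\min$--$\max$ dropping out of the homogenization is matched against $\val(\superop)$ and not against the transposed game, and it is von Neumann's theorem that licenses moving freely between the two orderings. The remaining ingredients — the homogenization bijection $y\leftrightarrow(t,\rho)$, the identity $\lambda_{\max}(M)=\max_{\sigma}\tr{\sigma M}$, and the appeal to strong duality — are routine.
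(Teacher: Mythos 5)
The paper itself never proves this proposition --- it is imported verbatim from Jain--Watrous --- so your proposal is judged on its own terms, and the homogenization route you take is indeed the standard (essentially Jain--Watrous) argument. The computation is correct up to its last line: factoring $y=t\rho$ in \eqref{eqn:sdp-primal-trace} gives $1/\opt(\superop)=\min_{\b\in\bspace}\max_{\a\in\aspace}\tr{\a\superop(\b)}$, i.e.\ the minimum is taken over the spectraplex that is the \emph{domain} of $\superop$ (the side the primal variable $y$ lives on), and the maximum over the codomain side. But in this paper $\val(\superop)=\gameval$ is defined in \eqref{eqn:minmax-thm}--\eqref{eqn:A-star-argminmax} as $\min_{\a\in\aspace}\max_{\b\in\bspace}\tr{\a\superop(\b)}$: the minimizer is Alice, the \emph{codomain}-side player. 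Your final step ("von Neumann lets us exchange $\min$ and $\max$, and the resulting common value is, by definition, $\val(\superop)$") conflates two different moves: minimax exchanges the \emph{order} of $\min$ and $\max$ for a fixed assignment of roles ($\min_\b\max_\a=\max_\a\min_\b$), but it does not let you move the $\min$ from one spectraplex to the other, and those two values genuinely differ. Concretely, take the positive map $\superop(Y)=\sum_{i}\bigl(\sum_j A_{ij}Y_{jj}\bigr)E_{ii}$ with $A=\bigl(\begin{smallmatrix}1&2\\3&4\end{smallmatrix}\bigr)$; then $\superop(\eye)\mg 0$ and $\adj{\superop}(\eye)\mg 0$, the payoff depends only on diagonals, and one checks $\min_{\a}\max_{\b}\tr{\a\superop(\b)}=2$ while $\min_{\b}\max_{\a}\tr{\a\superop(\b)}=3=1/\opt(\superop)$ (the trace-form primal has value $1/3$). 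So what your argument actually establishes is the Jain--Watrous-oriented identity, in which the minimizing player's state is the argument of the superoperator; to match this paper's $\gameval$ one must either adopt that role convention or homogenize the program whose variable lives in $\aspace$, which yields $\val(\superop)=1/\opt(\adj{\superop})$. Your dual-side "consistency check" cannot catch this, because it again only swaps the order of the operators, not which player carries the $\min$.

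A smaller point: your parenthetical that a density matrix $\rho$ with $\superop(\rho)=0$ "is precisely what strict positivity rules out" is not literally true under the paper's definition $\superop(\eye)\mg 0$; e.g.\ $\superop(X)=\tr{(\eye-P)X}\,\eye$ with $P$ a rank-one projector satisfies $\superop(\eye)\mg 0$ yet annihilates $P$ (and there the primal is unbounded). Your primary anchor --- the already-recorded finiteness, positivity and attainment of $\opt(\superop)$ --- is the right one, and together with compactness it does give $\min_\rho\lambda_{\max}(\superop(\rho))>0$; just be aware that this finiteness itself rests on dual Slater feasibility, i.e.\ on $\adj{\superop}(\eye)\mg 0$, which is the stronger strict-positivity requirement actually used by Jain--Watrous. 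The remaining ingredients (the bijection $y\leftrightarrow(t,\rho)$, the identity $\lambda_{\max}(M)=\max_{\sigma}\tr{\sigma M}$, and the appeal to strong duality) are fine.
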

\noindent At the end of \itersmooth algorithm, we have $\gap(\joint) < \varepsilon$, which implies that
\[    \primal(\a)  \leq \val(\superop) + \varepsilon,    \dual(\b)  \geq \val(\superop) - \varepsilon.
\]
Suppose the approximate value of the game $\wilde{\val}(\superop) = \frac{\primal(\a) + \dual(\b)}{ 2 }$. We have
\begin{align*}
    \abs{\wilde{\val}(\superop) - \val(\superop)} \leq \frac{\primal(\a) - \dual(\b)}{ 2} \leq \frac{\varepsilon}{2}.
\end{align*}
Then, the approximated value of the game is between
\begin{align*}
    \wilde{\val}(\superop) \in \bracks*{\parens*{1 - \frac{\varepsilon}{2 \val(\superop)}} \val(\superop), \; \parens*{1 + \frac{\varepsilon}{2 \val(\superop)}}\val(\superop)}
\end{align*}
Using Proposition \ref{prop:connection-sdp-gameval}, let $\xi = \frac{\varepsilon}{2 \val(\superop)}$ the approximated optimal value falls between
\begin{align*}
    \wilde{\opt}(\superop) \in \bracks*{\parens*{1 + \xi}^{-1} \opt(\superop), \; \parens*{1 - \xi}^{-1} \opt(\superop)}
\end{align*}
\paragraph{Complexity Comparison.} 
The Optimistic Matrix Mirror Prox \texttt{OMMP} algorithm of \citet{Vasconcelos2025quadraticspeedupin} achieves a convergence rate of $\bigoh_d(1/\varepsilon)$ for computing an $\varepsilon$-approximate Nash equilibrium, leading to a total complexity of
\begin{align*}
    \bigoh_d(\frac{1}{2\gameval} \cdot \frac{1}{\xi}),
\end{align*}
for obtaining an approximation of strictly positive semidefinite programs

The \itersmooth algorithm attains a convergence rate of $\bigoh(\kappa(\superop)\ln(1/\varepsilon))$, which yields a total complexity of
\begin{align*}
    \bigoh\parens*{\kappa(\superop) \parens*{\ln(\frac{1}{2\gameval}) + \ln(1/\xi)}},
\end{align*}
and the \ogda algorithm attains a convergence rate of $\bigoh_d(\log(1/\varepsilon))$ yielding a total complexity of 
\begin{align*}
    \bigoh_d \parens*{\log(\frac{1}{2\gameval}) + \log(1/\xi)}
\end{align*}
which establishes an exponential speedup in iteration complexity.


\end{document}